\renewcommand{\phi}{\varphi}
\def\operation#1{\textsf{#1}}
\def\original#1{\overline{#1}}
\newtheorem{lemma}{Lemma}
\newtheorem{problem}{Problem}
\newtheorem{theorem}{Theorem}
\newtheorem{definition}{Definition}
\newtheorem{corollary}{Corollary}
\newtheorem{observation}{Observation}
\begin{document}

\title{Atomic Embeddability, Clustered Planarity, and Thickenability\thanks{Research supported in part
by Austrian Science Fund (FWF): M2281-N35, and the National Science Foundation awards CCF-1422311, CCF-1423615, and DMS-1800734..}
}

\author{Radoslav Fulek\thanks{University of Arizona, Tucson, AZ, USA.}
\and
Csaba D. T\'oth\footnotemark[2]~\thanks{Department of Mathematics, California State University Northridge, Los Angeles, CA, USA and
Department of Computer Science, Tufts University, Medford, MA, USA.}
}

\maketitle
\thispagestyle{empty}

\begin{abstract}
We study the atomic embeddability testing problem, which is
a common generalization of clustered planarity (c-planarity, for short) and thickenability testing,
and present a polynomial-time algorithm for this problem, thereby giving the first polynomial-time
algorithm for c-planarity.

C-planarity was introduced in 1995 by
Feng, Cohen, and Eades as a variant of graph planarity, in which the
vertex set of the input graph is endowed with a hierarchical clustering
and we seek an embedding (crossing free drawing) of the graph in the plane
that respects the clustering in a certain natural sense. Until now, it
has been an open problem whether c-planarity can be tested efficiently,
despite relentless efforts.
The thickenability problem for simplicial complexes emerged in the topology
of manifolds in the 1960s. A 2-dimensional simplicial complex is thickenable
if it embeds in some orientable 3-dimensional manifold.  Recently, Carmesin
announced that thickenability can be tested in polynomial time.

Our algorithm for atomic embeddability combines ideas from Carmesin's work
with algorithmic tools previously developed for weak embeddability testing.
We express our results purely in terms of graphs on surfaces, and rely on the
machinery of topological graph theory.

Finally, we give a polynomial-time reduction from atomic embeddability to thickenability thereby showing that both problems are polynomially equivalent, and
show that a slight generalization of atomic embeddability to the setting in which
clusters are toroidal graphs is NP-complete.
\end{abstract}

\setcounter{page}{1}

\section{Introduction}
\label{sec:intro}

\paragraph{Clustered planarity} (for short, \textbf{c-planarity}) was introduced in 1995 by Feng, Cohen, and Eades~\cite{FCE95a_how,FCE95b_planarity}, motivated by applications in set visualization. Lengauer~\cite{L89_cplanarity} considered one of its variants already in the 1980s. The problem can be seen as a
hierarchical variant of planarity testing; a problem for which a linear-time algorithm has been known for a long time~\cite{HoTa74_planarity}.
In the extensive literature devoted to c-planarity and its variants, the complexity status of only restricted special cases has been established, most notably in~\cite{AFT19+_weak,AngeliniL19,CdBFPP08_CplanConnected,GUT02_2cl},
see also the somewhat outdated survey~\cite{CB05_invited}.
The c-planarity problem is formally stated as follows.
\begin{problem}[\textsc{C-planarity}]\label{prob:cplanarity}
We are given a simple graph $G=(V,E)$; a collection $\mathcal{C}$ of pairwise disjoint simple closed curves in $\mathbb{R}^2$; and a map $\mu: V \rightarrow \mathcal{R}$, where $\mathcal{R}$ is the set of path connected components (called \textbf{regions}) of $\mathbb{R}^2\setminus \bigcup_{C\in \mathcal{C}}C $.
Decide whether there exists an embedding $\psi$ of $G$ in $\mathbb{R}^2$ such that $\psi(v)\in \mu(v)$ and $|\psi(e)\cap C|\le 1$ for every $C\in \mathcal{C}$ and every edge $e\in E$.
\end{problem}

\paragraph{Embeddability in $\mathbb{R}^3$ and thickenability.}
Note that a graph is a 1-dimensional simplicial complex. We consider the 2-dimensional analog of graph embeddings in $\mathbb{R}^3$. It is a well-known result that for every graph $G$ there exists an orientable surface (an orientable 2-dimensional manifold) $S$ such that $G$ embeds in $S$. An analogous result fails for 2-dimensional polyhedra (2-polyhedra for short) and 3-dimensional manifolds (3-manifolds for short).
A 2-polyhedron $P$ is \textbf{thickenable} if $P$ embeds\footnote{In this case, topological and piecewise linear embeddability are equivalent.} in \emph{some} orientable 3-manifold.
It was known at least since the 1960s that testing whether a 2-polyhedron is thickenable is in NP, which is an immediate consequence of a theorem by Neuwirth~\cite{N68_thick}; see also~\cite{Skop94_thick}.
We restate it as Theorem~\ref{thm:Neuwirth} in Section~\ref{sec:c-planarity}
(in essence, it characterizes thickenability in terms of so-called \emph{link} graphs).
We remark that Neuwirth's theorem has been recently used in~\cite{MSTW18_3D} in the first step
of an algorithm that decides (not necessarily in polynomial time) whether a given 2-polyhedron embeds to $\mathbb{R}^3$.

The thickenability problem is formulated as follows. Let $H$ be a finite multigraph without loops (multiple edges are allowed). Let $P=(H,F)$ denote a 2-dimensional \textbf{(abstract) polyhedron},
where $F$ is a set of cycles in $H$. We assume that every edge of $H$ is contained in at least one element of $F$. The multigraph $H$ is the \textbf{1-skeleton} of $P$ and every element of $F$ is a \textbf{facet} in $P$. Note that $H$ might contain a cycle that is not a facet of $P$.

\begin{problem}[\textsc{Thickenability}]\label{prob:thickenability}
Given a 2-polyhedron $P=(H,F)$, where $H$ is a multigraph without loops and $F$ is the set of facets of $P$,
decide whether $P$ embeds to some orientable 3-manifold.
\end{problem}

Recently, Carmesin~\cite[Section 6]{C17_embed} announced that one can test whether a simply connected 2-polyhedron embeds in $\mathbb{R}^3$ in quadratic time\footnote{The correctness of the claimed running time has not been confirmed, yet.}, while embeddability testing for general 2-polyhedra in $\mathbb{R}^3$ is known to be NP-hard~\cite{dMRST18_embed3D}.
In the case of simply connected 2-polyhedra, thickenability is equivalent to embeddability in $\mathbb{R}^3$, see for example~\cite{C17_embed2}. Though this equivalence appears to be a well-known consequence of Perelman's famous result~\cite{P02_finite,P02_entropy,P02_ricci}, see also the book~\cite{M07_ricci}.

In fact, Carmesin's approach deals exclusively with testing the thickenability condition in Theorem~\ref{thm:Neuwirth} (cf.\ Section~\ref{sec:c-planarity}).
Therefore his approach to the embeddability in $\mathbb{R}^3$ applies also to \textsc{thickenability} in general, but it is
restricted simply connected 2-polyhedra. In contrast, testing whether a given 2-polyhedron, that is homeomorphic to a nonorientable surface, embeds in a given 3-manifold (that is, both the 2-polyhedron and the 3-manifold are given), is already NP-hard~\cite{BdMW17_nonOrient}.

\paragraph{Atomic embeddibility} was introduced in~\cite{AFT19+_weak} and~\cite[Section 11]{FK17+_ht}, see also~\cite{FK17_htsocg}, as a common generalization of \textsc{C-planarity} and \textsc{thickenability}. It is an extension of the concept of \emph{weak embeddability}~\cite{AFT19+_weak} (also known in topology as \emph{approximating simplicial maps by embeddings}~\cite{M94_derivative,ReSk98_deleted,S69_realization,Sko03_approximability}).
We do not define weak embedding here, but remark that its study in computational geometry was motivated
by the special case of a (piecewise linear) weak embeddings of a cycle in the plane,
which corresponds to \emph{weakly simple polygons}~\cite{AAET17,ChEX15,CdBPP09}.

Let $G$ and $H$ be finite multigraphs without loops. To distinguish between $G$ and $H$ in our terminology, the vertices and edges of $H$ are called \textbf{atoms} and \textbf{pipes}, respectively.
A map $\varphi:~G~\rightarrow~H$ is \textbf{simplicial} if it maps vertices to vertices (i.e., to atoms),
edges to vertices or edges (i.e., to atoms or pipes), and preserves edge-vertex incidences.
An \textbf{instance} of atomic embeddability is given by a simplicial map $\varphi: G \rightarrow H$.

The \textbf{thickening}
$\mathcal{H}$ of $H$ is an orientable 2-dimensional  surface constructed as follows. For each atom $\nu\in V(H)$, let $\mathcal{S}(\nu)$ be a 2-sphere with $\deg(\nu)$ pairwise disjoint open discs, called \textbf{holes}, removed. We fix an orientation on $\mathcal{S}(\nu)$, and define an arbitrary one-to-one correspondence between the holes of $\mathcal{S}(\nu)$ and the pipes incident to $\nu$. The thickening $\mathcal{H}$ is obtained by gluing the surfaces $\mathcal{S}(\nu)$, $\nu\in V(H)$, as follows; see Fig.~\ref{fig:local} (left) for an illustration. For every pipe $\rho\in E(H)$, $\rho=\nu\mu$, identify the pair of boundaries of the holes corresponding to $\rho$ by an orientation reversing homeomorphism. In particular, if $\nu\mu\not\in E(H)$, then $\mathcal{S}(\nu)$ and $\mathcal{S}(\mu)$ are disjoint\footnote{The surface $\mathcal{H}$ is reminiscent of the ball-and-stick or space-filling models in molecular chemistry.}.

An embedding $\mathcal{E}:G\rightarrow \mathcal{H}$ is an \textbf{atomic embedding} of $G$ with respect to $\varphi$ if every vertex $v\in V(G)$ is embedded in $\mathcal{S}(\varphi(v))$; and every edge $uv\in E(G)$ is embedded as a Jordan arc in $\mathcal{S}(\varphi(u))\cup\mathcal{S}(\varphi(v))$ as follows: If $\varphi(u)\neq\varphi(v)$ then the Jordan arc representing $uv$ intersects the hole corresponding to the pipe $\varphi(uv)$ in exactly one point, which is a \emph{proper crossing}, or in other words, a transversal intersection.

\begin{problem}[\textsc{Atomic embeddability}]\label{prob:atomic}
Given a pair of multigraphs without loops, $G$ and $H$, and a simplicial map  $\varphi: G \rightarrow H$,
decide whether an {atomic embedding} of $G$ with respect to $\varphi$ exists.
\end{problem}

We remark that an instance $(H,F)$ of \textsc{thickenability} corresponds to an instance $(G,H)$ of \textsc{atomic embeddability},
where $H$ is the same graph both instances, and
$G$ is a vertex disjoint union of cycles (disjoint copies of the cycles in $F$).

\paragraph{Results.}
In this paper, we present a polynomial-time algorithm for atomic embeddability, thereby giving the first polynomial-time algorithm for c-planarity.
Our approach combines ideas from Carmesin's work~\cite{C17_embed} with algorithmic tools previously developed for weak embeddability testing.
In particular, the elementary operation ``stretch'' (defined below) is based on a similar operation in~\cite{C17_embed}.
However, by formulating the problem in terms of graphs on surfaces, our results are more general and perhaps more accessible to the broader community.
A polynomial-time algorithm for c-planarity implies that some other constrained planarity problems that have previously been reduced to c-planarity are tractable, as well; see~\cite{AdBFPR12_simul} and~\cite[Figure 4]{AngeliniDL16}.

We also consider a further generalization of \textsc{atomic embeddability} in which the surfaces $\mathcal{S}(\nu)$, $\nu\in V(H)$,
may have higher genus (by attaching additional handles), and show that this problem is NP-complete even if each surface $\mathcal{S}(\nu)$ is based on a torus rather than a sphere.

In the last section, we give a short polynomial-time reduction of \textsc{atomic embeddability} to \textsc{thickenability},
which shows that both problems are polynomially equivalent.

\paragraph{Simultaneous embeddability of two graphs.}
Angelini and Da Lozzo~\cite{AngeliniDL16} proved that there exists a polynomial time reduction to  \textsc{c-planarity} from \textsc{connected sefe-2}, the problem of deciding simultaneous embeddability of two graphs in the case when the intersection of the two graphs is connected (see Section~\ref{sec:c-planarity} for a formal statement of the problem). Therefore our algorithm gives a polynomial time algorithm for this problem.
The general version of the problem, known as \textsc{sefe-2}, where the intersection of the two graphs may be disconnected, is notoriously difficult. Introduced by Brass et al.~\cite{BCD+07+_sefe}, it subsumes most of the studied planarity variants~\cite[Figure 2]{Sch13b_towards}, and has generated considerable research activity~\cite{AdBFPR12_simul,BKR18_simul,BR16_simul,HRL13_simul}; see also~\cite{BKR13_simul} for a survey. 
Schaefer~\cite[Theorem 6.17]{Sch13b_towards} realized that \textsc{c-planarity} is reducible in polynomial time to the problem \textsc{sefe-2}. Therefore it is an unfortunate state of affairs that its complexity status is still unknown. Although, Carmesin's and our results give a new hope that a resolution of \textsc{sefe-2} problem might be within reach.

Let us note that there also exists a natural and fairly straightforward polynomial-time reduction of \textsc{connected sefe-2} to \textsc{thickenability}, which was found independently by de Mesmay, Kalu\v{z}a, and Tancer~\cite{BMW15,dMKT19_per} and these authors. This suggests that \textsc{thickenability}, and hence, \textsc{atomic embeddability}, is not powerful enough to solve \textsc{sefe-2} in general without using significantly novel ideas.

\paragraph{Organization.}
Section~\ref{sec:atomic} presents a polynomial-time algorithm for atomic embeddability.
Section~\ref{sec:compatible} shows that a further generalization of the problem is NP-hard.
In Section~\ref{sec:c-planarity}, we give a direct polynomial-time reduction of
\textsc{atomic embeddability} to \textsc{thickenability}, which also establishes a polynomial
reduction of another problem, \textsc{connected sefe-2}, to \textsc{thickenability}.

\section{Atomic Embeddings}
\label{sec:atomic}

In this section we present a polynomial-time algorithm for \textsc{atomic embeddability}. After defining local graphs, which are crucial for the algorithm, we present a high-level overview in Section~\ref{ssec:highlevel}.
Section~\ref{ssec:prelim} introduces additional terminology. We reduce a given instance $\varphi$ to normal form (defined below) in Section~\ref{ssec:ds}; and introduce five \emph{elementary operations} on atomic instances in Section~\ref{ssec:operations} that are used in our main algorithm. We show how to solve two special cases in linear time in Sections~\ref{ssec:toric} and~\ref{ssec:subcubic}. Our main algorithm in Section~\ref{ssec:main} reduces all normal instances to these special cases. We finish with a running time analysis in Section~\ref{ssec:runtime}.

\begin{figure}
\centering
\includegraphics[scale=1]{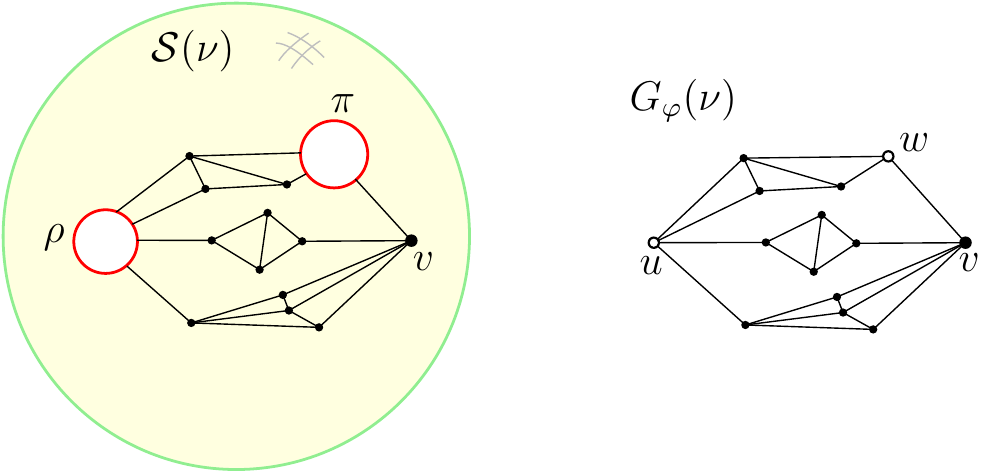}
\caption{Part of an atomic embedding of $G$ on $\mathcal{S}(\nu)$, where the atom $\nu$ is incident to pipes $\rho$ and $\pi$ (left), and the corresponding local graph $G_\varphi(\nu)$ (right). The virtual vertices $u$ and $w$ in $G_\varphi(\nu)$ correspond to the pipes $\rho$ and $\pi$, respectively. All other vertices in $G_\varphi(\nu)$ are ordinary.}
\label{fig:local}
\end{figure}

\paragraph{Local graphs.}
Let $\varphi:G\rightarrow H$ be an instance of {atomic embeddability}.
The simplicial map $\varphi:G\rightarrow H$ naturally extends to subgraphs of $G$. For an atom $\nu\in V(H)$, $\varphi^{-1}[\nu]$ denotes the subgraph of $G$ mapped to $\nu$ by $\varphi$. For a pipe $\rho\in E(H)$, $\varphi^{-1}[\rho]$ denotes the subset of edges of $G$ mapped to $\rho$ by $\varphi$.

For every atom $\nu\in V(H)$, we define a multigraph $G_\varphi(\nu)$, which captures the local structure of $\varphi$ at the atom $\nu$ and its incident pipes; see Fig.~\ref{fig:local} for an illustration. (We remark that graphs $G_\varphi(\nu)$ are analogous to the graphs $\overline{C}$ in~\cite{AFT19+_weak} and the links in~\cite{C17_embed2}.)

The vertices of $G_\varphi(\nu)$ are in a one-to-one correspondence with the union of the set of vertices in $V(G)$ mapped by $\varphi$ to $\nu$ (that is, $V(\varphi^{-1}[\nu])$) and the set of pipes incident to $\nu$. Hence, we can distinguish between \textbf{ordinary vertices} that correspond to vertices in $V(\varphi^{-1}[\nu])$ and \textbf{virtual vertices} that correspond to pipes incident to $\nu$.
For every edge in $E(G)$ between two vertices in $V(\varphi^{-1}[\nu])$ in $G$, add an edge in $G_\varphi(\nu)$ between the corresponding vertices. Finally, for every edge in $uv\in E(G)$ where $u\in V(\varphi^{-1}[\nu])$ and $v\not\in  V(\varphi^{-1}[\nu])$, add an edge  $G_\varphi(\nu)$ between the ordinary vertex $u$ and the virtual vertex corresponding to $\varphi(uv)$. Thus, edges of $G_\varphi(\nu)$ are in a one-to-one correspondence with the union of the edges of $G$ between vertices in $V(\varphi^{-1}[\nu])$ and the edges of $G$ mapped to pipes incident to $\nu$ by $\varphi$. Let $\original{e}\in E(G)$ denote the edge corresponding to an edge $e\in E(G_\varphi(\nu))$.

Note that the virtual vertices form an independent set in $G_\varphi(\nu)$.
An embedding $\mathcal{E}_{\nu}$ of $G_\varphi(\nu)$ is \textbf{inherited} from an atomic embedding $\mathcal{E}$ of $G$, if $\mathcal{E}_\nu$ is obtained from the restriction $\mathcal{E}$ to $\mathcal{S}(\nu)$ by filling the holes of $\mathcal{S}(\nu)$ with discs, and then contracting them  to points.

Let $\mathcal{E}:G\rightarrow \mathcal{S}$ be an embedding of a graph on an orientable surface.
The \textbf{rotation} at a vertex $v\in V(G)$ is the counterclockwise cyclic order of the end pieces of the edges incident to $v$. The \textbf{rotation system} of $\mathcal{E}$ is the set of rotations of all vertices of $G$.
A vertex $v$ of a planar graph has a \textbf{fixed rotation} (for short, is \textbf{fixed}) if its rotation in every embedding of the graph in the plane is unique up to the choice of orientation.
For a pair of virtual vertices $u$ and $v$ of $G_\varphi(\mu)$ and $G_\varphi(\nu)$, resp., corresponding to a pipe $\mu\nu=\rho\in E(H)$,
the edges $e$ incident to $u$ (resp., $v$) correspond to edges $\overline{e}\in \varphi^{-1}[\rho]$. The rotations of $u$ and $v$ in an embedding of all local graphs in the plane determine two cyclic orders $\varphi^{-1}[\rho]$. This allows us, in particular, to define that the rotations at $u$ and $v$ to be \textbf{opposite} (in other words, \emph{reverse}) to each other, if the rotation at $u$ is $(uu_1,\ldots, uu_{\deg(u)})$, at $v$ it is $(vv_{\deg(v)},\ldots, vv_{1})$, and $\overline{uu_i}=\overline{vv_i}=u_iv_i\in E(G)$. The rotations of $u$ and $v$ are \textbf{compatible} if they are the same or opposite to each other; and \textbf{incompatible} otherwise.

An instance $\varphi$ of \textsc{atomic embeddibility} is \textbf{positive} if there exists an atomic embedding of $G$ with respect to $\varphi$. Two instances, $\varphi$ and $\varphi'$, are \textbf{equivalent} if $\varphi$ and $\varphi'$ are both positive, or $\varphi$ and $\varphi'$ are both negative.
We can now formulate \textsc{atomic embeddibility} in terms of the rotation systems of plane embeddings of the graphs $G_\varphi(\nu)$, $\nu\in V(H)$.

\begin{observation}\label{obs:atomic}
An instance $\varphi:G\rightarrow H$ of atomic embeddability is positive if and only if the graphs $G_\varphi(\nu)$, $\nu\in V(H)$, are planar, and they each have embeddings in the plane such that for every pipe $\mu\nu\in E(H)$, the virtual vertices corresponding to $\mu\nu$ in $G_\varphi(\mu)$ and $G_\varphi(\nu)$ have opposite rotations (in the  sense that we consider every edge $e$ incident to a  virtual vertex as $\overline{e}$).
\end{observation}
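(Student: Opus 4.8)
The plan is to prove both implications by passing between an atomic embedding on the thickening $\mathcal{H}$ and a collection of plane embeddings of the local graphs, using the fact that each $\mathcal{S}(\nu)$ with all its holes filled in is a $2$-sphere, and that a pipe is glued by an \emph{orientation-reversing} homeomorphism of hole boundaries.

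For the forward direction, suppose $\mathcal{E}\colon G\to\mathcal{H}$ is an atomic embedding. Fix an atom $\nu$ and restrict $\mathcal{E}$ to $\mathcal{S}(\nu)$: every vertex and every edge of $\varphi^{-1}[\nu]$ lies in $\mathcal{S}(\nu)$, and every edge $\overline{e}\in\varphi^{-1}[\rho]$ incident to $\nu$ meets the hole of $\mathcal{S}(\nu)$ corresponding to $\rho$ in a single point, so the portion of $\overline{e}$ inside $\mathcal{S}(\nu)$ is an arc from an ordinary point to a point of that hole's boundary. Filling each hole with a disc and contracting it to a point turns $\mathcal{S}(\nu)$ into a $2$-sphere and turns these data into exactly the inherited embedding $\mathcal{E}_\nu$ of $G_\varphi(\nu)$; hence $G_\varphi(\nu)$ is planar. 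For the rotation condition, observe that for a pipe $\rho=\mu\nu$ the rotation of the virtual vertex of $\rho$ in $\mathcal{E}_\mu$ (resp.\ $\mathcal{E}_\nu$) is precisely the cyclic order in which the arcs of $\varphi^{-1}[\rho]$ meet the boundary of the hole of $\mathcal{S}(\mu)$ (resp.\ $\mathcal{S}(\nu)$), read off according to the fixed orientation. Since $\mathcal{H}$ is built by identifying these two hole boundaries by an orientation-reversing homeomorphism, the two cyclic orders are reverses of one another, and matching $\overline{e}$ on the two sides shows that the two virtual vertices have opposite rotations in the sense of the statement.

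For the converse, suppose each $G_\varphi(\nu)$ has a plane embedding $\mathcal{E}_\nu$ such that for every pipe the two virtual vertices have opposite rotations. Realize $\mathcal{E}_\nu$ on the oriented $2$-sphere underlying $\mathcal{S}(\nu)$, and around each virtual vertex $x$ (corresponding to a pipe $\rho$ incident to $\nu$) delete a small open disc whose boundary circle meets the $\deg(x)=|\varphi^{-1}[\rho]|$ edges at $x$ once each, cutting off $x$; this recreates the sphere-with-holes of the thickening and leaves, for each $\overline{e}\in\varphi^{-1}[\rho]$, an arc ending on the boundary of the hole of $\rho$. Because the rotations at the two virtual vertices of $\rho=\mu\nu$ are opposite, there is an orientation-reversing homeomorphism between the two hole boundaries that, for every $\overline{e}\in\varphi^{-1}[\rho]$, identifies the endpoint coming from $\mathcal{S}(\mu)$ with the one coming from $\mathcal{S}(\nu)$. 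Performing these identifications over all pipes produces $\mathcal{H}$ and simultaneously joins the two arc-halves of each such $\overline{e}$ into a single Jordan arc crossing the corresponding hole exactly once; together with the already-embedded edges of $\varphi^{-1}[\nu]$ this is an atomic embedding of $G$ with respect to $\varphi$, so $\varphi$ is positive.

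The one genuinely delicate point, which I would write out carefully, is this last gluing step: one must check that ``opposite rotation'' is exactly the condition under which the boundary identification can be chosen to be simultaneously orientation-reversing (as the definition of $\mathcal{H}$ demands) and edge-endpoint preserving. This comes down to the elementary fact that an orientation-reversing self-homeomorphism of a circle reverses cyclic order, together with the observation that the virtual vertices form an independent set in each $G_\varphi(\nu)$, so each edge of $G$ mapped to a pipe contributes exactly one half-arc to each of the two incident local graphs and the endpoint matching is well defined. A minor bookkeeping remark also worth stating is the degenerate case $\varphi^{-1}[\rho]=\emptyset$ for a pipe $\rho$ incident to $\nu$: then the virtual vertex is isolated, its hole carries no arcs, and there is nothing to match.
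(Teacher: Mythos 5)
Your proof is correct and is exactly the unpacking of the definitions (restriction to each $\mathcal{S}(\nu)$, filling/contracting holes, and the orientation-reversing gluing along pipes) that the paper treats as immediate, stating the result as an Observation without proof. The one point you flag as delicate --- that ``opposite rotation'' is precisely the condition allowing a simultaneously orientation-reversing and edge-endpoint-preserving identification of the two hole boundaries --- is indeed the crux, and you resolve it correctly.
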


 For a subset $V'\subset V(G)$, a \textbf{$V'$-bridge} $B$ in $G$ is a subgraph of $G$ obtained as the union of $V'$, a connected component $C$ of $G\setminus V'$, and all the edges joining a vertex of $C$ with a vertex of $V'$. We allow $B$ to consist of a single edge between two vertices in $V'$, or a loop incident to a vertex in $V'$ (see Fig.~\ref{fig:para-para}(left)).
Analogously to Carmesin~\cite{C17_embed2}, we also define two special graphs (as possible local graphs). A \textbf{p-path} is a graph that consists of two vertices (\textbf{poles}) connected by one or more subdivided edges (Fig.~\ref{fig:para-para}(middle)). A \textbf{p-star} is a graph with a unique cut vertex (\textbf{center}) whose bridges are p-paths with one pole at the center (Fig.~\ref{fig:para-para}(right)).

\subsection{High Level Overview of the Recognition Algorithm}
\label{ssec:highlevel}

Given an instance $\varphi$ of atomic embeddiblity, we apply a sequence of elementary operations that each produces an equivalent instance $\varphi'$ (with respect to atomic embeddability). Intermediate steps of our algorithm may detect that the instance is negative when a local graph $G_\varphi(\nu)$ is nonplanar. It may also disconnect the graph $H$, effectively splitting an instance into independent instances. Ultimately, it reduces $\varphi$ to a family of instances, each of which is either toroidal (where both $G$ and $H$ are 2-regular), or subcubic (where the maximum degree of all local graphs is at most 3). In both cases, we can easily test atomic embeddibility in linear time (Sections~\ref{ssec:toric} and \ref{ssec:subcubic}).
Hence, the witness of atomic non-embeddibility that is provided by our algorithm is either the non-planarity of a local graph in an  instance produced by a sequence of elementary operations, or negative subcubic or toroidal instance.

 Let $\mathcal{G}^*$ denote the disjoint union of all local graphs except those belonging to toroidal subinstances. Our algorithm incrementally reduces the maximum degree $\Delta=\Delta(\varphi)=\max_{v\in V(\mathcal{G}^*)} \deg(v)$.
The two key operations for dealing with a vertex $v\in V(\mathcal{G}^*)$ of degree $\Delta\ge 4$ are \operation{Stretch($v,.$)}, which splits $v$ into two vertices of smaller degree (illustrated in Fig.~\ref{fig:rigid}), and \operation{Contract$(.)$}, which contracts a pipe (illustrated in Fig.~\ref{fig:useful}). Operation \operation{Stretch($.$)} can be applied to a virtual or an ordinary vertex: If it is applied to an ordinary vertex, it modifies only $G$ and not $H$, but if it is applied to a virtual vertex, it modifies both $G$ and $H$, and in particular it increases the genus of the surface $\mathcal{H}$ by 1. We note that the increase in the genus of $\mathcal{H}$ occurs also in the special case that $\varphi$ represents an instance of c-planarity (when $\mathcal{H}$ is initially homeomorphic to a 2-sphere, i.e., when its genus is 0). This explains in part why this approach for the inherently planar problem of c-planarity has not been considered before. The generalization of c-planarity to surfaces of higher genus allows for a broader range of operations, but it also poses several technical challenges that had to be resolved---some of them even indicated that the problem might be NP-complete, which we discuss next.

Unfortunately, \operation{Stretch($v,.$)} produces an equivalent instance only if we already have some partial information about the rotation of vertex $v$. In general, it cannot reduce the degree of a cut vertex. This obstacle is overcome with the help of a surprisingly simple operation, \operation{Contract$(\rho)$}, which contracts a pair of atoms in $H$ joined by a single pipe $\rho$ into one atom, thereby eliminating a pair of virtual vertices in $\mathcal{G}^*$ corresponding to $\rho$.
An almost identical operation is also crucial in our recent joint work with Akitaya~\cite{AFT19+_weak} about weak embeddability.
Nevertheless, the possibility of using this operation in the context of (the general case of) c-planarity or atomic embeddability was not clear to us for some time. The reason is that the operation \operation{Contract$(\rho)$} for a pipe $\rho=\mu\nu$ can only be applied in a very restricted setting, essentially if and only if $G_\varphi(\mu)$ or $G_\varphi(\nu)$ is a p-path and $\rho$ corresponds to a pole of that p-path; or if they are both p-stars and $\rho$ corresponds to their centers. The crucial observation that saves the day, which is implicit in Carmesin's work, is that after some preprocessing that resolves 2-cuts with a vertex of degree $\Delta$, we can use the operation \operation{Enclose(.)}, illustrated in Fig.~\ref{fig:1cut}, to turn each cut vertex of degree $\Delta$ into a center of a p-star.

In order to show that our algorithm runs in polynomial time, we define a nonnegative potential $\Phi(\varphi)$ bounded from above throughout the execution of the algorithm by a polynomial function of $|V(G)|$ that strictly decreases after every application of \operation{Stretch($.$)} or \operation{Contract$(.)$}, but unfortunately, not after every application of \operation{Enclose(.)}, which possibly just creates a pair of new virtual vertices in $\mathcal{G}^*$. Hence, we had to design a charging scheme that controls the growth of $\mathcal{G}^*$.

Several other similar, but less crucial, operations are used in the preprocessing and postprocessing steps of the algorithm, where the preprocessing step normalizes the input instance in order to allow a relatively smooth runtime analysis, and the postprocessing step handles toroidal instances and subcubic instances (where $\Delta(\varphi)\leq 3$).

\subsection{Preliminaries}
\label{ssec:prelim}

Let $G$ and $H$ be multigraphs without loops (multiple edges are allowed in both $H$ and $G$).
By a slight abuse of notation, if there is no danger of confusion, we sometimes denote edges by unordered pairs of their endpoints (even though several edges may connect the same pair of vertices). A path, cycle, and walk in a graph is always assumed to be a sequence of edges (rather than vertices). Recall that in order to distinguish $G$ and $H$ in our terminology, the vertices and edges of $H$ are called {atoms} and {pipes}, respectively. We use the convention that vertices and edges of $G$ are denoted by lower case Roman letters (e.g., $u,v,z$ and $e,f,g$), respectively, and the atoms and pipes by lower case Greek letters (e.g., $\nu,\mu$ and $\rho,\pi$).

\paragraph{Cut vertices, 2-cuts, and 2-edge-cuts.}
 Every vertex of degree 2 or less has a unique rotation, hence it has a fixed rotation. For this reason, we use a topological notion of 1- and 2-cuts,  which is invariant to subdivisions of edges and supression of vertices of degree 2.
 For a connected graph $G$, which is not a cycle, denote by $G^-$ the multigraph obtained by supressing all vertices of degree 2.
Hence, $G^-$ is free of  \textbf{subdivided edges}, defined as paths whose internal vertices have degree 2.
Note that $G^-$ can have loops corresponding to cycles in $G$ that form leaf blocks.

Let $G$ be a connected graph that is not a cycle.
A vertex $v\in V(G^-) \subseteq V(G)$ is a \textbf{proper cut vertex} (or \textbf{proper 1-cut}) of $G$ if there are two or more $\{v\}$-bridges in $G^-$. A pair of vertices $\{u,v\}\subset V(G^-)\subseteq V(G)$ is a \textbf{proper 2-cut} of $G$ if there are at least three $\{u,v\}$-bridges in $G^-$, or there are exactly two $\{u,v\}$-bridges in $G^-$, neither of which is an edge in $G^-$. (Note that if there are exactly two $\{u,v\}$-bridges in $G^-$, and one of them is an edge between $u$ and $v$, then $\{u,v\}$ is not a 2-cut in $G^-$.) A pair of edges $\{e,f\}\subset E(G)$, such that at least one vertex incident to $e$ and one vertex incident to $f$ is of degree at least 3, is a \textbf{proper 2-edge-cut} of $G$ if  there exist edges $e^-$ and $f^-$, such that $e^-$ and $f^-$ were obtained by suppressing internal vertices of degree 2 of a path containing $e$ and $f$, respectively, and  $\{e^-,f^-\}$ is a 2-edge-cut in $G^-$. Finally, for a proper 2-cut $\{u,v\}$, a $\{u,v\}$-bridge $B$ is \textbf{separable} if $\deg_B(u)\in \{1,\deg(u)-1\}$ and $\deg_B(v)\in\{1,\deg(v)-1\}$, otherwise it is \textbf{nonseparable}.

\begin{observation}
\label{obs:proper2edgecut}
Let $G$ be a connected graph that is not a cycle.
If $\{u,v\}$ is a proper 2-cut and $B$ is a separable $\{u,v\}$-bridge but not a subdivided edge,
then there exists a proper 2-edge-cut $\{e,f\}$ in $G$ such that $u\in e$ and $v\in f$.
\end{observation}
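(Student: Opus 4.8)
The plan is to work entirely inside $G^-$, where the $\{u,v\}$-bridge $B$ corresponds to a $\{u,v\}$-bridge $B^-$ of $G^-$ that is not a loop or a single edge (since $B$ is not a subdivided edge). The separability hypothesis tells us $\deg_{B^-}(u)\in\{1,\deg_{G^-}(u)-1\}$ and similarly for $v$; the key point is to locate, at each of $u$ and $v$, a single edge that "cuts $B$ off" from the rest of the graph. First I would handle the generic case $\deg_{B^-}(u)=1$ and $\deg_{B^-}(v)=1$: then there is a unique edge $e^-$ of $B^-$ at $u$ and a unique edge $f^-$ of $B^-$ at $v$, and I claim $\{e^-,f^-\}$ is a $2$-edge-cut of $G^-$. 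Indeed, removing $e^-$ and $f^-$ disconnects $B^-\setminus\{e^-,f^-\}$ (the interior of the bridge, which is nonempty and connected because $B^-$ is not just an edge) from the rest of $G^-$, since every path leaving $B^-$ must pass through $u$ or $v$ and hence use $e^-$ or $f^-$. Pulling $e^-,f^-$ back to edges $e,f$ of $G$ incident to $u$ and $v$ respectively, and checking the degree-$3$ condition in the definition of proper $2$-edge-cut (at least one endpoint of $e$ and one of $f$ has degree $\ge 3$: take $u$ and $v$, which have degree $\ge 3$ in $G^-$ precisely because $\{u,v\}$ is a proper $2$-cut with a nonseparable-free... — one must verify $u,v$ are not degree-$2$ in $G^-$, which holds since $G^-$ has no degree-$2$ vertices by construction), gives the conclusion.

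The remaining cases are $\deg_{B^-}(u)=\deg_{G^-}(u)-1$ (and symmetrically for $v$). Here the roles flip: the edge we want at $u$ is the unique edge of $G^-$ at $u$ that does \emph{not} belong to $B^-$, i.e.\ the edge joining $u$ to the complementary $\{u,v\}$-bridge(s). I would argue that there is exactly one such edge (because $\deg_{B^-}(u)=\deg_{G^-}(u)-1$), call it $e^-$, and likewise obtain $f^-$ at $v$; then $G^-\setminus\{e^-,f^-\}$ splits into $B^-\setminus\{u,v\}$-side versus the rest, again by the same bridge-separation argument. One then combines the two possibilities at $u$ with the two at $v$ (four cases, but they are all the same argument with $B^-$ or its complement playing the role of "the piece that gets separated"). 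In every case one checks the degree condition using that $u$ and $v$ have degree at least $3$ in $G^-$.

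The main obstacle, and the step I would be most careful about, is the bookkeeping around vertices of degree exactly $2$ and the precise correspondence between edges of $G$ and edges of $G^-$: the definition of proper $2$-edge-cut in the excerpt insists on edges $e^-,f^-$ obtained by suppressing degree-$2$ vertices along paths containing $e$ and $f$, and on the degree-$\ge 3$ side condition, so I must make sure the edges $e,f\in E(G)$ I produce really are the first edges of the subdivided edges of $G^-$ incident to $u$ and $v$, and that $u$ and $v$ themselves survive into $G^-$ (they do, being elements of $V(G^-)$ by hypothesis, and having degree $\ge 3$ there since $G^-$ has no degree-$2$ vertices). A secondary subtlety is the boundary case where a complementary bridge is itself a single edge between $u$ and $v$ — then "the edge not in $B^-$ at $u$" and "the edge not in $B^-$ at $v$" could be the \emph{same} edge of $G^-$; I would rule this out by noting that in that situation $\deg_{B^-}(u)=\deg_{G^-}(u)-1$ forces $B^-$ to contain every other edge at $u$, and the definition of proper $2$-cut (the clause excluding the case of exactly two bridges one of which is an edge) guarantees there is still enough structure for $\{e^-,f^-\}$ to be a genuine $2$-edge-cut with $e^-\ne f^-$. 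Once these corner cases are dispatched, the argument is a routine verification.
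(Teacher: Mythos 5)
The paper offers no proof of this observation (it is treated as immediate), so there is nothing to compare your write-up against except the statement itself; on that basis your argument is correct and is the natural one. Passing to $G^-$, separability hands you a single edge at $u$ (the unique $B$-edge when $\deg_B(u)=1$, the unique non-$B$-edge when $\deg_B(u)=\deg(u)-1$) and likewise at $v$, and in each of the four combinations every edge between the bridge side and its complement lies in $\{e^-,f^-\}$, so deleting the pair disconnects $G^-$; pulling $e^-$ and $f^-$ back to the edges of $G$ at $u$ and $v$ on the corresponding subdivided paths yields the required proper 2-edge-cut. Your handling of the corner case $e^-=f^-$ is exactly right: such an edge would have to be a $u$--$v$ edge, forcing $B$ and this single edge to be the only two $\{u,v\}$-bridges, contradicting the definition of a proper 2-cut. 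The one soft spot is your justification that $u$ and $v$ have degree at least $3$: the absence of degree-$2$ vertices in $G^-$ does not exclude degree $1$, and if, say, $\deg(u)=1$, then the unique edge of $G$ at $u$ may have both endpoints of degree less than $3$, so the degree condition in the definition of a proper 2-edge-cut can fail. This is best viewed as a looseness of the statement rather than a flaw in your proof: wherever the paper invokes Observation~\ref{obs:proper2edgecut} (in Subroutine~1, where $\min\{\deg(u),\deg(v)\}\ge 3$ is assumed, and in Postprocessing, where the local graphs are first shown to be biconnected and hence free of degree-$1$ vertices), both $u$ and $v$ have degree at least $3$, and under that standing assumption your argument is complete.
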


We often tacitly use the following well-known result by Mac~Lane~\cite{M37_struc}. If $G$ is a connected planar graph, and the rotation of a vertex $v$ is not fixed, then $\deg(v)\geq 3$ and $v$ participates in a proper 1- or 2-cut.
In particular, if every graph $G_\varphi(\nu)$, $\nu\in V(H)$, is a subdivision of a 3-connected graph, we can use planarity testing to check the conditions in Observation~\ref{obs:atomic}, and easily reduce the atomic embedibility problem
to 2SAT (cf.~Section~\ref{ssec:subcubic}). The challenge is, therefore, to handle the possible rotations of vertices that participate in proper 1- or 2-cuts in some local graph $G_\varphi(\nu)$.

\begin{figure}
\centering
\includegraphics[width=\textwidth]{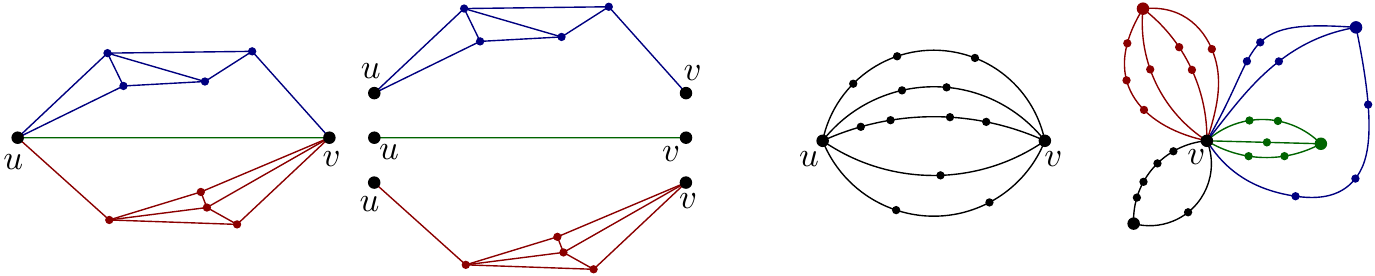}
\caption{A proper 2-cut $\{u,v\}$ and its three bridges (left), a p-path with poles $u$ and $v$ (middle), and a p-star centered at $v$ (right).}
\label{fig:para-para}
\end{figure}

\subsection{Preprocessing and Data Structures}
\label{ssec:ds}

Our algorithm uses a sequence of elementary operations that dynamically modify a given instance $\varphi:G\rightarrow H$ of atomic embeddability. For the running time analysis (Section~\ref{ssec:runtime}),
we need to maintain data structures that support these operations.
We assume that the input specifies $G$, $H$, and $\varphi$ explicitly (i.e., adjacency lists for the graphs $G$ and $H$, and pointers from the vertices and edges of $G$ to their images in $H$ under the map $\varphi:G\rightarrow H$).
The \textbf{size} of an instance $\varphi:G\rightarrow H$ is the total number of edges and vertices in the graphs $G$ and $H$.
Before we present our data structures (which do not maintain $H$ and $\varphi$ explicitly), we preprocess the instance $\varphi$.

\begin{definition}
An instance $\varphi:G\rightarrow H$ of atomic embeddability is \textbf{normal} if
\begin{itemize}\itemsep -1pt
\item the degree of every virtual vertex in every $G_\varphi(\nu)$, $\nu\in V(H)$, is 3 or higher; and
\item $G_\varphi(\nu)$ is connected for all $\nu\in V(H)$.
\end{itemize}
\end{definition}

\begin{figure}
\centering
\includegraphics[scale=1]{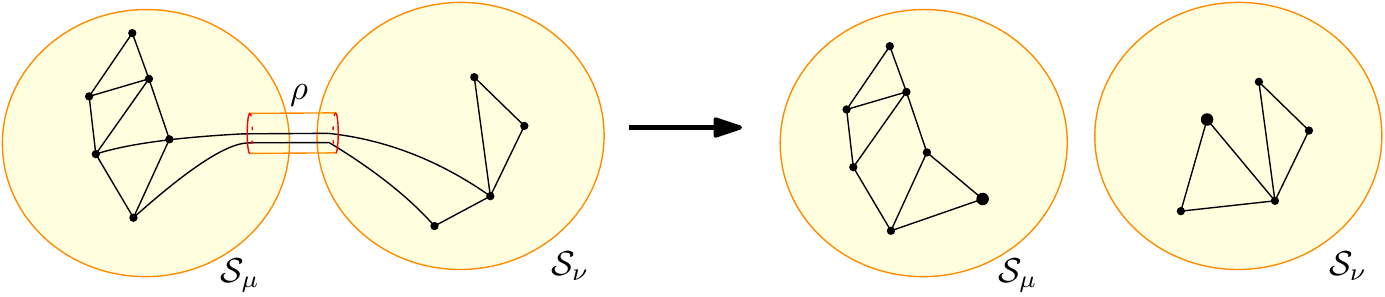}
\caption{An atomic embedding of $G$ on $\mathcal{S}(\mu)\cup\mathcal{S}(\nu)$ before and after operation \operation{Suppress($\rho$)}, where $\rho=\mu\nu$. The operation eliminates a pipe with at most two edges of $G$.}
\label{fig:suppress}
\end{figure}

We define an operation that eliminates pipes with 2 or less edges,
see Fig.~\ref{fig:suppress}.

\paragraph{\operation{Suppress$(\rho)$}.}
We are given a pipe $\rho\in E(H)$ such that $|\varphi^{-1}[\rho]|\leq 2$.
Let  $\mu,\nu\in V(H)$ be the two atoms incident to $\rho$.
Remove the pipe $\rho$ from $E(H)$.
If $\varphi^{-1}[\rho]$ contains one edge, say $uv\in E(G)$ with $\varphi(u)=\mu$ and $\varphi(v)=\nu$, then delete $uv$ from $E(G)$, insert two new vertices $u',v'$ and new edges $uu',vv'$ into $G$, and update $\varphi$ with $\varphi(u')=\mu$ and $\varphi(v')=\nu$.
If $\varphi^{-1}[\rho]$ contains two edges, say $u_iv_i\in E(G)$ with $\varphi(u_i)=\mu$ and $\varphi(v_i)=\nu$, for $i\in \{1,2\}$, then delete both $u_1v_1$ and $u_2v_2$ from $E(G)$, insert two new vertices $u',v'$ and new edges $u_1u'$, $u_2u'$, $v_1v'$, and $v_2v'$ into $G$, and update $\varphi$ with $\varphi(u')=\mu$ and $\varphi(v')=\nu$.

Since the virtual vertices that correspond to $\rho$ in $G_\varphi(\mu)$ and $G_\varphi(\nu)$ have fixed rotations, by Observation~\ref{obs:atomic}, the following is straightforward.

\begin{lemma}
\label{lem:supress-virtual}
For every instance $\varphi:G\rightarrow H$ of atomic embeddability, and every pipe $\rho\in E(H)$, whose corresponding vertices in local graphs have degree less than 3, operation \operation{Suppress($\rho$)} produces an equivalent instance.
\end{lemma}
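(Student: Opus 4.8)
The plan is to verify the two directions of the ``equivalent instance'' claim separately, using Observation~\ref{obs:atomic} as the working characterization of positivity. The key structural fact is that when $|\varphi^{-1}[\rho]|\le 2$, the virtual vertex $u$ in $G_\varphi(\mu)$ corresponding to $\rho$ has degree at most $2$, hence (being a vertex of degree $\le 2$ in a planar graph) it has a fixed rotation; the same holds for its counterpart $v$ in $G_\varphi(\nu)$. Therefore the ``opposite rotation'' constraint that Observation~\ref{obs:atomic} imposes on the pair $(u,v)$ is either automatically satisfiable regardless of the rest of the embedding (when both rotations are trivial, i.e. degree $0$ or $1$), or it forces a single bit of correlation that can always be met by choosing the global orientation of one of the two local graphs. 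The upshot is that the pipe $\rho$ imposes no real constraint, so deleting it and re-attaching its (at most two) edges to fresh degree-$1$ vertices $u',v'$ inside $\mathcal{S}(\mu)$ and $\mathcal{S}(\nu)$ does not change satisfiability.

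Concretely, first I would spell out what \operation{Suppress}$(\rho)$ does to the local graphs. In $G_\varphi(\mu)$ the virtual vertex $u$ (of degree $k\le 2$) is removed and replaced by a new ordinary vertex $u'$ of degree $k$ adjacent to exactly the same neighbors as $u$; symmetrically for $v$ in $G_\varphi(\nu)$. Every other local graph $G_\varphi(\sigma)$, $\sigma\notin\{\mu,\nu\}$, is unchanged, as is the incidence structure of $H$ away from $\rho$. In particular $G'_\varphi(\mu)$ is planar iff $G_\varphi(\mu)$ is planar (they are isomorphic as abstract graphs), and likewise for $\nu$; and a vertex of degree $\le 2$ still has a fixed (trivial) rotation after the operation. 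So the planarity hypotheses in Observation~\ref{obs:atomic} transfer verbatim between $\varphi$ and $\varphi'$.

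For the forward direction, suppose $\varphi$ is positive and fix embeddings of all $G_\varphi(\sigma)$ in the plane satisfying the ``opposite rotations'' condition of Observation~\ref{obs:atomic}. These same plane embeddings serve for $\varphi'$: the virtual-vertex constraints associated with every pipe other than $\rho$ are literally unchanged, and $\rho$ has simply been deleted in $H'$, so it imposes no constraint in $\varphi'$; the vertices $u',v'$ are ordinary and carry no constraint at all. Hence $\varphi'$ is positive. For the converse, suppose $\varphi'$ is positive with witnessing plane embeddings of $G_\varphi(\sigma)$ for $\sigma\neq\mu,\nu$ and of $G'_\varphi(\mu)$, $G'_\varphi(\nu)$. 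Transport these back to embeddings of $G_\varphi(\mu)$, $G_\varphi(\nu)$ by relabeling $u'\mapsto u$, $v'\mapsto v$ (legitimate since the graphs are isomorphic). All pipe constraints for pipes $\neq\rho$ are again satisfied unchanged. It remains to satisfy the constraint for $\rho$: the rotations at $u$ and $v$ are now among the $\le 2$ neighbors, so each is a fixed trivial cyclic order. If $\deg u\le 1$ or $\deg v\le 1$, the ``opposite rotations'' condition is vacuous. If $\deg u=\deg v=2$, then the cyclic orders $(u u_1,u u_2)$ and $(v v_1,v v_2)$ are each unique as cyclic orders, so they are automatically opposite in the sense of the definition (a $2$-element cyclic order equals its own reverse). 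Thus the condition holds, so by Observation~\ref{obs:atomic} $\varphi$ is positive. Combining both directions, $\varphi$ and $\varphi'$ are equivalent.

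The only point that needs a little care — and the one I would present most carefully — is the degenerate interplay in the two-edge case where $\varphi^{-1}[\rho]=\{u_1v_1,u_2v_2\}$: here the operation merges $u_1,u_2$ through the new vertex $u'$, so one must check that $u_1$ and $u_2$ really were the two (not necessarily distinct) endpoints in $V(\varphi^{-1}[\mu])$ of the two edges of $G$ carrying $u$'s two incident edges in $G_\varphi(\mu)$, and that introducing $u'$ does not create a loop or merge two distinct components in a way that changes planarity. Since $u'$ has degree exactly $2$ and is a leaf-like attachment, this is harmless: $G'_\varphi(\mu)$ is obtained from $G_\varphi(\mu)$ by an isomorphism (rename $u$ to $u'$), full stop — the apparent ``merge'' is just the fact that $u$ already had those two edges. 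Everything else is bookkeeping. I do not expect a genuine obstacle here; the lemma is, as the paper says, straightforward once Observation~\ref{obs:atomic} and the fixed-rotation fact for low-degree vertices are in hand.
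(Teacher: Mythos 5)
Your proposal is correct and follows exactly the route the paper intends: the paper dismisses this lemma in one sentence, citing Observation~\ref{obs:atomic} and the fact that the virtual vertices corresponding to $\rho$ have degree less than $3$ and hence fixed (indeed trivial) rotations, which is precisely the argument you spell out in detail. The only nitpick is the phrase ``fresh degree-$1$ vertices'' early on (in the two-edge case $u',v'$ have degree $2$), which you correct yourself later, so nothing of substance is missing.
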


\begin{figure}
\centering
\includegraphics[scale=1]{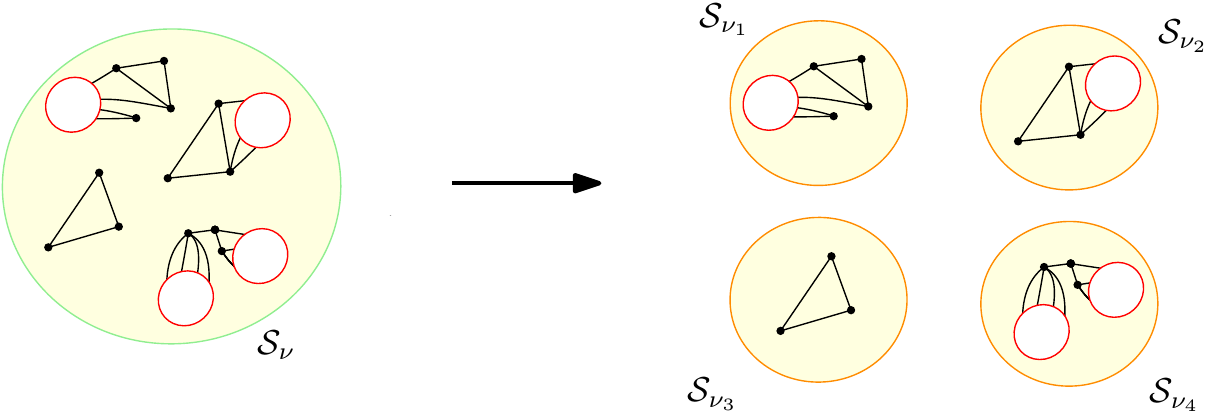}
\caption{An atomic embedding of $G$ on $\mathcal{S}(\nu)$ and $\bigcup_i\mathcal{S}(\nu_i)$ before and after, respectively, operation \operation{Split($\nu$)}. The operation splits an atom $\nu\in V(H)$, for which $G_\varphi(\nu)$ is disconnected, into as many atoms as the number of connected components in $G_\varphi(\nu)$.}
\label{fig:split}
\end{figure}

We define an operation that splits an atom $\nu$ if $G_\varphi(\nu)$ is disconnected, see Fig.~\ref{fig:split}.

\paragraph{\operation{Split$(\nu)$}.}
We are given a local graph $G_\varphi(\nu)$ whose connected components are $C_1,\ldots, C_{k}$, for some $k\in \mathbb{N}$. Delete $\nu$ from $H$, introduce new vertices $\nu_1,\ldots, \nu_{k}$ in $V(H)$, and introduce a pipe $\nu_i\mu$ for every $\rho=\nu\mu\in E(H)$ such that $\rho$ corresponds to a virtual vertex of $C_i$. Finally, redefine $\varphi$ on $V(\varphi^{-1}[\nu])$ as follows: Put $\varphi(v)=\nu_i$ if $v\in V(C_i)$.

By Observation~\ref{obs:atomic}, the following is straightforward.

\begin{lemma}
\label{lem:split-atom}
For every instance $\varphi:G\rightarrow H$ of atomic embeddability,
an application of \operation{Split($\nu$)} produces an equivalent instance.
\end{lemma}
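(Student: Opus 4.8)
The plan is to invoke the characterization in Observation~\ref{obs:atomic} and argue that \operation{Split($\nu$)} does nothing more than relabel which atom a virtual vertex ``points to'', while leaving the underlying collection of local graphs, their planarity, and the pairing of virtual vertices across pipes essentially intact. Write $\varphi':G\rightarrow H'$ for the instance obtained from $\varphi$ by \operation{Split($\nu$)}, where $\nu$ is replaced by $\nu_1,\dots,\nu_k$ and $C_1,\dots,C_k$ are the components of $G_\varphi(\nu)$. First I would record the effect on local graphs: since \operation{Split($\nu$)} does not touch $G$, for every atom $\mu\in V(H)$ with $\mu\neq\nu$ the local graph $G_{\varphi'}(\mu)$ is literally the same multigraph as $G_\varphi(\mu)$, with the same ordinary and virtual vertices; the only change is that a virtual vertex formerly labelled by a pipe $\mu\nu$ is now labelled by the pipe $\mu\nu_i$, where $i$ is the index of the component of $G_\varphi(\nu)$ containing the matching virtual vertex. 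For the new atoms, one checks directly from the definition of the local graph that $G_{\varphi'}(\nu_i)$ is isomorphic to $C_i$ (its ordinary vertices are $V(\varphi'^{-1}[\nu_i]) = V(\varphi^{-1}[\nu])\cap V(C_i)$, its virtual vertices are the virtual vertices of $C_i$, and its edges are exactly the edges of $C_i$), with the same identification of virtual vertices with incident pipes. Consequently the disjoint union of the local graphs of $\varphi'$ equals the disjoint union of the local graphs of $\varphi$ (with $G_\varphi(\nu)$ merely broken into its own connected components), and there is a bijection between the pipes of $H$ and the pipes of $H'$ under which the two matched virtual vertices of a pipe of $H$ correspond to the two matched virtual vertices of the associated pipe of $H'$.

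Given this correspondence, the equivalence of the two conditions in Observation~\ref{obs:atomic} is immediate. A graph is planar if and only if each of its connected components is planar, so all local graphs of $\varphi$ are planar precisely when all local graphs of $\varphi'$ are. For the rotation condition, note that a plane embedding of the disconnected graph $G_\varphi(\nu)$ restricts to a plane embedding of each $C_i$, and conversely, arbitrary plane embeddings of $C_1,\dots,C_k$ can be assembled into a plane embedding of $G_\varphi(\nu)$ by placing the components in pairwise disjoint discs; moreover the rotation at any (virtual) vertex is a purely local datum that is unchanged under passing between $G_\varphi(\nu)$ and the component containing that vertex. Hence the family of achievable rotation systems at the virtual vertices is the same for $\varphi$ and for $\varphi'$, and the ``opposite rotation'' constraints imposed by the pipes are in bijective correspondence. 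Therefore $\varphi$ admits a system of plane embeddings of its local graphs satisfying the constraints if and only if $\varphi'$ does, so by Observation~\ref{obs:atomic} the instance $\varphi$ is positive if and only if $\varphi'$ is positive, i.e.\ the two instances are equivalent.

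The only point requiring a little care --- and the ``main obstacle'', such as it is --- is the bookkeeping ensuring that $\varphi'$ is a well-defined simplicial map and that the pipe/virtual-vertex correspondence is exactly as claimed. In particular, if $uv\in E(G)$ with $\varphi(u)=\varphi(v)=\nu$, then $G_\varphi(\nu)$ contains an edge between the ordinary vertices $u$ and $v$, so they lie in the same component $C_i$ and $\varphi'(u)=\varphi'(v)=\nu_i$; likewise if $uv\in E(G)$ with $\varphi(u)=\nu$ and $\varphi(uv)=\nu\mu$, then $u$ and the virtual vertex of the pipe $\nu\mu$ lie in the same $C_i$, so $\varphi'$ maps $u$ to $\nu_i$ and $uv$ to the pipe $\nu_i\mu$, consistently. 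Thus no edge of $G$ ``straddles'' two of the new atoms, each pipe incident to $\nu$ becomes a pipe incident to exactly one $\nu_i$, and the correspondence used above is indeed a bijection. Beyond this routine verification the argument is entirely formal, which is why the lemma is stated as straightforward.
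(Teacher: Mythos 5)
Your proposal is correct and follows exactly the route the paper intends: the paper simply declares the lemma straightforward from Observation~\ref{obs:atomic}, and your argument spells out the routine details (local graphs of $\varphi'$ are those of $\varphi$ with $G_\varphi(\nu)$ split into its components, planarity and virtual-vertex rotations are component-local, and the pipe correspondence is a bijection). Nothing further is needed.
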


\begin{enumerate}[(1)]
\item[] \textbf{Preprocessing$(\varphi)$.} Input: an instance $\varphi:G\rightarrow H$ of atomic embeddability.
\item\label{it:SubI1}
For every pipe $\rho\in E(H)$ with $|\varphi^{-1}[\rho]|\leq 2$, apply \operation{Suppress($\rho$)}.
\item\label{it:SubI2}
For every atom $\nu\in V(H)$, where $G_\varphi(\nu)$ is disconnected, apply \operation{Split($\nu$)}.
\end{enumerate}

\begin{lemma}\label{lem:Pre}
For an instance $\varphi:G\rightarrow H$ of atomic embeddability of size $n$,
Preprocessing runs in $O(n)$ time and returns an equivalent normal instance $\varphi'$.
\end{lemma}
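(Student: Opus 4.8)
The plan is to prove Lemma~\ref{lem:Pre} by analyzing the two preprocessing steps separately, establishing correctness (equivalence and normality of the output) from the already-proven Lemmas~\ref{lem:supress-virtual} and~\ref{lem:split-atom}, and then bounding the running time by a careful amortized count. First I would argue \textbf{equivalence}: step~\ref{it:SubI1} applies \operation{Suppress} only to pipes $\rho$ with $|\varphi^{-1}[\rho]|\le 2$; since a virtual vertex of $G_\varphi(\mu)$ or $G_\varphi(\nu)$ corresponding to such a $\rho$ has degree $|\varphi^{-1}[\rho]|\le 2$, Lemma~\ref{lem:supress-virtual} applies and each call produces an equivalent instance. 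Step~\ref{it:SubI2} applies \operation{Split} only to atoms with disconnected local graph, and Lemma~\ref{lem:split-atom} shows each such call produces an equivalent instance. Since equivalence is transitive, the instance $\varphi'$ returned by Preprocessing is equivalent to $\varphi$.

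Next I would verify that $\varphi'$ is \textbf{normal}, i.e. that after both steps every virtual vertex has degree $\ge 3$ in its local graph and every local graph is connected. For the first normality condition: \operation{Suppress($\rho$)} removes $\rho$ from $H$ entirely (replacing the at most two edges of $G$ through $\rho$ by pendant edges ending at fresh degree-$1$ or degree-$2$ atoms), so it creates no new virtual vertex of positive degree in an old local graph, and in particular it does not decrease the degree of any surviving virtual vertex; after step~\ref{it:SubI1} is run to completion, no pipe with $\le 2$ edges remains, hence every virtual vertex has degree $\ge 3$. One must check that step~\ref{it:SubI2} preserves this: \operation{Split($\nu$)} only redistributes the virtual vertices of $G_\varphi(\nu)$ among the new atoms $\nu_1,\dots,\nu_k$ without changing the degree of any virtual vertex, and it creates no new pipes with fewer edges. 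For the second condition: after step~\ref{it:SubI2} every local graph is connected by construction; and again one checks that \operation{Split} does not disconnect any other local graph $G_\varphi(\mu)$ — it merely relabels, for each pipe $\mu\nu$, the single virtual vertex in $G_\varphi(\mu)$ as corresponding to some $\mu\nu_i$, which does not alter $G_\varphi(\mu)$ as a graph. Hence $\varphi'$ is normal. (A minor point to address: one should confirm that the freshly introduced pendant components and pendant atoms in \operation{Suppress} do not violate normality — the new atoms have degree $\le 2$ so their local graphs are single vertices or single edges, which are trivially connected and contain no virtual vertex of degree $< 3$ except possibly new ones; to handle this cleanly one observes that these new atoms can be regarded as absorbed, or simply notes that the statement of normality as used downstream tolerates such trivial components, or that \operation{Suppress} is only invoked finitely often and the pendant vertices are ordinary, not virtual, so they are irrelevant to both normality conditions.)

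Finally I would bound the \textbf{running time}. Let $n$ be the size of $\varphi$. Each individual call to \operation{Suppress($\rho$)} touches only $\rho$, its two endpoints, and the $\le 2$ edges of $G$ through $\rho$, so it runs in $O(1)$ time; it decreases $|E(H)|$ by one while increasing $|V(G)|+|E(G)|$ by at most a constant, so the total size stays $O(n)$ and the number of \operation{Suppress} calls is $O(n)$, giving $O(n)$ total for step~\ref{it:SubI1} (plus an initial $O(n)$ scan to find all eligible pipes, maintained e.g. via a queue of pipes with $\le 2$ preimage edges). For step~\ref{it:SubI2}, computing the connected components of each $G_\varphi(\nu)$ can be done by a single graph search; since $\sum_\nu \bigl(|V(G_\varphi(\nu))|+|E(G_\varphi(\nu))|\bigr) = O(n)$ (each edge of $G$ contributes to at most two local graphs, each pipe to exactly two virtual vertices), the total work to find all components is $O(n)$, and performing all the \operation{Split} operations — relabeling vertices of $G$ and redistributing incident pipes — is likewise $O(n)$ in aggregate. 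Summing, Preprocessing runs in $O(n)$ time.

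The main obstacle I expect is not any single deep step but rather the bookkeeping in the running-time argument: one must be careful that the operations do not cause the instance size to grow super-linearly (they do not, since \operation{Suppress} adds $O(1)$ per call and is called $O(n)$ times, and \operation{Split} does not increase $|V(G)|+|E(G)|$ and increases $|V(H)|+|E(H)|$ only by the number of components it splits off, which is $O(n)$ overall), and that the eligible pipes/atoms for each operation can be enumerated and the data structures updated within the claimed linear budget. A secondary subtlety, flagged above, is making sure the trivial pendant atoms produced by \operation{Suppress} are accounted for consistently with the definition of \emph{normal}; this is a matter of stating the invariant precisely rather than a real difficulty.
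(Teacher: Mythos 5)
Your proposal is correct and follows essentially the same route as the paper's own proof: equivalence via Lemmas~\ref{lem:supress-virtual} and~\ref{lem:split-atom}, normality from the facts that Step~\ref{it:SubI1} eliminates all virtual vertices of degree less than 3 while Step~\ref{it:SubI2} restores connectivity of local graphs without changing any degrees, and a linear-time accounting with $O(1)$ work per \operation{Suppress} and aggregate $O(n)$ work for all \operation{Split} calls. One minor misreading worth correcting: \operation{Suppress($\rho$)} creates no new atoms at all --- the new vertices $u',v'$ are \emph{ordinary} vertices of $G$ mapped to the existing atoms $\mu$ and $\nu$ --- so your flagged concern about ``pendant atoms'' is moot, and your own fallback remark (that the new vertices are ordinary, not virtual) already disposes of it.
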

\begin{proof}
By Lemmas~\ref{lem:supress-virtual} and~\ref{lem:split-atom}, the instance $\varphi'$ is equivalent to $\varphi$.
Step~\ref{it:SubI1} eliminates virtual vertices of degree less than 3, and Step~\ref{it:SubI2} does not change the degree of any vertex in local graphs.
Step~\ref{it:SubI2} splits the local graphs $G_\varphi(\nu)$, $\nu\in V(H)$, into connected components. Hence, $\varphi'$ is normal.
Step~\ref{it:SubI1} runs in $O(1)$ time for each pipe of degree less than 3.
Step~\ref{it:SubI2} runs in $O(m)$ time for every local graph $G_\varphi(\nu)$ with $m=m(\nu)$ edges; which yields an overall running time of $O(n)$.
\end{proof}

\paragraph{Data Structures.}
For a normal instance $\varphi:G\rightarrow H$, let $\mathcal{G}$ be the disjoint union of all local graphs $G_\varphi(\nu)$, $\nu\in V(H)$. We maintain the graphs $G$ and $\mathcal{G}$ by adjacency lists.
We maintain the set $V(H)$ of atoms implicitly: Each connected component in $\mathcal{G}$ corresponds to an atom $\nu\in V(H)$. We maintain the set $E(H)$ of pipes as follows: For every pipe $\rho\in E(H)$, we maintain two pointers to the two virtual vertices in $\mathcal{G}$ that correspond to $\rho$; and also maintain the set $\varphi^{-1}[\rho]\subset E(G)$ of edges mapped to $\rho$ in a doubly linked list. Furthermore, for each edge $uv\in \varphi^{-1}[\rho]$, with $\varphi(u)=\mu$ and $\varphi(v)=\nu$, we maintain a pointer to $\rho$, and to the edge in $G_\varphi(\mu)$ (resp., $G_\varphi(\nu)$) that joins the virtual vertex corresponding to $\rho$ and $u$ (resp., $v$).

For every connected component $G_\varphi(\nu)$ of $\mathcal{G}$, we maintain $G^-_\varphi(\nu)$ (i.e., the multigraph obtained by supressing vertices of degree 2), if $G_\varphi(\nu)$ is not a cycle, by adjacency lists. Furthermore, we maintain the block tree of $G^-_\varphi(\nu)$, which is a bipartite graph that represents incidences between cut vertices and blocks (i.e., maximal 2-connected components). For each block of $G^-_\varphi(\nu)$, we also maintain an SPQR decomposition tree introduced by Di Battista and Tamassia~\cite{DT89_spqr}, which is a hierarchical decomposition used for representing all 2-cuts and their bridges. For each vertex $v$ of $\mathcal{G}$, we maintain indicator variables that record whether $v$ is an ordinary or virtual vertex, whether it is a proper cut vertex or contained in a proper 2-cut.
At initialization, all these data structures can be computed in linear time in the size of $G$ and $H$.
The data structures can be updated in linear time if necessary. (Currently available dynamic data structures for planarity testing and SPQR-trees, with sublinear update times, support some but not all of our graph operations.)

As we shall see, whenever our algorithm creates a pipe of degree less than 3, it is immediately suppressed. If our algorithm modifies a graph $G_\varphi(\nu)$ in a way that it disconnects into components, then we assume that it immediately splits the corresponding atom $\nu$ as described above. In particular, our data structure supports the operation \operation{Split($\nu$)} in 0 time. In the remainder of the algorithm, we may assume that every instance of atomic embeddability is normal.

\subsection{Elementary Operations}
\label{ssec:operations}

In this section we describe operations used in our algorithm for a given instance $\phi:G\rightarrow H$ of atomic embeddibility. Each operation modifies the instance $\varphi$.
Each operation is local in the sense that it affects an atom $\nu$ and possibly one or two of its neighbors.
That is, the modifications incur changes in $G_\varphi(\nu)$, and possibly in $G_\varphi(\nu')$, for some of the neighbors $\nu'$ of $\nu$.

\begin{figure}[htbp]
\centering
\includegraphics[scale=1]{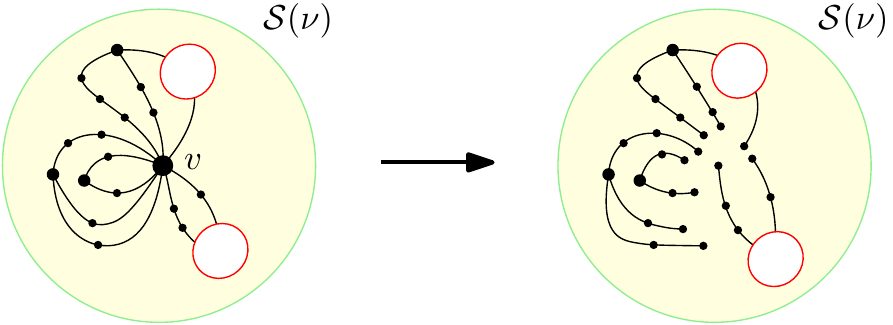}
\caption{An atomic embedding of $G$ on $\mathcal{S}(\nu)$ before and after operation \operation{Detach($v$)}. The operation turns an ordinary vertex $v$, the center of  p-star $G_\varphi(\nu)$, into $\deg(v)$ leaves.}
\label{fig:detach}
\end{figure}

The following operation turns an ordinary vertex $v$ into $\deg(v)$ leaves, see Fig.~\ref{fig:detach}.

\paragraph{\operation{Detach$(v)$}.}
Let $v$ be an ordinary vertex in a graph $G_\varphi(\nu)$ such that every $\{v\}$-bridge is a p-path
(that is, either $G_\varphi(\nu)$ is a p-star with center $v$, or $G_\varphi(\nu)$ is a p-path with a pole at $v$). Let $vu_1,\ldots, vu_{\deg(v)}$ denote the edges incident to $v$ in $G$. Remove $v$ and its incident edges from $G$. Then introduce $\deg(v)$ new vertices $v_1,\ldots, v_{\deg(v)}$ and add edges $u_iv_i$, for all $i\in [\deg(v)]$ to $G$. Finally, define $\varphi(v_iu_i)=\varphi(vu_i)$.

By Observation~\ref{obs:atomic}, the following is straightforward.

\begin{lemma}
\label{lem:split}
For an instance $\varphi:G\rightarrow H$ of atomic embeddability,
\operation{Detach($v$)} produces an equivalent instance $\varphi'$.
The operation can be implemented in $O(\deg(v))$ time.
\end{lemma}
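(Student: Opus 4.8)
The plan is to verify the two claims separately: equivalence of $\varphi'$ to $\varphi$, and the $O(\deg(v))$ running time. For equivalence, I would use the characterization in Observation~\ref{obs:atomic}: it suffices to show that $G_\varphi(\nu)$ (together with the constraints on virtual-vertex rotations) admits a planar embedding satisfying the pipe constraints if and only if the modified local graph $G_{\varphi'}(\nu)$ (now with $v$ replaced by $\deg(v)$ leaves $v_1,\dots,v_{\deg(v)}$, each of degree~$1$) does. The operation touches only the atom $\nu$: every other local graph $G_\varphi(\nu')$ is unchanged, and the only pipes affected are those at $\nu$, whose virtual-vertex edges $\overline{e}$ are merely reassigned from $v$ to some leaf $v_i$ but with the same multiset of edges of $G$ mapped to each pipe. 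So the only thing to check is that the local planarity-plus-rotation condition at $\nu$ is preserved.

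The key observation is that $v$ is the center of a p-star (or a pole of a p-path), so every $\{v\}$-bridge is a p-path attached to $v$ at a single edge $vu_i$; equivalently, $v$ is a cut vertex of $G_\varphi(\nu)$ (or $G_\varphi(\nu)$ is a p-path) whose incident edges lie in pairwise "independent" bridges except for the at most one non-separable interaction which the p-path/p-star structure rules out. Consequently $v$ has a \textbf{fixed} rotation in any planar embedding of $G_\varphi(\nu)$ only in a degenerate sense — more to the point, the cyclic order of the edges $vu_1,\dots,vu_{\deg(v)}$ around $v$ can be chosen \emph{arbitrarily} without affecting planarity, because each $u_i$ sits in its own bridge and the bridges can be arranged in the plane in any cyclic order around $v$. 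Detaching $v$ into leaves $v_i$ simply removes the vertex $v$ and splits each bridge off as its own connected component (one component per bridge, since the $\{v\}$-bridges become disjoint once $v$ is deleted); Step~\ref{it:SubI2}-style splitting then separates $\nu$ into several atoms, each carrying one former bridge. Planarity of each piece is inherited from planarity of $G_\varphi(\nu)$, and conversely a planar embedding of all the pieces can be glued back around a point to recover a planar embedding of $G_\varphi(\nu)$ realizing any prescribed cyclic order of the bridges. Since none of the virtual vertices in the p-paths had their rotation changed (a degree-$1$ leaf imposes no rotation constraint, and the virtual vertices interior to the p-paths are untouched), Observation~\ref{obs:atomic} gives that $\varphi$ is positive iff $\varphi'$ is positive.

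The main obstacle is being careful about the case where $v$ itself is a \emph{virtual} vertex versus an \emph{ordinary} vertex — but the statement restricts \operation{Detach} to an ordinary vertex, so $\overline{vu_i}\in E(G)$ really is an edge of $G$ incident to the vertex of $G$ corresponding to $v$, and introducing $v_1,\dots,v_{\deg(v)}$ as genuine new vertices of $G$ with $\varphi(v_i)=\nu$ is well-defined; one must also confirm the reassignment $\varphi(v_iu_i)=\varphi(vu_i)$ keeps $\varphi$ simplicial (it does, since incidences are preserved). A second minor point: after detaching, some resulting leaf components may be single subdivided edges or may cause pipes to drop below degree~$3$, but by the standing convention (stated just before Section~\ref{ssec:operations}) such pipes are immediately suppressed and disconnected local graphs immediately split, so normality is restored for free. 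For the running time: we list the $\deg(v)$ edges at $v$, delete $v$ and these edges, and create $\deg(v)$ new vertices and edges, updating the adjacency lists and the pointers from each $\overline{vu_i}$ to its pipe in $O(1)$ per edge; the block-tree and SPQR structures of the affected components are rebuilt in time linear in those components, which is $O(\deg(v))$ since the detached pieces are p-paths whose total size is $O(\deg(v))$. Hence the operation runs in $O(\deg(v))$ time, completing the proof.
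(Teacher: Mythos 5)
Your overall route is the one the paper intends: the paper's own proof is a one-line appeal to Observation~\ref{obs:atomic}, and your argument (forward direction by restricting/splitting the embedding, converse by gluing the detached tree components back around a single point, which preserves the rotations at every vertex other than the ordinary vertex $v$ and the new degree-one leaves) is exactly the ``straightforward'' verification being waved at. The essential point you correctly isolate is that $v$ is ordinary, so Observation~\ref{obs:atomic} places no constraint on its rotation, and each detached piece is a tree (a spider), so all its leaves lie on its unique face and can be pulled to a common point without disturbing the rotation at the opposite pole $w_j$ or at interior vertices.

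Two inaccuracies are worth fixing, though neither is fatal because your gluing sentence already carries the proof. First, a $\{v\}$-bridge that is a p-path with a pole at $v$ may attach to $v$ through \emph{several} parallel subdivided edges, so it is not true that ``each $u_i$ sits in its own bridge,'' and the claim that the cyclic order of $vu_1,\dots,vu_{\deg(v)}$ around $v$ can be chosen arbitrarily is false: for a p-star with two bridges, each consisting of two parallel subdivided edges, an interleaved order $a_1 b_1 a_2 b_2$ at $v$ is not planar (the cycle through $v$ and $w_a$ formed by the first bridge would have to separate the two paths to $w_b$). What you actually need, and what your reassembly argument delivers, is weaker: for any rotations realized at the other vertices after detaching, \emph{some} rotation at $v$ (bridges in blocks, each block reversing the order at its pole) makes the p-star planar. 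Second, the $O(\deg(v))$ bound should be justified by the updates to $G$, $\mathcal{G}$, and the pipe pointers (and, if desired, to the suppressed graphs $G^-_\varphi(\cdot)$, in which each bridge has constant size); the detached pieces themselves can contain many more than $O(\deg(v))$ subdivision vertices, so ``rebuild the block/SPQR trees in time linear in those components'' does not give $O(\deg(v))$ as stated---the paper's running-time analysis in Section~\ref{ssec:runtime} in any case budgets $O(N(\varphi^*))$ per operation for such recomputations.
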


In the following we define the operation of enclosing a bridge in $G_\varphi(\nu)$, see Fig.~\ref{fig:1cut}.
This operation is analogous to \emph{stretching of a local branch} in~\cite{C17_embed} except that we apply it in a more general setting.

\begin{figure}[htbp]
\centering
\includegraphics[scale=0.8]{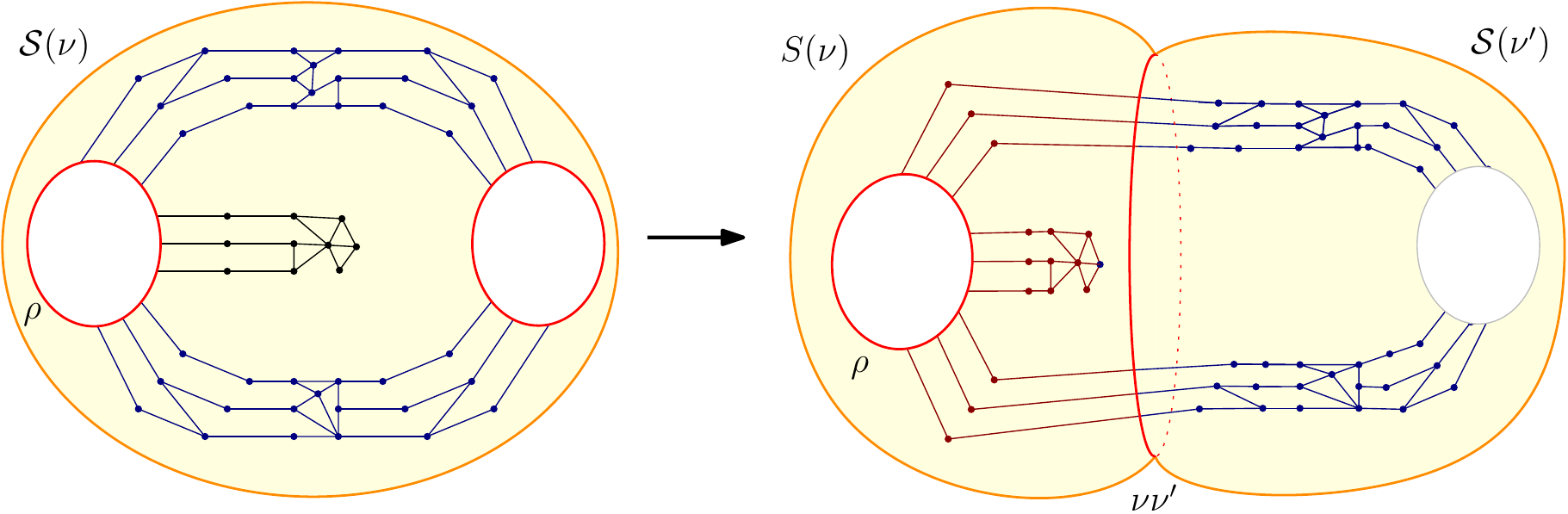}
\caption{An atomic embedding of $G$ on $\mathcal{S}(\nu)$  before and after operation \operation{Enclose($B$)}, where $B$ (colored blue) is a $\{v\}$-bridge of $G_\varphi(\nu)$ and $v$ is a virtual vertex corresponding to the pipe $\rho$.}
\label{fig:1cut}
\end{figure}

\paragraph{\operation{Enclose$(B)$}.}
We are given a $\{v_1,\ldots, v_k\}$-bridge $B$ in $G_\varphi(\nu)$.
The operation does not modify $G$ except for subdividing its edges.
We first describe the changes in $H$, and then the changes in the local graphs.
Create a new atom $\nu'$ and a new pipe $\nu\nu'$. Replace every pipe $\rho=\mu\nu$ that corresponds to a virtual vertex in $B\setminus \{v_1,\ldots, v_k\}$ with a new pipe $\mu\nu'$.  For every ordinary vertex $u\in V(B)\setminus \{v_1,\ldots, v_k\}$, set $\varphi(u)=\nu'$.
For every edge $e\in E(G)$, for which the pipe $\varphi(e)=\rho=\mu\nu$ has been replaced by $\rho'=\mu\nu'$, set $\varphi(e)=\rho'$.
If  $v_iu\in E(B)$, and $v_i$ or $u$ is a virtual vertex of $G_\varphi(\nu)$, then  subdivide $\original{v_iu}\in E(G)$ by a vertex $w$ and define $\varphi(w)$ as follows: If $v_i$ is virtual, then put $\varphi(w)=\nu$;
otherwise put $\varphi(w)=\nu'$.
Finally, update the definition of $\varphi$ on the edges of $B$ according to the value of $\varphi$ on the vertices of $G$
(this is uniquely determined since  $\nu\nu'$ is not a multiple pipe in $E(H)$).

\medskip
For the purpose of the running time analysis the effect of the operation on $G_\varphi(\nu)$ is that we move the subgraph induced by $B\setminus \{v_1,\ldots, v_k\}$ from $G_\varphi(\nu)$ into a new graph $G_\varphi(\nu')$, and introduce a virtual vertex corresponding to the pipe $\nu\nu'$ in both $G_\varphi(\nu)$ and $G_\varphi(\nu')$, whose degree is $\sum_{i=1}^k\deg_B(v_i)$. We will be often tacitly using the following lemma.

\begin{lemma}
\label{lem:enclose}
Given an instance of atomic embeddability $\varphi$,
an application of \operation{Enclose($B$)} results in an equivalent instance $\varphi':G'\rightarrow H'$.
The operation can be implemented in $O(\sum_{i=1}^k\deg_B(v_i))$ time.
\end{lemma}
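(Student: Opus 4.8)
The statement to prove is Lemma~\ref{lem:enclose}: that \operation{Enclose($B$)} produces an equivalent instance, and runs in $O(\sum_{i=1}^k\deg_B(v_i))$ time.

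\medskip

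\textbf{Proof plan.} The plan is to verify equivalence via Observation~\ref{obs:atomic}, by exhibiting a bijection between atomic embeddings of $\varphi$ and of $\varphi'$. The key point is that the operation merely \emph{relocates} the interior of the bridge $B$ onto a freshly created atom $\nu'$, attached to $\nu$ by a single new pipe $\nu\nu'$, while the edges of $G$ itself are unchanged except for subdivisions (which never affect the existence of an atomic embedding, since degree-2 vertices have fixed rotation). So topologically the thickening $\mathcal{H}'$ is obtained from $\mathcal{H}$ by the following local surgery near $\mathcal{S}(\nu)$: cut out an open disc from $\mathcal{S}(\nu)$ whose boundary circle separates the part of the drawing coming from $B\setminus\{v_1,\dots,v_k\}$ (together with the holes of the pipes that got reassigned to $\nu'$) from the rest of $\mathcal{S}(\nu)$, glue in a new annulus, and cap the far end with $\mathcal{S}(\nu')$ carrying exactly that part of the drawing. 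Since $\mathcal{S}(\nu')$ is a sphere-with-holes and the new pipe $\nu\nu'$ contributes one hole on each side, this surgery does not change the homeomorphism type of the thickening (we are just subdividing one sphere-with-holes into two spheres-with-holes joined along a circle) — it is the standard connected-sum-with-nothing move. Hence any atomic embedding $\mathcal{E}$ of $G$ in $\mathcal{H}$ can be pushed to one in $\mathcal{H}'$ and vice versa, with $G$ (up to subdivision) embedded the same way.

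\medskip

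Concretely I would argue on local graphs. Let $\nu'$, $\nu\nu'$, and the reassignments be as in the definition. First check that $\varphi'$ is well-defined: every reassigned pipe $\mu\nu'$ has $\mu\neq\nu'$ (as $\nu'$ is new), and the subdivision vertices $w$ placed on edges $\overline{v_iu}$ receive a consistent image; also the new pipe $\nu\nu'$ is simple, so the edge images of $B$'s edges are uniquely determined by the vertex images. Next, compare the local graphs. For atoms other than $\nu$, $\nu'$, the local graph is unchanged except that some virtual vertices now point to $\nu'$ instead of $\nu$ — but the graph, and in particular the rotation constraints, are identical. The local graph $G_{\varphi'}(\nu)$ is obtained from $G_\varphi(\nu)$ by deleting the interior of $B$ and attaching a single new virtual vertex $v^\ast$ (for $\nu\nu'$) to the vertices $v_1,\dots,v_k$ with $\deg_B(v_i)$ parallel edges each (so its degree is $\sum_i\deg_B(v_i)$); this is exactly the "contract $B$ to a vertex" operation, and it is classical that in a planar graph contracting a bridge $B$ of a separating set $\{v_1,\dots,v_k\}$ to a single vertex preserves planarity and the achievable rotations at every vertex outside $B$, and conversely. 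The local graph $G_{\varphi'}(\nu')$ is the subgraph induced by $B\setminus\{v_1,\dots,v_k\}$ together with one new virtual vertex $v^{\ast\ast}$ (again for $\nu\nu'$) attached to the former $v_i$'s; a planar embedding of $G_\varphi(\nu)$ restricts to one of this graph and back. The only subtlety is the compatibility constraint of Observation~\ref{obs:atomic} across the \emph{new} pipe $\nu\nu'$: I must check that for any planar embedding of $G_\varphi(\nu)$, the rotations induced at $v^\ast$ in $G_{\varphi'}(\nu)$ and at $v^{\ast\ast}$ in $G_{\varphi'}(\nu')$ can be made opposite — but these two vertices are literally the two sides of a single circle bounding the disc we cut, so a planar embedding of $G_\varphi(\nu)$ induces opposite rotations by construction, and conversely a pair of plane embeddings of $G_{\varphi'}(\nu)$ and $G_{\varphi'}(\nu')$ with opposite rotations at $v^\ast, v^{\ast\ast}$ can be glued along that circle into a plane embedding of $G_\varphi(\nu)$. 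This establishes the biconditional in Observation~\ref{obs:atomic} for $\varphi$ versus $\varphi'$, hence equivalence.

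\medskip

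For the running time, every modification touches only: the $\sum_i\deg_B(v_i)$ edges of $G_\varphi(\nu)$ between $\{v_1,\dots,v_k\}$ and $B$'s interior (each gets its underlying $G$-edge subdivided and its $\varphi$-image updated, $O(1)$ each); the reassignment of the $O(\sum_i\deg_B(v_i))$ pipes and ordinary vertices of $B\setminus\{v_1,\dots,v_k\}$, which in our data structure means moving the component for $B$'s interior out of $G_\varphi(\nu)$'s adjacency structure into a new component $G_\varphi(\nu')$ and repointing the affected pipe records — again $O(1)$ per affected edge/pipe; and creating one atom and one pipe with its two virtual-vertex pointers, $O(1)$. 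Summing gives $O(\sum_{i=1}^k\deg_B(v_i))$. I do not need to rebuild the SPQR/block-tree structures charged to untouched components, and the ones on the affected parts are recomputed within the same bound (or lazily, as noted in Section~\ref{ssec:ds}).

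\medskip

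\textbf{Main obstacle.} The routine parts (well-definedness, the data-structure bookkeeping) are straightforward; the one place needing genuine care is the rotation/compatibility argument across the new pipe — i.e., making fully precise the claim that "cutting $\mathcal{S}(\nu)$ along a circle enclosing $B$'s interior and inserting the pipe $\nu\nu'$" induces a bijection on atomic embeddings that respects \emph{all} the opposite-rotation constraints, both the preexisting ones on the reassigned pipes and the new one on $\nu\nu'$. I expect this to follow cleanly from the observation that the new virtual vertices on the two sides of $\nu\nu'$ are, by construction, the two sides of one and the same Jordan curve in the plane embedding of $G_\varphi(\nu)$, so "opposite rotation" there is automatic in one direction and is exactly the gluing condition in the other; but spelling out that the bridge $B$ could itself contain virtual vertices (so $B$'s interior is genuinely spread over several of $\nu$'s former neighbors) is where one must be careful that nothing in the global instance outside $\mathcal{S}(\nu)\cup\mathcal{S}(\nu')$ changes.
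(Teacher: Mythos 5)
Your proof is correct and follows essentially the same route as the paper's: both directions are handled via Observation~\ref{obs:atomic}, with the forward direction contracting $B\setminus\{v_1,\ldots,v_k\}$ to the new virtual vertex in $G_{\varphi'}(\nu)$ and identifying $v_1,\ldots,v_k$ (on a common face) to form $G_{\varphi'}(\nu')$, and the reverse direction gluing $\mathcal{S}(\nu)$ and $\mathcal{S}(\nu')$ back along the single new pipe. Your surface-surgery picture and the compatibility check at the new pipe $\nu\nu'$ are exactly the content of the paper's argument, and the running-time accounting matches.
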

\begin{proof}
The equivalence is a consequence of Observation~\ref{obs:atomic}.

For the forward direction, given the set of embeddings $\mathcal{E}_{\mu}$ of  $G_\varphi(\mu)$, $\mu\in V(H)$, inherited from an atomic embedding of $G$, we construct embeddings  $\mathcal{E}_{\mu}'$ of $G_{\varphi'}(\mu)$, $\mu\in V(H')$,
inherited from an atomic embedding of $G'$ as follows.
For $\mu\notin \{\nu,\nu'\}$, we put $\mathcal{E}_\mu'=\mathcal{E}_\mu$.
The embedding of  $\mathcal{E}_{\nu'}'$ is obtained from $\mathcal{E}_\nu|_B$ by identifying $v_1,\ldots, v_k$, which are incident to a common face, thereby turning them into a single virtual vertex corresponding to the pipe $\nu\nu'$.
Finally, $\mathcal{E}_{\nu}'$ is obtained from $\mathcal{E}_\nu$ by contracting $B\setminus \{v_1,\ldots, v_k\}$ into a single virtual vertex corresponding to the pipe $\nu\nu'$.

For the opposite direction, given the set of embeddings $\mathcal{E}_{\mu}'$ of  $G_\varphi(\mu)$, $\mu\in V(H')$, inherited from an atomic embedding of $G'$, we construct embeddings  $\mathcal{E}_{\mu}$ of $G_{\varphi}(\mu)$, $\mu\in V(H)$,
inherited from an atomic embedding of $G$ as follows.
For every $\mu\in V(H)\setminus \{\nu\}$, we put $\mathcal{E}_\mu=\mathcal{E}_\mu'$.
Finally, $\mathcal{E}_{\nu}$ is obtained from the atomic embedding of $G'$ on $\mathcal{S}(\nu)\cup \mathcal{S}(\nu')$ by filling the holes corresponding to pipes, except for $\nu\nu'$, and contracting the fillings to points.
\end{proof}

\begin{figure}[htbp]
\centering
\includegraphics[width=\textwidth]{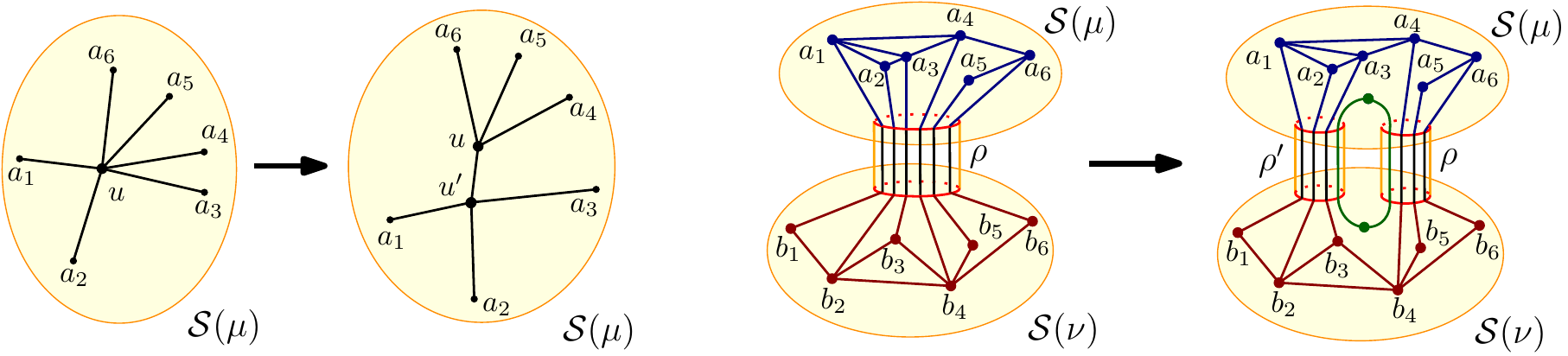}
\caption{An atomic embedding of $G$ on $\mathcal{S}(\nu)$  before and after applying  \operation{Stretch($u,\{uv_1,uv_2,uv_3\}$)}. Vertex $u$ is either ordinary (left) or virtual (right).
If $u$ is a virtual vertex, it corresponds to a pipe $\rho=\mu\nu$.}
\label{fig:rigid}
\end{figure}

In the following we define the operation that replaces a vertex $u$ in $G_\varphi(\mu)$ by an edge $uu'$, and distributes the edges incident to $u$ among $u$ and $u'$. The operation produces an equivalent instance if the rotation of $u$ is fixed, see Fig.~\ref{fig:rigid}.

\paragraph{\operation{Stretch$(u,E_u)$}.}
We are given a vertex $u$ in $G_\varphi(\mu)$ and a set $E_u=\{ua_1,\ldots, ua_\ell\}$ of edges incident to $u$ where $0<|E_u|< \deg(u)$.
We distinguish between two cases, depending on whether $u$ is an ordinary or a virtual vertex.

If $u$ is ordinary, then remove the edges $ua_1,\ldots, ua_\ell$, introduce a new vertex $u'$ and a new edge $uu'$, as well as new edges $u'a_1,\ldots, u'a_\ell$ in $G_\varphi(\mu)$.

If $u$ is virtual, then let $\rho=\mu\nu\in E(H)$ be the pipe corresponding to the virtual vertex $u$ in $G_\varphi(\mu)$ and $v$ in $G_\varphi(\nu)$; and assume that $\original{ua_i}=a_ib_i$, for $i\in [\ell]$,
where  $b_1,\ldots, b_\ell$ are vertices in $G_\varphi(\nu)$. Do the following:
Introduce a new pipe $\rho'=\mu\nu$ in $H$ corresponding to new virtual vertices $u'$ in $G_\varphi(\mu)$ and $v'$ in $G_\varphi(\nu)$;  introduce a new connected component in $G$, which is a cycle with two vertices and two parallel edges $f,f'$ forming a multiple edge such that $\varphi(f)=\rho$ and $\varphi(f')=\rho'$; and finally, modify $\varphi$ by setting $\varphi(a_i b_i)=\rho'$, for all $i\in [\ell]$. In local graphs this corresponds to replacing $ua_i$ and $vb_i$ with $u'a_i$ and $v'a_i$, respectively,
for all $i\in [\ell]$; and inserting two new edges $uv$ and $u'v'$ in the two local graphs, respectively, and subdividing each with an ordinary vertex.

\medskip
For the purposes of the running time analysis (below) the effect of the operation can be seen as the replacement of $u$ by an edge whose two endpoints have degrees $\ell+1$ and $\deg(u)-\ell+1$, respectively (hence the sum of their degrees equals $\deg(u)+2$). If $u$ is a virtual vertex (i.e., corresponds to a pipe between two atoms), then both virtual vertices corresponding to the same pipe go through these changes.
By Observation~\ref{obs:atomic}, the following is straightforward.

\begin{lemma}
\label{lem:stretch}
Given an instance of atomic embeddability $\varphi$ such that the edges in $E_u$ are incident to $u$, and are consecutive in the rotation of vertex $u$ in every embedding of $G_\varphi(\mu)$ inherited from an atomic embedding of $G$, then the operation \operation{Stretch($u,E_u$)} produces an equivalent instance.
\end{lemma}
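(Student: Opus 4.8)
Lemma \ref{lem:stretch} asserts that if the edges in $E_u$ are consecutive in the rotation at $u$ in every plane embedding of $G_\varphi(\mu)$ inherited from an atomic embedding of $G$, then $\operation{Stretch}(u,E_u)$ produces an equivalent instance. Here's how I'd prove it.

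The plan is to use Observation~\ref{obs:atomic} to reduce everything to statements about plane embeddings of the local graphs with prescribed rotations at virtual vertices. So I need to show: the family of local graphs $\{G_\varphi(\xi)\}$ admits compatible embeddings (planar, with opposite rotations at paired virtual vertices) if and only if the family $\{G_{\varphi'}(\xi)\}$ for the stretched instance does. Both directions should be a matter of constructing one embedding from the other, locally — only $G_\varphi(\mu)$ (and, in the virtual case, also $G_\varphi(\nu)$) change, so embeddings of all other local graphs carry over verbatim.

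First I'd handle the forward direction (positive $\varphi$ $\Rightarrow$ positive $\varphi'$). Take compatible plane embeddings $\mathcal{E}_\xi$ of all $G_\varphi(\xi)$. In $\mathcal{E}_\mu$, the rotation at $u$ has the edges of $E_u$ appearing consecutively (by hypothesis); so I can "split" $u$ into $u$ and $u'$ joined by a new edge $uu'$ drawn in the small face-corner between the last edge of $E_u$ and the first edge outside $E_u$, pushing the $E_u$-edges onto $u'$. This is a standard vertex-split that preserves planarity and the rotation system elsewhere; call the result $\mathcal{E}_\mu'$. If $u$ is ordinary, we're done with $\mu$, and all other local graphs are unchanged, so the new family is compatible. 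If $u$ is virtual, corresponding to pipe $\rho=\mu\nu$, I do the mirror-image split at $v$ in $\mathcal{E}_\nu$: since the rotation at $v$ is opposite to that at $u$, the edges $\{\overline{ua_i}\}$ appear consecutively at $v$ as well, so I split $v$ into $v,v'$ the same way; I then add the new cycle-component of $G$ consisting of parallel edges $f,f'$ with $\varphi(f)=\rho,\varphi(f')=\rho'$, which in local graphs becomes the two short subdivided edges $uv$ and $u'v'$ — drawn into the newly created face corners at $u,v$ and $u',v'$ respectively, keeping everything planar. One then checks the rotations at the new virtual pairs $(u,v)$ and $(u',v')$ are still opposite, which is immediate from the mirrored construction.

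For the reverse direction, given compatible embeddings $\mathcal{E}_\xi'$ of the stretched instance, I contract the edge $uu'$ in $\mathcal{E}_\mu'$ — contracting an edge in a plane graph yields a plane graph, and the rotation at the merged vertex is obtained by splicing the two rotations at the splice point, which recovers exactly a rotation in which $E_u$ is a consecutive block, hence a legitimate $\mathcal{E}_\mu$. In the virtual case I also contract $vv'$ in $\mathcal{E}_\nu'$ and delete the (now-redundant) components coming from $f,f'$; oppositeness of rotations at the merged virtual vertices $u\leftrightarrow v$ is preserved under the simultaneous mirrored contraction. Applying Observation~\ref{obs:atomic} in both directions gives that $\varphi$ is positive iff $\varphi'$ is, i.e.\ the instances are equivalent. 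The only genuinely delicate point — and the one I'd be most careful about — is the bookkeeping in the virtual case: verifying that the two newly inserted short paths $uv$ and $u'v'$ can always be routed in the right face corners so that planarity is kept \emph{and} the opposite-rotation condition holds for \emph{both} new pipes $\rho$ and $\rho'$ simultaneously; this is where the hypothesis that $E_u$ is consecutive in \emph{every} inherited embedding (not just one) is used, to ensure the construction is reversible rather than just realizable in one direction.
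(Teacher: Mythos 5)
Your proof is correct and takes essentially the approach the paper intends: the paper states this lemma without an explicit proof (``By Observation~\ref{obs:atomic}, the following is straightforward''), and your split/contract argument on the inherited plane embeddings --- mirrored at the partner virtual vertex in the virtual case --- is precisely the standard verification of that claim. One small point of emphasis: the ``every'' in the hypothesis is what the \emph{forward} direction needs (you must handle whichever atomic embedding happens to exist), whereas the contraction direction is unconditional; this does not affect the correctness of your argument.
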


\begin{corollary}
\label{cor:fixed}
For an instance $\varphi:G\rightarrow H$ of atomic embeddability,
if a vertex $u\in V(G_\varphi(\nu))$ has a fixed rotation,
in which the edges in $E_u$ are consecutive,
then \operation{Stretch($u,E_u$)} produces an equivalent instance.
\end{corollary}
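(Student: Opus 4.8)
The plan is to derive Corollary~\ref{cor:fixed} directly from Lemma~\ref{lem:stretch} by checking that the hypothesis of the corollary implies the hypothesis of the lemma. Recall that Lemma~\ref{lem:stretch} requires the edges of $E_u$ to be consecutive in the rotation of $u$ in \emph{every} embedding of $G_\varphi(\nu)$ inherited from an atomic embedding of $G$, whereas the corollary assumes that $u$ has a \emph{fixed rotation} in which the edges of $E_u$ are consecutive. So the whole content of the corollary is the observation that, in this situation, the second condition entails the first.

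First I would dispose of the degenerate case: if $\varphi$ is negative there is no atomic embedding of $G$, so the hypothesis of Lemma~\ref{lem:stretch} holds vacuously and the claim follows. Hence assume $\varphi$ is positive. Since the instance is normal, $G_\varphi(\nu)$ is connected, and since $u$ is assumed to have a fixed rotation (a notion defined for vertices of planar graphs, consistent with Observation~\ref{obs:atomic}), $G_\varphi(\nu)$ is planar. Thus the fixed rotation of $u$ is well defined, i.e., the rotation of $u$ is the same in every plane embedding of $G_\varphi(\nu)$ up to the global choice of orientation.

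Next I would note that every embedding $\mathcal{E}_\nu$ of $G_\varphi(\nu)$ inherited from an atomic embedding of $G$ is a plane embedding: by definition $\mathcal{E}_\nu$ is obtained from the restriction of the atomic embedding to $\mathcal{S}(\nu)$ by filling the holes of $\mathcal{S}(\nu)$ with discs and contracting them to points, which turns $\mathcal{S}(\nu)$ into a sphere. Consequently the rotation of $u$ in $\mathcal{E}_\nu$ coincides with the fixed rotation of $u$ up to reversal of the cyclic order. Since reversal of a cyclic order preserves which subsets of elements are consecutive, and $E_u$ is by assumption consecutive in the fixed rotation, $E_u$ is consecutive in the rotation of $u$ in $\mathcal{E}_\nu$. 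As this holds for every inherited embedding, the hypothesis of Lemma~\ref{lem:stretch} is met, and the conclusion follows.

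I do not expect a genuine obstacle here: the only point requiring care is the ``up to the choice of orientation'' clause in the definition of a fixed rotation, together with the elementary fact that consecutiveness of a set of edges in a cyclic order is invariant under reversing that order — this is precisely what makes the reduction to Lemma~\ref{lem:stretch} go through. The remainder is bookkeeping about the meaning of an inherited embedding and the normality assumption.
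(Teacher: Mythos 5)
Your proposal is correct and matches the paper's approach: the paper states Corollary~\ref{cor:fixed} as an immediate consequence of Lemma~\ref{lem:stretch}, with no separate proof, precisely because a fixed rotation forces the rotation of $u$ in every inherited (plane) embedding to agree with the fixed one up to reversal, and consecutiveness of $E_u$ is invariant under reversal. Your write-up simply makes this routine reduction explicit.
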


The operation of contraction that follows is applied to an edge $\rho=\nu\mu$ of $H$ and it produces an equivalent instance if each of $G_\varphi(\nu)$ and $G_\varphi(\mu)$ is a p-star or p-path.

\paragraph{\operation{Contract$(\rho)$}.}
We are given a pipe $\rho=\mu\nu$ such that $\rho$ is the only pipe between $\mu$ and $\nu$.
Contract the pipe $\mu\nu$ in $H$ into an atom $\langle\mu\nu\rangle$ and change $\varphi$ accordingly (that is, put $\varphi(u)=\langle\mu\nu\rangle$ for all the vertices mapped by $\varphi$ to $\mu$ or $\nu$).
Let $\varphi'$ denote the resulting instance. Note that $G_{\varphi'}(\langle\mu\nu\rangle)$ might be disconnected,
in which case operation \operation{Split($\langle\mu\nu\rangle$)} is automatically applied
to obtain a normal instance, as explained in Section~\ref{ssec:ds}.
Since $\rho$ is the only pipe between $\mu$ and $\nu$ the operation does not introduce a loop in $H$.

Several incarnations of the following lemma, which is a consequence of Belyi's theorem~\cite{B83_self}, were proved in related papers; see for example, \cite[Lemma~3.2]{AFT19+_weak}, \cite[Claim~7]{FK17_htsocg}, or \cite[Lemma~6]{FKMP15}.

\begin{lemma}
\label{lem:Belyi}
Let $\mu\nu\in E(H)$ be a pipe 
such that
either (i) both $G_\varphi(\mu)$ and $G_\varphi(\nu)$ are p-stars,
or (ii) $G_\varphi(\mu)$ or $G_\varphi(\nu)$ is a p-path;
and in both cases, $\rho$ corresponds to vertices $u$ and $v$ of maximum degree in $G_\varphi(\mu)$ and $G_\varphi(\nu)$, respectively. Then \operation{Contract($\mu\nu$)} produces an equivalent instance $\varphi'$.
\end{lemma}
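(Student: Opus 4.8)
The plan is to verify the hypotheses of Observation~\ref{obs:atomic} are preserved by \operation{Contract($\mu\nu$)} under conditions (i) or (ii); i.e., I want to show that a compatible family of plane embeddings of the local graphs $G_\varphi(\lambda)$, $\lambda\in V(H)$, exists if and only if a compatible family of plane embeddings of the local graphs $G_{\varphi'}(\lambda)$, $\lambda\in V(H')$, exists, where $H'=H$ with $\mu\nu$ contracted. For all atoms $\lambda\notin\{\mu,\nu\}$ the local graph is untouched, so all the work is local to $\mu$ and $\nu$: one must understand $G_{\varphi'}(\cluster{\mu\nu})$, which (after the automatic \operation{Split}) is obtained from $G_\varphi(\mu)$ and $G_\varphi(\nu)$ by deleting the two virtual vertices $u,v$ corresponding to $\rho$ and reconnecting, through the edges $\original{ua_i}=a_ib_i$ of $\varphi^{-1}[\rho]$, each former neighbour $a_i$ of $u$ in $G_\varphi(\mu)$ to the corresponding former neighbour $b_i$ of $v$ in $G_\varphi(\nu)$.

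First I would set up the combinatorial picture in case (ii), say $G_\varphi(\nu)$ is a p-path with poles $v$ and $v_0$, where $v$ is the maximum-degree vertex corresponding to $\rho$. Then $G_\varphi(\nu)$ consists of $\deg(v)$ internally disjoint subdivided edges from $v$ to $v_0$, so it is planar with a \emph{fixed} rotation at $v$ (up to reflection) once we fix the cyclic order of the $v$--$v_0$ parallel paths; moreover, in any plane embedding the rotation at $v$ is the reverse of the rotation at $v_0$. The key point is that the p-path imposes \emph{no constraint} on the rotation at $v$ beyond a choice of cyclic order of the edges, and after deleting $v$ from $G_\varphi(\nu)$ the remaining neighbours $b_1,\dots,b_\ell$ of $v$ sit, in this cyclic order, on a common face (the outer face) of what remains. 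Therefore, in $G_{\varphi'}(\cluster{\mu\nu})$, splicing these $b_i$'s into $G_\varphi(\mu)$ in place of $u$ is exactly the operation of replacing the virtual vertex $u$ of $G_\varphi(\mu)$ by a planar ``gadget'' (the punctured p-path of $\nu$) that can realize any cyclic order of the edges at $u$ and whose insertion preserves planarity; hence $G_{\varphi'}(\cluster{\mu\nu})$ is planar iff $G_\varphi(\mu)$ is planar, and the attainable rotations at the affected edges, as seen from the rest of $H$ through the other virtual vertices, are in bijection with those of $G_\varphi(\mu)$. This is where I would cite Belyi's theorem (as in~\cite[Lemma~3.2]{AFT19+_weak}): it is the statement that a ``necklace'' of parallel paths can be threaded through a sphere with holes so as to realize any prescribed rotation, which is precisely what lets us absorb the p-path of $\nu$ into $\mathcal{S}(\mu)$ while matching rotations on the remaining pipes. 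The forward direction (a family of embeddings for $\varphi$ yields one for $\varphi'$) is the easier half: fill the hole of $\mathcal{S}(\nu)$ corresponding to $\rho$ in the inherited embedding, glue $\mathcal{S}(\mu)\cup\mathcal{S}(\nu)$ along $\rho$, and contract, observing that opposite rotations at $u$ and $v$ is exactly the compatibility needed for this gluing to yield a planar $G_{\varphi'}(\cluster{\mu\nu})$; the converse direction is where the p-path/p-star structure is used, to argue that a given plane embedding of $G_{\varphi'}(\cluster{\mu\nu})$ can be ``cut'' along a suitable cycle into compatible embeddings of $G_\varphi(\mu)$ and $G_\varphi(\nu)$.

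Next I would handle case (i), where both $G_\varphi(\mu)$ and $G_\varphi(\nu)$ are p-stars and $\rho$ corresponds to their centers $u$ and $v$. Now $u$ is a cut vertex of $G_\varphi(\mu)$ whose bridges are p-paths rooted at $u$, and similarly for $v$; the subtlety is that a p-star has \emph{more} flexibility in the rotation at its center than a p-path, but the center of a p-star, like the pole of a p-path, can still realize an arbitrary cyclic order of its incident edges (any permutation of the bridges, and within each bridge any cyclic order of its parallel paths). Deleting $u$ from $G_\varphi(\mu)$ leaves a disjoint union of (punctured) p-paths, one per bridge; the same for $v$; and $G_{\varphi'}(\cluster{\mu\nu})$ glues each $a_i$ to its mate $b_i$. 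Again Belyi's lemma gives that the composite is planar and that the set of rotations realizable on the edges leaving $\cluster{\mu\nu}$ through the other pipes is exactly the ``product'' of those realizable in $G_\varphi(\mu)$ and in $G_\varphi(\nu)$ — since each side can be rotated freely at its center, there is no new obstruction. I expect \textbf{the main obstacle} to be the converse direction in case (i): from a single plane embedding of $G_{\varphi'}(\cluster{\mu\nu})$ one must reconstruct compatible plane embeddings of $G_\varphi(\mu)$ and $G_\varphi(\nu)$, which amounts to finding a closed curve in the embedding that separates the ``$\mu$-part'' from the ``$\nu$-part'' and meets the graph only in the $\ell$ splice points $a_i$ (in the right cyclic order, and inducing opposite rotations on the two sides). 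One must argue such a separating curve exists — intuitively because each $a_ib_i$ edge is a bare edge (the p-paths are subdivided, so each $\original{ua_i}$ can be assumed to be a single edge after suppression) and the two ``halves'' are each connected p-path unions attached only at the $a_i$'s — and here I would lean on the block/SPQR structure of p-stars together with the cited incarnations of this lemma (\cite[Claim~7]{FK17_htsocg}, \cite[Lemma~6]{FKMP15}) rather than re-deriving it. Finally, I would note the degenerate bookkeeping: $\rho$ is the unique pipe between $\mu$ and $\nu$, so contracting introduces no loop in $H$; and if $G_{\varphi'}(\cluster{\mu\nu})$ is disconnected, the automatic \operation{Split} (Lemma~\ref{lem:split-atom}) keeps us normal, so equivalence is not affected.
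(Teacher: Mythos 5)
Your proposal is correct and follows essentially the same route as the paper: the forward direction by gluing $\mathcal{S}(\mu)\cup\mathcal{S}(\nu)$, and the converse by producing, in case (i), a separating Jordan curve through the spliced bipartite graph (this is exactly what Belyi's theorem, cited as a black box in the paper as well, supplies) and, in case (ii), by observing that the contracted local graph is just $G_\varphi(\nu)$ with the edges at $v$ subdivided, so the rotation at the surviving pole can be imposed back on $v$ and reversed on $u$. The only cosmetic difference is that the paper does not need Belyi's theorem at all in case (ii), where your "gadget replacement" view reduces to the subdivision observation.
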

\begin{proof}
Denote by $\varphi':G\rightarrow H'$ the map returned by \operation{Contract($\mu\nu$)}.
First assume that $\varphi:G\rightarrow H$ is atomic embeddable. Then there exists an atomic embedding $\mathcal{E}:G\rightarrow \mathcal{H}$ (where every vertex $a\in V(G)$ is embedded in $\mathcal{S}(\varphi(a))$; and every edge $ab\in E(G)$ is embedded as a Jordan arc in $\mathcal{S}(\varphi(a))\cup\mathcal{S}(\varphi(b))$ as specified in the definition of atomic embedding). Let $\mathcal{S}(\langle\mu\nu\rangle)=\mathcal{S}(\mu)\cup\mathcal{S}(\nu)$.
Then the thickening $\mathcal{H}'$ of $H'$ equals $\mathcal{H}$, and
the embedding $\mathcal{E}:G\rightarrow \mathcal{H}=\mathcal{H}'$ witnesses that $\varphi':G\rightarrow H'$ is atomic embeddable.

Conversely, assume that $\varphi':G\rightarrow H'$ is atomic embeddable. Let $\mathcal{E}':G\rightarrow \mathcal{H}'$ be an atomic embedding.
Consider the restriction of $\mathcal{E}':G\rightarrow \mathcal{H}'$ on the surface $\mathcal{S}(\langle\mu\nu\rangle)$. Filling the holes of $\mathcal{S}(\langle\mu\nu\rangle)$ with discs, and then contract them to points, to obtain an embedding of $G_{\varphi'}(\langle\mu\nu\rangle)$ on the sphere ${S}^2$.

First, assume that (i) both $G_\varphi(\mu)$ and $G_\varphi(\nu)$ are p-stars:
$G_\varphi(\mu)$ is the union of internally vertex disjoint paths between $u$ and a vertex set $V_a$,
and similarly $G_\varphi(\nu)$ is the union of internally vertex disjoint paths between $v$ and a vertex set $V_b$.
Consequently, $G_{\varphi'}(\langle\mu\nu\rangle)$ is the union of internally vertex disjoint paths between vertices in $V_a$ and $V_b$. (Note that $G_{\varphi'}(\langle\mu\nu\rangle)$ need not be connected.) By suppressing the internal vertices of the paths between $V_a$ and $V_b$, we obtain an embedding of a bipartite multigraph $G^-_{\varphi'}(\langle\mu\nu\rangle)$ with partite sets $V_a$ and $V_b$ on $S^2$.

By Belyi's theorem~\cite{B83_self}, there exists a Jordan curve $\beta:S^1\rightarrow {S}^2$ that intersects every edge of $G^-_{\varphi'}(\langle\mu\nu\rangle)$ in exactly one point, and the intersection is transversal. The curve $\beta$ partitions ${S}^2$ into two parts, $A$ and $B$. We can subdivide the edges of $G^-_{\varphi'}(\langle\mu\nu\rangle)$ to obtain an embedding of $G_{\varphi'}(\langle\mu\nu\rangle)$ on a sphere such that the curve $\beta$ crosses an edge $e\in E(G_{\varphi'}(\langle\mu\nu\rangle))$ if and only if $e\in \varphi^{-1}[\rho]$. Consequently, by contracting $A$ (resp., $B$) into a vertex $v$ (resp., $u$), we obtain an embedding of $G_\varphi(\nu)$ (resp., $G_\varphi(\mu)$) on a sphere, where the vertices $u$ and $v$ have opposite rotations.
Observation~\ref{obs:atomic} now implies that $\varphi:G\rightarrow H$ is atomic embeddable.

Next assume that (ii) $G_\varphi(\mu)$ or $G_\varphi(\nu)$ is a p-path:
Without loss of generality, assume that $G_\varphi(\mu)$ is a p-path, with poles $u$ and $w$.
Consequently, $G_{\varphi'}(\langle\mu\nu\rangle)$  is a subdivision of $G_{\varphi}(\nu)$, obtained by subdividing the edges incident to $v$. In particular $G^-_{\varphi'}(\langle\mu\nu\rangle)$ is isomorphic to $G^-_\varphi(\nu)$,
where vertex $w$ in $G^-_{\varphi'}(\langle\mu\nu\rangle)$ corresponds to vertex $v$ in $G^-_\varphi(\nu)$.
By imposing the rotation of $w$ on $v$ (and the opposite rotation on $u$),
Observation~\ref{obs:atomic} implies that $\varphi:G\rightarrow H$ is atomic embeddable, completing the proof. \end{proof}

\begin{figure}
\centering
\includegraphics[scale=0.8]{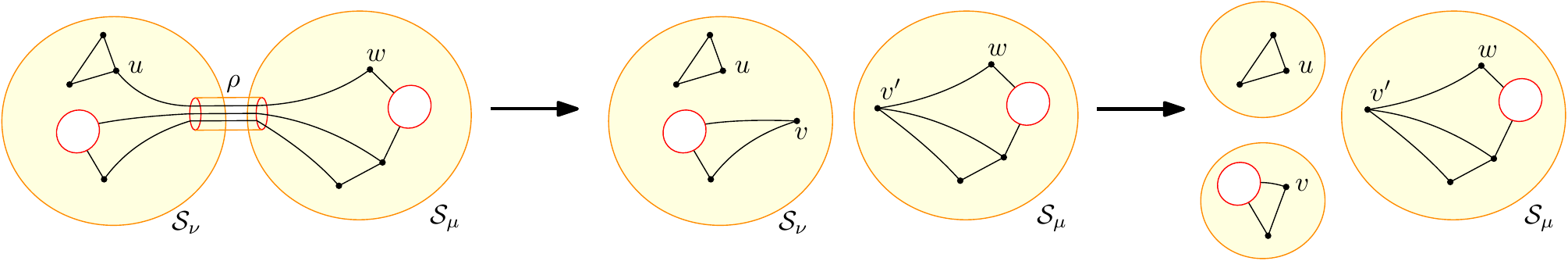}
\caption{An atomic embedding of $G$ on $\mathcal{S}(\nu) \cup \mathcal{S}(\mu)$ before and after operation \operation{Delete($uv$)}, where $uv\in G_\varphi(\nu)$ and $uw=\original{uv}$.  The operation  reduces the degree of a virtual vertex $v$ of $G_\varphi(\nu)$ such that $v$ is incident to a cut-edge, and at most 3 edges of $G$ pass through its corresponding pipe $\rho$.}
\label{fig:delete}
\end{figure}

Our last operation deletes a cut edge of a \emph{subcubic} local graph; see Fig.~\ref{fig:delete} for an illustration.

\paragraph{\operation{Delete$(e)$}.}
We are given a cut edge $e=uv$ in a subcubic local graph $G_\varphi(\nu)$.
If both $u$ and $v$ are ordinary vertices, then delete $uv$ from $E(G)$ (thereby disconnecting $G_\varphi(\nu)$ into two components and invoking \operation{Split($\nu$)}). Else assume w.l.o.g.\ that $u$ is ordinary and $v$ is virtual. Let $\rho=\mu\nu$ be the pipe that corresponds to $v$ in $G_\varphi(\nu)$ and a vertex $v'$ in $G_\varphi(\mu)$, and let $uw\in E(G)$ be the edge corresponding to $uv$, that is, $uw=\original{uv}=\original{v'w}$, where $v'w\in G_\varphi(\mu)$.
Delete the edge $uw$ from $G$ (thereby reducing the degree of $\rho$ to 2), then \operation{Suppress($\rho$)} (which turns $v$ and $v'$ into ordinary vertices), and finally insert an edge $v'w$ into both $G$ and $G_\varphi(\mu)$.

\begin{lemma}
\label{lem:delete-cutedge}
For every instance $\varphi:G\rightarrow H$ of atomic embeddability,
an application of \operation{Delete($e$)} produces an equivalent instance $\varphi'$.
\end{lemma}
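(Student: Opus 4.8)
The plan is to work entirely at the level of Observation~\ref{obs:atomic}: for each direction of the equivalence I will produce a system of plane embeddings of the local graphs whose virtual vertices have opposite rotations along every pipe. I would treat the two cases in the definition of \operation{Delete} separately. If $u$ and $v$ are both ordinary, then $e=uv$ is a genuine cut edge of $G_\varphi(\nu)$, and the operation merely removes it and applies \operation{Split($\nu$)}, replacing $G_\varphi(\nu)$ by its two components $C_u\ni u$ and $C_v\ni v$. In one direction I restrict a witnessing system for $\varphi$ to $C_u$ and to $C_v$; in the other I place witnessing embeddings of $C_u$ and $C_v$ disjointly in the plane (choosing an outer face incident to $u$, resp.\ to $v$) and join $u$ to $v$ by an arc through the unbounded region. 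Either way no rotation at a virtual vertex changes, so the conditions of Observation~\ref{obs:atomic} are preserved; I expect this case to be routine.

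The substance is the case where, say, $u$ is ordinary and $v$ is virtual, corresponding to the pipe $\rho=\mu\nu$ with mate $v'$ in $G_\varphi(\mu)$. Since the instance is normal and $G_\varphi(\nu)$ is subcubic, $v$ (and hence $v'$ and $\rho$) has degree exactly~$3$; write the edges at $v$ as $vu,vp,vq$ and those at $v'$ as $v'w,v'p',v'q'$, with $\original{vu}=\original{v'w}=uw$, $\original{vp}=\original{v'p'}$, and $\original{vq}=\original{v'q'}$. First I would unwind the composite operation---delete $uw$ from $G$, then \operation{Suppress($\rho$)} (legitimate because $\rho$ now has degree~$2$, cf.\ Lemma~\ref{lem:supress-virtual}), then re-insert $v'w$---and record its net combinatorial effect on local graphs: the component $C_u$ of $G_\varphi(\nu)-vu$ becomes a separate local graph $G_{\varphi'}(\nu_1)$; the component $C_v$ (containing $v,p,q$) becomes $G_{\varphi'}(\nu_2)$ with $v$ now an \emph{ordinary} vertex of degree~$2$ incident to $vp,vq$; $G_{\varphi'}(\mu)$ equals $G_\varphi(\mu)$ with $v'$ now an \emph{ordinary} vertex incident to the same edges $v'w,v'p',v'q'$; and every other local graph, as well as the rotation of every virtual vertex other than $v$ and $v'$, is untouched. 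Granting this, the forward direction is immediate: restrict a witnessing system for $\varphi$ and reinterpret $v$ and $v'$ as ordinary.

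For the converse I would start from witnessing plane embeddings of the local graphs of $\varphi'$, in particular $\mathcal{E}'_{\nu_1}$ of $C_u$, $\mathcal{E}'_{\nu_2}$ of $C_v$, and $\mathcal{E}'_\mu$ of $G_{\varphi'}(\mu)$. Reinterpreting $v'$ as virtual in $\mathcal{E}'_\mu$ fixes its rotation $R$, and this prescribes the rotation $v$ must receive in $G_\varphi(\nu)$: the one opposite to $R$ under the identification $vu\leftrightarrow v'w$, $vp\leftrightarrow v'p'$, $vq\leftrightarrow v'q'$. To realize it I take $\mathcal{E}'_{\nu_2}$, where $v$ has degree~$2$ and hence exactly two corners; inserting the disc that carries $\mathcal{E}'_{\nu_1}$ (drawn with $u$ on its outer face) into one corner or the other and joining $u$ to $v$ by the edge $vu$ produces, respectively, the two cyclic orders of $\{vu,vp,vq\}$ at $v$, and I pick the corner yielding the prescribed one. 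Since the two choices already exhaust both cyclic orders, $\mathcal{E}'_{\nu_1}$ need not be reflected, so the rotations of the virtual vertices inside $C_u$ are exactly those of $\mathcal{E}'_{\nu_1}$ and remain matched with their mates; all the remaining local graphs of $H$ are copied from $\varphi'$ verbatim. Now $v$ and $v'$ have opposite rotations, every other compatibility condition is inherited, and Observation~\ref{obs:atomic} yields that $\varphi$ is positive.

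The main obstacle is precisely this converse: one must be certain that reattaching $C_u$ at $v$ can realize the rotation dictated by $\mathcal{E}'_\mu$. This is exactly where subcubicity is used---it forces $\deg(v)=3$, so that after the cut edge $vu$ is deleted $v$ has degree~$2$, leaving only two candidate rotations, both attainable by a single planar reattachment and without reflecting $C_u$ (hence without disturbing the rotations internal to $C_u$). The remaining point, purely bookkeeping, is the description of how the composite operation acts on the local graphs; here Observation~\ref{obs:atomic} together with Lemma~\ref{lem:supress-virtual} reduces everything to a combinatorial check on graphs with rotation systems.
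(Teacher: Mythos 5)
Your proposal is correct and takes essentially the same route as the paper's proof: the forward direction by restricting the witnessing embeddings and reinterpreting $v,v'$ as ordinary, and the converse by placing the two components with inherited rotation systems and re-inserting the edge at the now degree-$2$ vertex $v$ so that its rotation becomes opposite to that of $v'$, which is exactly the paper's argument (you are merely more explicit about why both cyclic orders at $v$ are realizable without reflecting $C_u$). No gaps.
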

\begin{proof}
First, assume that $\varphi$ is atomic embeddable.
If both $u$ and $v$ are ordinary vertices, then the deletion of edge $uv$ produces
an atomic embeddable instance by Observation~\ref{obs:atomic} and
Lemma~\ref{lem:split-atom}.
If $u$ is ordinary and $v$ is virtual, then we can clearly delete $\overline{uv}=uw$ from $G$, suppress the pipe $\rho$ of degree at most 2, and insert the edge $v'w$ (which was already present in $G_{\varphi}(\mu)$ before the operation). By Observation~\ref{obs:atomic}, and Lemmas~\ref{lem:supress-virtual} and~\ref{lem:split-atom}, $\varphi'$ is atomic embeddable.

Conversely, assume that $\varphi'$ is atomic embeddable. Then there exists an atomic embedding $\mathcal{E}':G'\rightarrow \mathcal{H}'$ with respect to $\varphi'$, where $\mathcal{H}'$ is the thickening of $H'$. Note that $\varphi'(u)\neq \varphi'(v)$ since $\varphi'$ is normal and $uv$ was a cut edge in $G_\varphi(\nu)$. Let $\nu_u,\nu_v\in V(H')$ be atoms such that $\varphi'(u)=\nu_u$ and $\varphi'(v)=\nu_v$. Recall that the embedding $\mathcal{E}'$ determines a rotation system on all local graphs of $\varphi'$. Consider disjoint plane embeddings of $G_{\varphi'}(\nu_u)$ and $G_{\varphi'}(\nu_v)$ with the rotation systems inherited from $\mathcal{E}'$ such that $u$ and $v$ are incident to a common face. If $v$ is a virtual vertex of degree 3 in $G_\varphi(\nu)$ (hence $v'$ has degree 3 in $G_{\varphi'}(\mu)$), we can choose plane embeddings of $G_{\varphi'}(\nu_u)$ and $G_{\varphi'}(\nu_v)$ with the additional property that the insertion of the edge $uv$ as a Jordan arc between $u$ and $v$ yields the embedding of $G_\varphi(\nu)$ in which the rotation at $v$ is opposite to the rotation of $v'$ in $G_{\varphi'}(\mu)$. All other local graphs of $\varphi$ are the same as in $\varphi'$, and their rotation systems are inherited from $\mathcal{E}'$. By Observation~\ref{obs:atomic}, $\varphi$ is atomic embeddable.
\end{proof}

\subsection{Toroidal Instance}
\label{ssec:toric}

An instance $\varphi:G\rightarrow H$ is \textbf{toroidal} if $H$ is a cycle and for every atom $\nu\in V(H)$, the graph $G_\varphi(\nu)$ is a p-path in which both poles are virtual vertices, and correspond to the two pipes incident to $\nu$.

Given an instance $\varphi:G\rightarrow H$ and a subgraph $H'\subseteq H$, such that the restriction of $\varphi$ to $G'=\varphi^{-1}[H']$, denoted $\varphi':G'\rightarrow H'$, is toroidal, we say that $H'$ is a \textbf{toroidal cycle} in $H$.

In this section, we show how to decide toroidal instances of atomic embeddability in linear time.
First, note that in a toroidal instance every ordinary vertex has degree 2, hence $G$ is a  disjoint union of cycles, say $C_1,\ldots ,C_t$, for some $t\in \mathbb{N}$. Furthermore, $\varphi$ maps each cycle $C_k$, $k\in[t]$, to a walk that winds around $H$ once or more times.

\begin{lemma}
\label{lem:toroidal}
Let $\varphi:G\rightarrow H$ be a toroidal instance of atomic embeddability, where $H$ is a cycle, and $G$ is a vertex disjoint union of cycles $C_1,\ldots ,C_t$. The instance $\varphi$ is positive if and only if $\varphi(C_k)$ is a walk of the same length for all $k\in[t]$
(that is, every cycle winds around the torus $\mathcal{H}$ the same number of times).
\end{lemma}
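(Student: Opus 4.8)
The plan is to interpret the thickening $\mathcal{H}$ of a toroidal instance as a genuine torus and to translate the atomic embedding condition into a statement about winding numbers of cycles on that torus. First I would make the topological picture precise: since $H$ is a cycle $\nu_1\nu_2\cdots\nu_s\nu_1$, and each $\mathcal{S}(\nu_i)$ is a sphere with exactly two holes (a cylinder), gluing the $s$ cylinders cyclically along their boundary circles yields a surface $\mathcal{H}$ homeomorphic to a torus $T^2$. I would fix the obvious ``longitude'' of $\mathcal{H}$ — a simple closed curve $\lambda$ that passes once through each cylinder $\mathcal{S}(\nu_i)$, crossing each gluing circle once transversally — and note that $H_1(\mathcal{H};\ZZ)\cong\ZZ^2$ is generated by $[\lambda]$ and the class $[\gamma]$ of one of the gluing circles $\gamma$.

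Next I would analyze an arbitrary cycle $C_k$ of $G$ under $\varphi$. Because $\varphi$ is toroidal, every ordinary vertex of $G$ has degree $2$, so $G=C_1\sqcup\cdots\sqcup C_t$, and $\varphi$ maps $C_k$ to a closed walk in $H$; since each local graph $G_\varphi(\nu)$ is a p-path whose poles are the two virtual vertices, each passage of $C_k$ through an atom $\nu_i$ uses one edge into each of the two incident pipes, so $\varphi(C_k)$ is a walk that traverses the cycle $H$ consistently in one direction — call the number of full traversals $w_k$, the winding number of $C_k$. In an atomic embedding $\mathcal{E}$, the image $\mathcal{E}(C_k)$ is a simple closed curve on $\mathcal{H}$, and its homology class is $[\mathcal{E}(C_k)] = w_k[\lambda] + a_k[\gamma]$ for some $a_k\in\ZZ$, since $\mathcal{E}(C_k)$ crosses each gluing circle exactly $w_k$ times, all with the same sign (here I use that $\mathcal{E}(C_k)\cap\mathcal{S}(\nu_i)$ is an arc meeting each of the two hole-boundaries of the cylinder in one point, hence it runs ``across'' the cylinder). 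The key classical fact I would invoke is that two disjoint simple closed curves on the torus are homologous, equivalently have the same homology class; more elementarily, disjoint essential simple closed curves on $T^2$ must be parallel and hence have equal winding numbers with respect to any fixed homology basis.

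For the forward direction, if $\varphi$ is positive, pick an atomic embedding $\mathcal{E}$; the images $\mathcal{E}(C_k)$ are pairwise disjoint simple closed curves on the torus $\mathcal{H}$, each crossing a fixed gluing circle $\gamma$ the same number of times $w_k$ as its $\varphi$-image winds around $H$. Two disjoint s.c.c.'s on $T^2$ have equal (unsigned) intersection number with $\gamma$ unless one of them is null-homotopic; but $w_k\ge 1$ for each $k$ since every $G_\varphi(\nu)$ is a p-path with its poles at the two virtual vertices (so $C_k$ must leave every atom it visits), hence none is null-homotopic, forcing all $w_k$ equal. For the converse, if all $w_k$ equal a common value $w$, I would build an embedding directly: on each cylinder $\mathcal{S}(\nu_i)$, the p-path $G_\varphi(\nu_i)\cap C_k$ restricted pieces are parallel arcs crossing the cylinder, and I can place all these arcs (over all $k$) as $tw$ disjoint parallel spanning arcs of the cylinder, in an order around the boundary circle that is consistent between consecutive cylinders; consistency is possible precisely because every curve makes the same number $w$ of passes, so the cyclic matching of arc-endpoints on each gluing circle can be taken to be the identity (rotation by a fixed amount), giving pairwise disjoint closed curves realizing all the $C_k$. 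One then checks, via Observation~\ref{obs:atomic}, that the induced rotations at the paired virtual vertices are opposite, so this is a valid atomic embedding.

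The main obstacle I expect is the converse construction: ensuring that the local arc orderings on the cylinders can be chosen \emph{globally consistently} around the cycle $H$ without introducing crossings — i.e. that the ``twist'' accumulated going once around $H$ can be absorbed. This is exactly where the equal-winding-number hypothesis is used: when the winding numbers differ, the arcs of a faster-winding curve would be forced to interleave with those of a slower one somewhere, creating an unavoidable crossing; when they are equal, one can route everything as a coherent family of parallel $(w,a)$-curves on the torus for a single slope. I would handle this by an explicit cyclic bookkeeping argument (tracking, at each gluing circle $\gamma_i$ between $\nu_i$ and $\nu_{i+1}$, the cyclic permutation matching the $tw$ arc-ends on the two sides, and showing the product of these permutations around $H$ can be made trivial) rather than appealing to the homological statement, so that the proof stays within the elementary surface-topology framework of the paper; the homological argument above is the clean way to see the forward direction and to motivate why equal winding numbers is the right condition.
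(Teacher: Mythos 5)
Your argument is correct in substance, and for the ``only if'' direction it is essentially the paper's proof: you compute the class of each embedded cycle in $H_1(\mathcal{H};\ZZ)\cong\ZZ^2$ as $w_k[\lambda]+a_k[\gamma]$ (using that the p-path structure forces the walk $\varphi(C_k)$ to wind monotonically, so $w_k\ge 1$ and all crossings with a gluing circle are coherent), and you conclude equality of the $w_k$ from the fact that disjoint essential simple closed curves on the torus are parallel; the paper phrases the same step via the intersection form ($i_\ell j_k-i_k j_\ell=0$) together with primitivity of the class of an embedded curve, which is the same mechanism. Where you genuinely diverge is the ``if'' direction. You propose a direct construction: route $tw$ parallel spanning arcs in each cylinder $\mathcal{S}(\nu_i)$ and then argue, by tracking the cyclic matchings at the gluing circles around $H$, that the accumulated twist can be absorbed; you correctly identify this global-consistency bookkeeping as the main obstacle, but as written it remains a plan (one would need to spell out that any cyclic shift can be realized by letting the arcs spiral inside a single cylinder, and that the matchings at each gluing circle are dictated by $\varphi^{-1}[\rho]$ rather than freely chosen). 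The paper avoids this entirely with a cleaner trick: since all cycles have the same winding number $w$, it embeds a single curve in the primitive class $(w,1)$, takes a small annular neighborhood of it in $\mathcal{H}$, and places all of $C_1,\ldots,C_t$ as disjoint parallel copies inside that annulus; compatibility of rotations at the virtual vertices (Observation~\ref{obs:atomic}) is then automatic. So your route buys an elementary, homology-free construction at the price of a permutation/twist argument that still has to be carried out, while the paper's annulus argument disposes of the converse in two lines; completing your converse along the lines you sketch would be fine, but you should either execute the bookkeeping or switch to the annulus-neighborhood argument.
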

Roughly speaking, Lemma~\ref{lem:toroidal} follows by the intersection form of the closed curves on the torus. Indeed, whether a pair of curves could be crossing free on the torus is governed by their  homology classes~\cite[Example~2A.2. and Corollary~3A.6.(b)]{H05_alg}  over~$\mathbb{Z}$.
\begin{proof}
In an atomic embedding, each cycle $C_k$ is embedded on the torus $\mathcal{H}$ as a closed curve, whose homology class over $\mathbb{Z}$ is given by a pair $(i_k,j_k)\in \mathbb{Z}^2$, where we assume w.l.o.g.\ that the first component $i_k$ is the length of the walk $\varphi(C_k)$ divided by $|V(H)|$. In particular, we may assume that $i_k>0$.
Since in an atomic embedding, $C_k$ is mapped to a Jordan curve in $\mathcal{H}$, its homology class must be primitive~\cite{MP78_primitive,M76_primitive}, or in other words, $\gcd(i_k,j_k)=1$, for all $k\in[t]$.

The ``if'' part follows by observing that the restriction of $\varphi$ to a cycle $C_i$  is a positive instance, no matter how many times $\varphi(C_i)$ winds around $H$. Indeed, a desired atomic embedding lies in the primitive homology class $(i_k,1)$.
  A sufficiently small neighborhood of the embedding of $C_i$ in $\mathcal{H}$ is homeomorphic to an annulus, into which all cycles $C_1,\ldots , C_t$ can be embedded
given the hypothesis is satisfied. It remains to prove ``only if'' part.

It is well-known that  the minimum number of crossings between a pair of closed curves in homology classes $(i_k,j_k)$ and $(i_\ell,j_\ell)$ is given as the absolute value of

$$\begin{array}{c}
\begin{pmatrix}
    i_k & j_k
\end{pmatrix}
 \end{array}
\begin{pmatrix}
 0 & -1 \\
 1 & 0
\end{pmatrix}
\begin{pmatrix}
    i_\ell \\ j_\ell
\end{pmatrix}
=i_\ell j_k-i_kj_\ell,$$
which counts the \emph{algebraic intersection} of the pair, see~\cite[Section 6.4.3]{S93_top}, and in particular Exercise~6.4.3.2. therein\footnote{We refrain from properly defining algebraic intersection, since we will not need it in the sequel,
and refer an interested reader to~\cite{S93_top}. To the readers who are unfamiliar with the intersection form it might be only clear that the absolute value of the expression gives a lower bound on the number of crossings between the curves. Nevertheless, for our purpose this lower bound is sufficient, and therefore we do not delve into more details to show that the lower bound is always tight.}.
 Since the cycles $C_1,\ldots, C_t$ embed into $\mathcal{H}$ as pairwise disjoint curves in an atomic embedding, we have $i_\ell j_k-i_kj_\ell=0$ for every pair of distinct $k,\ell\in[t]$. It follows that $\frac{j_k}{i_k}=\frac{j_\ell}{i_\ell}$. Since $\gcd(i_k,j_k)=1$, $\gcd(i_\ell,j_\ell)=1$, $i_k>0$, and $i_\ell>0$, we have that $i_k=i_\ell$ and $j_k=j_\ell$, which concludes the proof.
\end{proof}

\begin{corollary}
\label{cor:toroidal}
We can decide whether a toroidal instance $\varphi:G\rightarrow H$ is atomic embeddable
in time $O(n)$, where $n$ is the number of edges and vertices in $G$.
\end{corollary}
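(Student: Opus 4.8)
The plan is to turn the combinatorial criterion of Lemma~\ref{lem:toroidal} into a linear-time procedure. Recall that in a toroidal instance every ordinary vertex of $G$ has degree $2$, so $G$ is a vertex-disjoint union of cycles $C_1,\ldots,C_t$. Moreover, since each local graph $G_\varphi(\nu)$ is a p-path whose two poles are the virtual vertices corresponding to the two pipes incident to $\nu$, any maximal subpath of a cycle $C_k$ that lies in $\varphi^{-1}[\nu]$ is an entire subdivided edge of that p-path; hence $C_k$ enters $\nu$ through one incident pipe and leaves through the other, so $\varphi(C_k)$ is a closed walk that winds around the cycle $H$ an integral number $i_k\ge 1$ of times, of length $\ell_k := i_k\,|V(H)|$.

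First I would compute the connected components of $G$; this identifies the cycles $C_1,\ldots,C_t$ in $O(n)$ time. Then, for each $k$, I would traverse $C_k$ once and count the number of edges of $C_k$ that $\varphi$ maps to a pipe (the remaining edges of $C_k$ being mapped to atoms); using the pointers from edges of $G$ to their images under $\varphi$ that are part of the input representation, this takes $O(|E(C_k)|)$ time per cycle, hence $O(n)$ time in total. By the discussion above this count equals $\ell_k$, the length of the walk $\varphi(C_k)$.

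Finally, by Lemma~\ref{lem:toroidal} the instance is positive if and only if $\ell_1=\ell_2=\cdots=\ell_t$; testing this equality costs $O(t)\le O(n)$ time, giving an overall running time of $O(n)$. There is no real obstacle here, since Lemma~\ref{lem:toroidal} already carries out all the topological work; the only point needing a little care is verifying that the length of $\varphi(C_k)$ is precisely the number of edges of $C_k$ sent to pipes, which is read off in one pass over each cycle.
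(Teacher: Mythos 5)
Your proposal is correct and follows the same approach as the paper: both reduce to checking, via Lemma~\ref{lem:toroidal}, that all walks $\varphi(C_k)$ have equal length, computed by a single traversal of each cycle of $G$. Your additional observation that the walk length equals the number of edges of $C_k$ mapped to pipes is a harmless elaboration of the paper's ``simple traversal'' remark.
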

\begin{proof}
Under the assumptions, $G$ is the union of vertex disjoint cycles $C_1,\ldots,C_t$, for some $t\in \mathbb{N}$. We report that the instance is positive if and only if $\varphi(C_i)$ is a walk of the same length for all $k\in[t]$.
This algorithm is correct by Lemma~\ref{lem:toroidal}. It runs in linear time in the size of $G$, as the length of the walks $\varphi(C_i)$, $i\in [t]$, can be computed in a simple traversal of $G$.
\end{proof}

\subsection{The Subcubic Case}
\label{ssec:subcubic}

An instance $\varphi:G\rightarrow H$ of \textsc{atomic embeddability} is \textbf{subcubic} if $G_\varphi(\nu)$ is subcubic (i.e., its maximum degree is at most 3) for every $\nu\in V(H)$. In this section, we show how to decide subcubic instances of \textsc{atomic embeddibility} in linear time.
By Observation~\ref{obs:atomic}, it is enough to check whether all graphs $G_\varphi(\nu)$, $\nu\in V(H)$, are planar, and they each have embeddings in the plane such that for every pipe $\mu\nu\in E(H)$, the virtual vertices corresponding to $\mu\nu$ in $G_\varphi(\mu)$ and $G_\varphi(\nu)$ have opposite rotations.

Planarity testing for a graph takes linear time~\cite{HoTa74_planarity}. Let $n$ be the number of vertices and edges in $G$. Then the disjoint union of all local graphs $\mathcal{G}$ has $O(n)$ size (since each vertex in $V(G)$ corresponds to a unique ordinary vertex, and every edge in $E(G)$ corresponds to one or two edges in $\mathcal{G}$). Hence planarity testing for $\mathcal{G}$ takes $O(n)$ time.

In the subcubic case, every vertex in the local graphs $G_\varphi(\nu)$, $\nu\in V(H)$, has at most two possible rotations (including the vertices of 1- and 2-cuts). We show how to encode the possible embeddings of each local graph by a boolean variable, and then reduce the existence of compatible embeddings to a 2SAT formula, which can be solved in $O(n)$ time.

We start with a postprocessing algorithm that eliminates 1- and 2-edge-cuts from local graphs.

\begin{enumerate}[(1)]
\item[] \textbf{Postprocessing.} We are given a subcubic instance $\varphi:G\rightarrow H$ of atomic embeddability.
\item\label{it:Post1}
While there exists a cut edge $e$ in some $G_\varphi(\nu)$, $\nu\in V(H)$, apply \operation{Delete($e$)}.
\item\label{it:Post2}
While there exists a proper 2-edge-cut $\{e,f\}$ in a local graph of $\varphi$,
such that $e=u_1v_1$ and $f=u_2v_2$, where both $v_1$ and $v_2$ are in a $\{u_1,u_2\}$-bridge $B$ of $G_\varphi(\nu)$, apply \operation{Enclose($B$)} (creating a new pipe $\rho_B$ of degree 2),
and \operation{Suppress($\rho_B$)}.
\end{enumerate}

\begin{lemma}\label{lem:postprocess}
For a subcubic instance $\varphi$ of atomic embeddability of size $n$, Postprocessing runs in $O(n)$ time, and it returns an equivalent subcubic instance $\varphi'$ of size $O(n)$ in which every local graph is a cycle, a p-path, or a subdivided 3-connected planar graph.
\end{lemma}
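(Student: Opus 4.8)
The plan is to check three things in turn: that the returned instance $\varphi'$ is equivalent to $\varphi$; that every local graph of $\varphi'$ has the claimed form; and that Postprocessing runs in $O(n)$ time and outputs an instance of size $O(n)$. For correctness, note that both loops apply only the operations \operation{Delete}, \operation{Enclose}, \operation{Suppress}, and each produces an equivalent instance by Lemmas~\ref{lem:delete-cutedge}, \ref{lem:enclose}, and~\ref{lem:supress-virtual}; composing equivalences gives $\varphi'\sim\varphi$. (If at any point a local graph is found non-planar we stop and answer ``negative'', which is correct by Observation~\ref{obs:atomic} and consistent with equivalence.) Subcubicity is preserved throughout: \operation{Delete$(e)$} either deletes an edge of $G$, which only lowers degrees, or, in the virtual case, turns a degree-$3$ virtual vertex $v'\in G_\varphi(\mu)$ into an ordinary vertex of degree $3$ while leaving every other degree unchanged; and \operation{Enclose$(B)$} applied to the bridge $B$ of a proper $2$-edge-cut $\{e,f\}$ creates exactly one new virtual vertex, of degree $\deg_B(u_1)+\deg_B(u_2)=2$ (because $e$ and $f$ are precisely the two edges joining $B\setminus\{u_1,u_2\}$ to $\{u_1,u_2\}$), which is then removed by \operation{Suppress}, again leaving all remaining degrees unchanged.

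Next I analyze what the two loops achieve. In Step~\ref{it:Post1}, each \operation{Delete} removes one bridge and creates none: in the virtual case $G_\varphi(\mu)$ stays isomorphic to itself (only $v'$ becomes ordinary), while $G_\varphi(\nu)$ splits into two of its own subgraphs, and in the ordinary case $G_\varphi(\nu)$ splits along the deleted bridge. Hence the loop halts after at most the initial number of bridges, and afterwards every local graph is bridgeless. In Step~\ref{it:Post2}, cutting $G_\varphi(\nu)$ along a proper $2$-edge-cut $\{e,f\}$ (via \operation{Enclose}$+$\operation{Suppress}) yields two pieces $X_1\cup\{u_1u_2\}$ and $C\cup\{v_1v_2\}$, where $C=B\setminus\{u_1,u_2\}$; the inserted edge $u_1u_2$ (resp.\ $v_1v_2$) heals any bridge the cut might expose, since a putative bridge of $X_1\cup\{u_1u_2\}$ would already have been a bridge of the bridgeless graph $G_\varphi(\nu)$. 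Thus Step~\ref{it:Post2} need never rerun Step~\ref{it:Post1}, and after it every local graph is bridgeless with no proper $2$-edge-cut. Now the classification: if a local graph is a cycle we are done; otherwise $G^-_\varphi(\nu)$ is defined, connected, has minimum degree $\ge 3$ (suppression deletes only degree-$2$ vertices) and maximum degree $\le 3$, hence is $3$-regular, and it is $3$-edge-connected because any $2$-edge-cut of $G^-_\varphi(\nu)$ pulls back to a proper $2$-edge-cut of $G_\varphi(\nu)$ (one can always pick representative edges with a degree-$3$ endpoint). If $|V(G^-_\varphi(\nu))|=2$ then, since no loop survives in a bridgeless $3$-regular graph, $G^-_\varphi(\nu)$ is a triple edge and $G_\varphi(\nu)$ is a p-path; otherwise $G^-_\varphi(\nu)$ is simple (a double edge would, together with the third edges at its ends, form a $2$-edge-cut) and, by the classical fact that a $3$-regular $3$-edge-connected graph is $3$-connected, $G_\varphi(\nu)$ is a subdivision of a $3$-connected graph. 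Finally, planarity testing lets us assume each such $3$-connected graph is planar (otherwise the instance is negative by Observation~\ref{obs:atomic}), and both operations preserve planarity of local graphs: for \operation{Enclose}$+$\operation{Suppress}, in a planar embedding of $G_\varphi(\nu)$ the $2$-edge-cut $\{e,f\}$ lies on the boundary of a common face, into which $u_1u_2$ and $v_1v_2$ can be inserted.

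For the size and the running time, both loops can be executed without re-searching. Compute all bridges of $\mathcal{G}$ by one depth-first search, then apply \operation{Delete} to each ($O(1)$ time each, using the adjacency-list data structures, with \operation{Split} taking $0$ time); the order is irrelevant since deleting all bridges simply exposes the $2$-edge-connected components as separate local graphs. Then compute, for each resulting local graph, its decomposition into $3$-edge-connected components — the standard linear-time $3$-edge-connectivity decomposition, whose total size is linear in the input — and realize each of its $2$-edge-cuts by one \operation{Enclose}$+$\operation{Suppress}, again $O(1)$ each. Since each of the $O(n)$ operations increases the total size by only $O(1)$, the output $\varphi'$ has size $O(n)$, and the whole procedure runs in $O(n)$ time. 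The step I expect to be the crux is controlling Step~\ref{it:Post2} simultaneously in all three respects: that it never reintroduces a bridge, that it terminates after $O(n)$ operations with an $O(n)$-size output (which is exactly the content of the linear-size $3$-edge-connected component decomposition), and that it keeps the local graphs planar; the remaining claims follow by the cited operation lemmas and Observation~\ref{obs:atomic}.
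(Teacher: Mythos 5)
Your proposal is correct, and it reaches the conclusion by a noticeably different route than the paper. The paper proves the classification directly on the local graphs: it first shows each local graph is biconnected (a vertex of degree $\le 3$ with two bridges would expose a cut edge), and then analyzes an arbitrary proper $2$-cut $\{u,v\}$ — three bridges force a p-path, while two bridges yield, via Observation~\ref{obs:proper2edgecut}, a proper $2$-edge-cut that Step~\ref{it:Post2} has already destroyed. You instead pass to $G^-_{\varphi'}(\nu)$, show it is cubic, loopless and $3$-edge-connected (pulling cut edges and $2$-edge-cuts of $G^-$ back to proper ones in the local graph), dispose of the two-vertex case as the triple edge (p-path), and invoke the classical fact that a simple cubic $3$-edge-connected graph is $3$-connected. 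Both classifications are sound; the paper's is self-contained and elementary, while yours leans on two imported facts (equality of vertex- and edge-connectivity for cubic graphs, and the linear-time $3$-edge-connected component decomposition) but in exchange gives a cleaner case analysis and a genuinely concrete $O(n)$ implementation — the paper's timing argument is only the terse observation that each loop performs $O(n)$ iterations. You also make explicit two points the paper leaves implicit: that Step~\ref{it:Post2} cannot reintroduce a cut edge (your ``healing'' argument for the inserted paths $(u_1,w_u,u_2)$ and $(v_1,w_v,v_2)$), and that planarity of the local graphs is preserved by the operations — the word ``planar'' in the statement is in fact never addressed in the paper's proof, so this is a useful patch. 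Two trivial imprecisions, neither affecting correctness: \operation{Enclose}$+$\operation{Suppress} creates a degree-$2$ virtual vertex in \emph{each} of the two local graphs (not just one), and your batch realization of all $2$-edge-cuts must be understood via the decomposition (individual cuts interact, since a split replaces cut edges by subdivided paths), which is exactly the tool you name.
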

\begin{proof}
The while loop in Step~\ref{it:Post1} decreases the number of edges in $E(G)$, so it terminates after $O(n)$ iterations, and uses $O(n)$ time. The while loop in Step~\ref{it:Post2} decreases the number of 2-edge-cuts in local graphs, and hence, it also terminates after $O(n)$ iterations in $O(n)$ time. By Lemmas~\ref{lem:supress-virtual}, \ref{lem:delete-cutedge}, and \ref{lem:enclose},
the instance $\varphi':G'\rightarrow H'$ returned by the Postprocessing algorithm is equivalent to $\varphi$.

Step~\ref{it:Post1} can only decrease the degree of a local graph, and Step~\ref{it:Post2} creates a pair of ordinary vertices of degree 2. Since $\varphi$ is a subcubic instance, $\varphi'$ is also subcubic. When the algorithm terminates, then $G^-_{\varphi'}(\nu)$ has neither cut edges nor 2-edge-cuts, for all $\nu\in V(H')$.

We claim that every local graph $G_{\varphi'}(\nu)$, $\nu\in V(H')$, is biconnected. (Note that we assume that $\varphi'$ is normal, and hence, $G_{\varphi'}(\nu)$ is connected.) Let $v$ be an arbitrary vertex in $G_{\varphi'}(\nu)$. Vertex $v$ is incident to at least two edges in each $\{v\}$-bridge, otherwise we would find a cut edge. However, $\deg(v)\leq 3$, so there is at most one $\{v\}$-bridge, and $v$ is not a cut vertex. This completes the proof of the claim.

It remains to show that every local graph $G_{\varphi'}(\nu)$, $\nu\in V(H')$, is a cycle, a p-path, or a subdivided 3-connected graph. Consider a graph $G_{\varphi'}(\nu)$. Suppose it is neither a cycle not a subdivided 3-connected graph. Let $\{u,v\}$ be a proper 2-cut. If $\deg_B(u)=\deg_B(v)=1$, for some $\{u,v\}$-bridge $B$, then $B$ is a $uv$-path, as otherwise the edges incident to $u$ and $v$ in $B$ would form a 2-edge-cut in $G^-_{\varphi'}(\nu)$. The number of $\{u,v\}$-bridges is at least 2 and at most 3, since $G_{\varphi'}(\nu)$ is subcubic. If there are three $\{u,v\}$-bridges, then $\deg_B(u)=\deg_B(v)=1$ for every $\{u,v\}$-bridge $B$, and hence, every $\{u,v\}$-bridge is a $uv$-path, and so $G_{\varphi'}(\nu)$ is a p-path.
Suppose now that there are two $\{u,v\}$-bridges. Then neither bridge can be a $uv$-path, otherwise $\{u,v\}$ would not be a proper 2-cut. Therefore $\max\{\deg_B(u),\deg_B(v)\}\geq 2$ for every $\{u,v\}$-bridge $B$. Since $G_{\varphi'}(\nu)$ is subcubic, this implies $\min\{\deg_B(u),\deg_B(v)\}=1$
for both bridges. By Observation~\ref{obs:proper2edgecut}, there are edges $e=uu'$ and $f=vv'$ in the two bridges that form a 2-edge-cut in $G^-_{\varphi'}(\nu)$, contradicting the assumption that no such 2-edge-cut exists.
\end{proof}

\begin{lemma}
\label{lem:subcubic}
We can decide whether a subcubic instance $\varphi:G\rightarrow H$ is atomic embeddable
in $O(n)$ time, where $n$ is the number of edges and vertices in $G$ and $H$.
\end{lemma}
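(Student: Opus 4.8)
The plan is to combine the Postprocessing algorithm (Lemma~\ref{lem:postprocess}) with a reduction to 2SAT, as announced in the section overview. First I would run Postprocessing on $\varphi$ to obtain, in $O(n)$ time, an equivalent subcubic instance $\varphi':G'\rightarrow H'$ of size $O(n)$ in which every local graph $G_{\varphi'}(\nu)$ is a cycle, a p-path, or a subdivided $3$-connected planar graph. In the last case, by Mac~Lane's theorem (cited before Observation~\ref{obs:atomic}) and Whitney's uniqueness theorem, the rotation system of $G_{\varphi'}(\nu)$ is fixed up to a global reflection; for a cycle the rotation system is trivial; and for a p-path the rotation system is again determined up to the cyclic order of the constituent subdivided edges around the two poles, which for a \emph{subcubic} p-path means at most two subdivided edges, so again at most two embeddings up to reflection. (If some local graph is nonplanar, we report that $\varphi$ is negative, which is correct by Observation~\ref{obs:atomic}; planarity of all local graphs can be tested in $O(n)$ time by~\cite{HoTa74_planarity} applied to $\mathcal{G}$, whose size is $O(n)$.)

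Next I would introduce, for each local graph $G_{\varphi'}(\nu)$, a boolean variable $x_\nu$ selecting one of its (at most two) reflections; if the rotation system is already unique (e.g.\ a graph with no nontrivial orientation-preserving automorphism, or one realizing only one cyclic orientation), the variable is simply fixed. Fix, once and for all, a reference planar embedding of each $G_{\varphi'}(\nu)$; then every virtual vertex $u$ corresponding to a pipe $\rho=\mu\nu$ has a rotation in that reference embedding which either stays the same (if $x_\nu$ is ``true'') or reverses (if $x_\nu$ is ``false''). By Observation~\ref{obs:atomic}, the instance $\varphi'$ is positive if and only if, for every pipe $\rho=\mu\nu\in E(H')$, the chosen rotations at the two virtual vertices corresponding to $\rho$ are opposite in the sense of the $\overline{(\cdot)}$ correspondence. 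Reading off, from the two reference embeddings, whether the reference rotations at the two virtual endpoints of $\rho$ are already opposite or already equal, this constraint becomes either $x_\mu = x_\nu$ or $x_\mu \neq x_\nu$, which is a $2$-clause (indeed a pair of $2$-clauses). Collecting these over all pipes yields a 2SAT instance with $O(|V(H')|) = O(n)$ variables and $O(|E(H')|) = O(n)$ clauses, solvable in $O(n)$ time by~\cite{HoTa74_planarity}-style linear-time 2SAT, and $\varphi'$ (hence $\varphi$) is positive iff this formula is satisfiable.

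The main technical point to nail down is the claim that for the local graphs produced by Postprocessing, the set of plane embeddings (as rotation systems) has size at most two, with the two options being reflections of each other, \emph{and} that this pair of options induces on each virtual vertex exactly the pair ``reference rotation'' / ``its reverse.'' For subdivided $3$-connected planar graphs this is precisely Whitney's/Mac~Lane's uniqueness; for cycles it is immediate; the only case needing a little care is the subcubic p-path, where I must check that reversing the cyclic order of its subdivided edges around the poles is the same operation on both poles and corresponds to the global reflection, so that a single boolean $x_\nu$ suffices. I do not expect a genuine obstacle here, only bookkeeping: one must be careful that the virtual vertices on the two sides of a pipe are read with the correct edge correspondence $\overline{e}$, exactly as in Observation~\ref{obs:atomic}, so that ``opposite rotations'' is translated into the right one of the two congruence constraints $x_\mu = x_\nu$ or $x_\mu \oplus x_\nu = 1$. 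Finally I would remark that all these reads, the construction of the formula, and its solution are linear in $n$, giving the claimed $O(n)$ bound.
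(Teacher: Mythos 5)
Your proposal is correct and follows essentially the same route as the paper's proof: apply Postprocessing to reduce every local graph to a cycle, a subcubic p-path, or a subdivided $3$-connected planar graph, observe that each then has at most two rotation systems (mirror images of each other), encode the choice per local graph by a boolean variable, and translate the opposite-rotation condition of Observation~\ref{obs:atomic} at each pipe into an equality or inequality constraint, yielding a linear-size 2SAT instance. One tiny slip: a subcubic p-path can have up to \emph{three} subdivided edges between its poles (not two), but this does not affect your conclusion, since three bridges still admit only two cyclic orders.
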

\begin{proof}
By Lemma~\ref{lem:postprocess}, we may assume that every local graph $G_{\varphi}(\nu)$, $\nu\in V(H)$, is a cycle, a subcubic p-path, or a subdivided 3-connected planar graph. We can ignore cycles, as vertices of degree 2 have only one rotation. Every subdivided 3-connected planar graph has two possible rotation systems given by an embedding, that are equivalent up to a reflection.
In every embedding of a p-path with poles $u$ and $v$, the cyclic order of the $\{u,v\}$-bridges around $u$ and $v$ are reverse of each other. If $\deg(u)=\deg(v)=3$, the three $\{u,v\}$-bridges have two possible cyclic orders,
that is, the p-path have two possible rotation systems in a plane embedding.

For every local graph $G_{\varphi}(\nu)$, $\nu\in V(H)$, that is a p-path or a subdivided 3-connected graph, we introduce a boolean variable $x_\nu$, which is the indicator variable for the two possible rotation systems of $G_{\varphi}(\nu)$ in a plane embedding. In other words, we fix an embedding of $G_\varphi(\nu)$ corresponding to $x_\nu=1$, and then the reflected embedding  corresponds to $x_\nu=0$. Since $\varphi$ is a normal instance, every pipe has degree 3, that is, it corresponds to two virtual vertices of degree 3 in two local graphs. In particular, for every pipe $\rho=\mu\nu$, both $x_\mu$ and $x_\nu$ are defined. For every pipe $\rho=\mu\nu$, we introduce  a constraint $x_\mu=x_\nu$ if the rotations of its two corresponding virtual vertices in $G_{\varphi}(\mu)$ and $G_{\varphi}(\nu)$ in the embedding of $G_{\varphi}(\mu)$ and $G_{\varphi}(\nu)$, respectively, corresponding to $x_\nu=1$ and $x_\mu=1$ are opposite of each other; and $x_\mu=\neg x_\nu$ otherwise.

These constraints yield an instance of 2SAT with $O(|V(H)|)$ boolean variables and $O(|E(H)|)$ constraints, which can be solved in $O(n)$ time. If the 2SAT instance is positive then the graphs $G_\varphi(\nu)$, $\nu\in V(H)$, each have an embedding in the plane such that for every pipe $\mu\nu\in E(H)$, the corresponding virtual vertices in $G_\varphi(\mu)$ and $G_\varphi(\nu)$ have opposite rotations. It follows that $\varphi$ is a positive instance by Observation~\ref{obs:atomic}. Conversely, if $\varphi$ is a positive instance, then the atomic embedding of $G$ induces plane embeddings of the local graphs such that for every pipe $\mu\nu\in E(H)$, the corresponding virtual vertices have opposite rotations. By construction, the indicator variables $x_\nu$, $\nu\in V(H)$, satisfy all constraints of the 2SAT instance.
\end{proof}

\subsection{Main Algorithm}
\label{ssec:main}

We define two subroutines and then present our main algorithm. Subroutine~1 ensures that our instance has some desirable properties, and Subroutine~2 decreases the maximum degree over all local graphs $G_\varphi(\nu)$, for all atoms $\nu\in V(H)$, that are not contained in a toroidal cycle $C$ of $H$.

%

The crucial part of our algorithm reduces the maximum degree in local graphs over all atoms that are not in toroidal cycles. Specifically, for an instance $\varphi$ of atomic embeddability, let
\begin{itemize}
\item $V^*(H)$ (resp., $E^*(H)$) be the set of atoms (resp., pipes) in $H$ that are \textbf{not} in any toroidal cycle of $H$; and
\item let $\Delta(\varphi)$ be the maximum degree over all vertices of all local graphs $G_\varphi(\nu)$, $\nu\in V^*(H)$, if $V^*(H)\neq\emptyset$, and let $\Delta(\varphi)=2$ if $V(H)=\emptyset$.
\end{itemize}

We first call Subroutine~1 for a normal instance $\varphi$, and show that it returns an equivalent instance in which the proper 1- and 2-cuts in local graphs $G_\varphi(\nu)$, $\nu\in V^*(H)$, are in a special form, as described in terms of the following definition.

\begin{definition}
An instance $\varphi$ of atomic embeddability is \textbf{$d$-nice}, for $d\geq 3$, if it meets the following two conditions:
\begin{enumerate}[nosep,label=(N\arabic*)]
\item\label{N0} $\Delta(\varphi)\leq d$.
\item\label{N1} If $\deg(v)= d$  for a vertex $v$ of some local graph $G_\varphi(\nu)$, $\nu\in V(H)$,
     then $v$ has a fixed rotation, or $G_\varphi(\nu)$ is a p-path or a p-star.
\item\label{N2} If $\rho=\mu\nu\in E(H)$, such that both $G_\varphi(\mu)$ and $G_\varphi(\nu)$ are p-stars, and $\rho$ corresponds to virtual vertices of degree at least $d$, then $\rho$ is the only pipe between $\mu$ and $\nu$.
\end{enumerate}
\end{definition}

\begin{figure}
\centering
\includegraphics[scale=0.8]{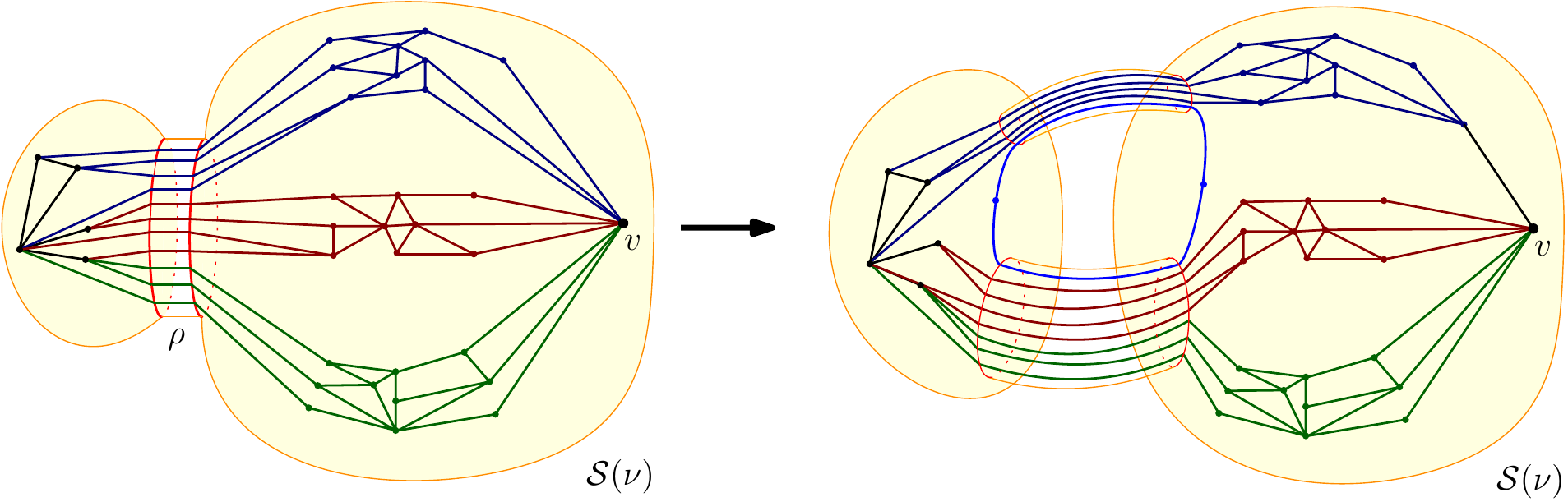}
\caption{An atomic embedding of $G$ on $\mathcal{S}(\nu)$  before and after Step~\ref{it:Max0},
where $u$ is a virtual vertex of $G_\varphi(\nu)$ corresponding to the pipe $\rho$, and $v$ is an ordinary vertex of $G_\varphi(\nu)$.}
\label{fig:2cut}
\end{figure}

We present a subroutine that takes a normal instance $\varphi$ as an input, and returns a $\Delta$-nice normal instance for the maximum degree $\Delta$ over all local graphs, that is, $\Delta=\Delta(\varphi)$ (as shown in Lemma~\ref{lem:S1} and Corollary~\ref{cor:S1-runtime} below).

\begin{enumerate}[label=(\roman*),series=steps]
\item[] \textbf{Subroutine~1}. Input: a normal instance $\varphi$ of atomic embeddability, where $\Delta(\varphi)\geq 4$.

\item\label{it:Max0}
While there is a proper 2-cut $\{u,v\}$ and a nonseparable $\{u,v\}$-bridge $B$ in   $G_\varphi(\nu)$, for some $\nu \in V(H)$, such that $\max\{\deg(u),\deg(v)\}=\Delta$, but neither $u$ nor $v$ is a cut vertex, then do the following:
Perform \operation{Stretch($u,E_u$)}, where $E_u$ is the set of edges in $E(B)$ incident to $u$; and
perform \operation{Stretch($v,E_v$)}, where $E_v$ is the set of edges in $E(B)$ incident to $v$.
If $u$ or $v$ is a virtual vertex corresponding to a pipe $\mu\nu$ and $G_{\varphi}(\mu)$ is nonplanar,
report that the instance $\varphi$ is not atomic embeddable and exit the subroutine.

\item\label{it:SubI3}
While there is a proper 2-edge-cut $\{e,f\}$ in  $G_\varphi(\nu)$, for some $\nu \in V(H)$,
then let $e=u_1v_1$ and $f=u_2v_2$ such that both $v_1$ and $v_2$ are in a $\{u_1,u_2\}$-bridge $B$ of $G_\varphi(\nu)$, then apply \operation{Enclose($B$)} (creating a new pipe $\rho_B$ of degree 2),
and \operation{Suppress($\rho_B$)}.

\item\label{it:Max1}
While there is a proper cut vertex $v$ with $\deg(v)=\Delta$ in some local graph of $\varphi$,
then successively apply \operation{Enclose($B$)} for every bridge $B$ of $v$
 (thereby turning every bridge of $v$ into a p-path).
 Apply \operation{Suppress($\rho_B$)} if applicable.
\end{enumerate}

In Section~\ref{ssec:runtime} (cf.~Corollary~\ref{cor:S1-runtime}), we show that Subroutine~1 terminates and analyse its running time. Here we prove that if it terminates, it returns a $\Delta(\varphi)$-nice instance.

\begin{lemma}\label{lem:S1}
For an instance $\varphi$ of atomic embeddibility, if Subroutine~1 terminates,
it either returns an equivalent, normal, and $\Delta(\varphi)$-nice instance $\varphi'$,
or reports that $\varphi$ is not atomic embeddable.
\end{lemma}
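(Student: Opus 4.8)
The plan is to show three things about the output $\varphi'$ of Subroutine~1: that it is equivalent to $\varphi$, that it is normal, and that it is $\Delta(\varphi)$-nice; and alternatively that the only early exit (in Step~\ref{it:Max0}, when some $G_\varphi(\mu)$ becomes nonplanar) correctly certifies non-embeddability.

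\textbf{Equivalence and normality.} First I would note that every elementary operation invoked by Subroutine~1 is known to preserve equivalence: \operation{Stretch} in Step~\ref{it:Max0} is applied to a vertex $u$ (resp.\ $v$) participating in a proper 2-cut with bridge $B$, and the edges $E_u=E(B)\cap\delta(u)$ are exactly the edges of one bridge at $u$, hence consecutive in every plane embedding of $G_\varphi(\nu)$; so Lemma~\ref{lem:stretch} (via Corollary~\ref{cor:fixed}, or directly) applies. For \operation{Enclose} in Steps~\ref{it:SubI3} and~\ref{it:Max1} we invoke Lemma~\ref{lem:enclose}, and for the automatic \operation{Suppress} of the degree-2 pipe $\rho_B$ we invoke Lemma~\ref{lem:supress-virtual}; \operation{Split} is handled by Lemma~\ref{lem:split-atom}. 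Composing these, $\varphi'$ is equivalent to $\varphi$. Normality is maintained because, as noted in Section~\ref{ssec:ds}, any pipe of degree $<3$ that is created is immediately suppressed and any disconnected local graph is immediately split; moreover \operation{Stretch} on a virtual vertex creates a new pipe of the full degree $\deg(u)+2\ge 3$ of the stretched vertex together with a two-edge cycle component, and \operation{Enclose} creates the single pipe $\rho_B$ which is suppressed. (One should check that \operation{Stretch} in Step~\ref{it:Max0} cannot itself drop a virtual vertex below degree 3: since $B$ is \emph{nonseparable}, we have $\deg_B(u)\notin\{1,\deg(u)-1\}$, so $\deg_B(u)\ge 2$ and $\deg(u)-\deg_B(u)\ge 2$, giving both new vertices degree $\ge 3$ after adding the connecting edge — this is where the nonseparability hypothesis is used.)

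\textbf{Niceness.} This is the heart of the matter, and I expect it to be the main obstacle. Condition~\ref{N0}, $\Delta(\varphi')\le d:=\Delta(\varphi)$, holds because none of the operations increases the maximum degree in $V^*(H)$: \operation{Stretch} replaces a vertex of degree $\delta$ by two vertices of degrees summing to $\delta+2$, each strictly smaller than $\delta$ when $2\le$ each part (again nonseparability), \operation{Enclose} only moves a subgraph and creates virtual vertices of degree $\sum_i\deg_B(v_i)\le$ the old degrees, and \operation{Suppress}/\operation{Split} never raise degrees. For~\ref{N1}, the argument is by termination of the while loops. Suppose $\deg(v)=d$ for a vertex $v$ in some $G_{\varphi'}(\nu)$ and $v$ does not have a fixed rotation. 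By Mac~Lane's criterion (quoted in the excerpt), $v$ participates in a proper 1-cut or a proper 2-cut of $G_{\varphi'}(\nu)$. If $v$ is a proper cut vertex, Step~\ref{it:Max1} has terminated, so its negation must hold at $v$ — but that step explicitly eliminates all degree-$\Delta$ proper cut vertices by \operation{Enclose}-ing their bridges into p-paths, which would make $G_{\varphi'}(\nu)$ a p-star centered at $v$; so $G_{\varphi'}(\nu)$ is a p-star. If $v$ is in a proper 2-cut $\{u,v\}$ but is not a cut vertex, Step~\ref{it:Max0} has terminated, so every $\{u,v\}$-bridge with $\max\{\deg(u),\deg(v)\}=\Delta$ is separable; combined with the termination of Step~\ref{it:SubI3} (no proper 2-edge-cuts) and Observation~\ref{obs:proper2edgecut}, every separable bridge must be a subdivided edge, which forces $G_{\varphi'}(\nu)$ to be a p-path with poles $u,v$ (this is precisely the structural deduction carried out in the proof of Lemma~\ref{lem:postprocess}, adapted to arbitrary degree). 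Finally~\ref{N2}: if $\rho=\mu\nu$ with $G_{\varphi'}(\mu),G_{\varphi'}(\nu)$ both p-stars and $\rho$ at virtual vertices of degree $\ge d$, then those virtual vertices have the maximum degree $d$; one shows that whenever Step~\ref{it:Max1} creates a p-star by enclosing bridges, the enclosing pipe is by construction the unique pipe between the two new atoms, and similarly the virtual vertices produced in Steps~\ref{it:Max0} and~\ref{it:SubI3} have degree $<d$ or are poles of p-paths, not centers of p-stars — so a multi-pipe between two p-star centers of degree $\ge d$ is impossible.

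\textbf{The early exit.} It remains to justify the exit in Step~\ref{it:Max0}: if \operation{Stretch} is applied to a virtual vertex $u$ of a pipe $\mu\nu$ and afterwards $G_\varphi(\mu)$ is nonplanar, the instance is negative. This follows from Lemma~\ref{lem:stretch}: the stretch produced an equivalent instance, and in that instance one local graph is nonplanar, so by Observation~\ref{obs:atomic} the instance — hence the original $\varphi$ — is not atomic embeddable. The report is therefore correct, and the subroutine exits. Wrapping up: in every case where the subroutine terminates without an early exit, the returned instance is equivalent, normal, and satisfies \ref{N0}--\ref{N2}, i.e.\ is $\Delta(\varphi)$-nice; and in the early-exit case the negative verdict is correct. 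The one delicate point requiring care is verifying that the structural classification in~\ref{N1} is exhaustive — that a degree-$d$ vertex without fixed rotation and not in a proper cut forces the whole local graph into the p-path / p-star dichotomy — which is where I would spend the most effort, essentially re-running the case analysis from the proof of Lemma~\ref{lem:postprocess} without the subcubic assumption.
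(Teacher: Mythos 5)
Your overall architecture matches the paper's proof (equivalence via the operation lemmas, normality via the automatic \operation{Suppress}/\operation{Split}, \ref{N0} via degree monotonicity, the \ref{N1} classification via Mac~Lane plus Observation~\ref{obs:proper2edgecut}, and correctness of the early exit via Observation~\ref{obs:atomic}), but your treatment of \ref{N2} has a genuine gap. You argue only that the forbidden configuration cannot be \emph{created}: p-stars produced by Step~\ref{it:Max1} come with a unique enclosing pipe, and Steps~\ref{it:Max0}--\ref{it:SubI3} do not produce degree-$\Delta$ p-star centers. This says nothing about a multiple pipe $\rho,\rho'$ between $\mu$ and $\nu$ where $G_\varphi(\mu)$ and $G_\varphi(\nu)$ are \emph{already} p-stars whose degree-$\Delta$ centers correspond to $\rho$ in the instance handed to Subroutine~1; that input is only guaranteed to be normal (on later iterations it is the output of Subroutine~2, which, as noted in the proof of Lemma~\ref{lem:Alg}, need not be nice). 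Condition \ref{N2} must therefore be \emph{enforced}, not merely preserved. The paper's proof does exactly this: the center of a degree-$\Delta$ p-star is a proper cut vertex of degree $\Delta$, so Step~\ref{it:Max1} encloses all of its bridges, which reroutes every other pipe incident to $\mu$ (in particular the parallel pipe $\rho'$) into a freshly created atom, leaving $\rho$ as the only pipe between $\mu$ and $\nu$. Without this argument, an input that already violates \ref{N2} defeats your proof.

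Two smaller points. In the 2-cut case of \ref{N1} you quote the terminated condition of Step~\ref{it:Max0} as ``every $\{u,v\}$-bridge with $\max\{\deg(u),\deg(v)\}=\Delta$ is separable,'' dropping the qualifier that \emph{neither} $u$ nor $v$ is a cut vertex (you only assume $v$ is not one); and, more generally, you read the negations of all three loop conditions as holding simultaneously at the end, although the loops run sequentially, so one must also check that Steps~\ref{it:SubI3} and~\ref{it:Max1} do not recreate configurations eliminated earlier (the paper asserts this explicitly: Step~\ref{it:SubI3} introduces no new proper 2-cut, and Step~\ref{it:Max1} creates new 2-cuts only inside the p-stars and creates no new vertex of degree $\Delta$). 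You rightly flag the exhaustiveness of the classification as the delicate point, but this dynamic invariance is part of that burden. Finally, your normality aside misreads nonseparability: it is the negation of a conjunction, so it only gives $\deg_B(u)\notin\{1,\deg(u)-1\}$ \emph{or} the analogous condition at $v$; a stretch at the other endpoint can still create a degree-2 pipe. This is harmless only because, as the paper stipulates, any pipe of degree less than 3 is immediately suppressed, which is the correct (and the paper's) justification for normality.
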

\begin{proof}
Let $\Delta=\Delta(\varphi)$ for short. By Lemmas~\ref{lem:supress-virtual}, \ref{lem:enclose}, and~\ref{lem:stretch}, Subroutine~1 returns an equivalent instance $\varphi'$ upon termination. Note that instance $\varphi'$ is normal, since we apply \operation{Suppress($\nu_B$)} to any pipe of degree less than 3. The operations in Subroutine~1 do not increase the maximum degree in any local graph outside of toroidal cycles; and make no changes at all in local graphs in toroidal cycles. Consequently, $\Delta(\varphi')\leq \Delta$.

At the end of Step~\ref{it:Max0}, every $\{u,v\}$-bridge is separable for every proper $\{u,v\}$-cut where  $\max\{\deg(u),\deg(v)\}=\Delta$; see Fig.~\ref{fig:2cut}. We consider Step~\ref{it:SubI3} now. Suppose that $\{u,v\}$ is a proper 2-cut in $G_\varphi(\nu)$, such that $\min\{\deg(u),\deg(v)\}\ge 3$ and $\max\{\deg(u),\deg(v)\}=\Delta$

If there exist exactly two (separable) $\{u,v\}$-bridges in $G_\varphi(\nu)$ (none of which is a subdivided edge as otherwise $\{u,v\}$ would not be a proper 2-cut), then Step~\ref{it:SubI3} eliminates the proper 2-cut $\{u,v\}$ by a single application of \operation{Enclose($.$)}, due to Observation~\ref{obs:proper2edgecut}, and does not introduce any new proper 2-cut. Indeed, up to symmetry there are two cases to consider depending on whether $\deg_B(u)=1$ or $\deg_B(u)=\deg(u)-1$, and $\deg_B(v)=1$ or $\deg_B(v)=\deg(v)-1$; see Fig.~\ref{fig:2edgecut}.

\begin{figure}
\centering
\includegraphics[width=\textwidth]{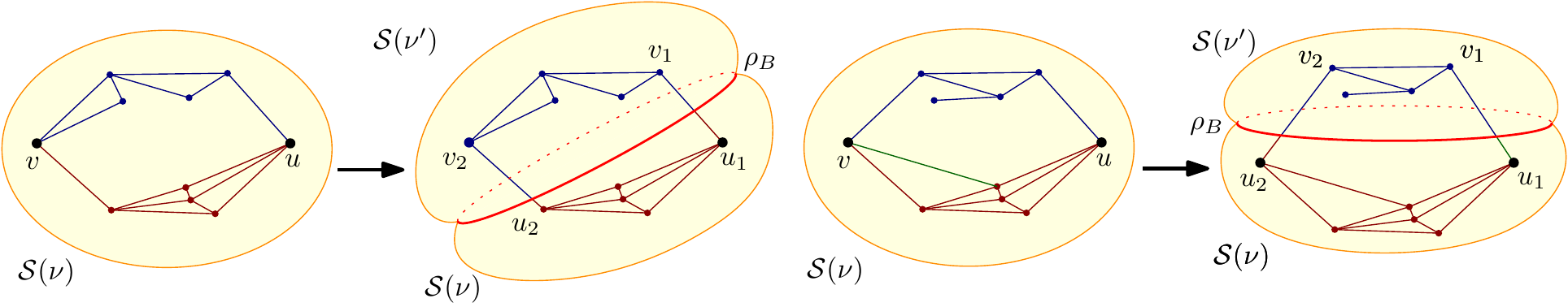}
\caption{The operation \operation{Enclose($B$)} in Step~\ref{it:SubI3} eliminates every proper 2-cut $\{u,v\}$ in $G_\varphi(\nu)$ that has exactly two $\{u,v\}$-bridges. Either one of the $\{u,v\}$-bridges is separable (left), or there exists a separable $\{u_1,u_2\}$-bridge such that $\deg_B(u_1)=\deg_B(u_2)=1$ (right).}
\label{fig:2edgecut}
\end{figure}

If there exist at least three (separable) $\{u,v\}$-bridges in $G_\varphi(\nu)$ such that  $\max\{\deg(u),\deg(v)\}=\Delta$, then Step~\ref{it:SubI3} turns $G_\varphi(\nu)$ into a p-path with the poles $u$ and $v$.

Hence, at the end of Step~\ref{it:SubI3}, for every proper 2-cut $\{u,v\}$ we have
(a) $\max\{\deg(u),\deg(v)\}<\Delta$; or
(b) $u$ or $v$ is a cut vertex of degree $\Delta$; or
(c) $u$ and $v$ are the poles of a p-path.
In particular, every vertex $w$ with $\deg(w)=\Delta$ in a local graph $G_\varphi(\mu)$,
is a proper 1-cut, or a pole of a p-path, or has fixed rotation.

Step~\ref{it:Max1} successively turns every cut vertex of degree $\Delta$ into the center of a p-star. It creates new 2-cuts within these p-stars and possibly in adjacent atoms, but it does not create any new vertex of degree $\Delta$. Hence, at the end of Subroutine~1
$\varphi$ satisfies~\ref{N0} and \ref{N1}. For property~\ref{N2}, note that by enclosing all the bridges of the center of every p-star in $\mathcal{G}$ of degree $\Delta$, Step~\ref{it:Max1} eliminates possible problematic multiple pipes $\rho=\mu\nu$ in $H$, where $\rho$ corresponds to a pair of centers of p-stars $G_\varphi(\mu)$ and $G_\varphi(\nu)$. Overall, the instance $\varphi'$ returned by Subroutine~1 upon termination meets conditions~\ref{N0}--\ref{N2}, consequently $\varphi'$ is $\Delta$-nice.
\end{proof}

\begin{figure}
\centering
\includegraphics[scale=1]{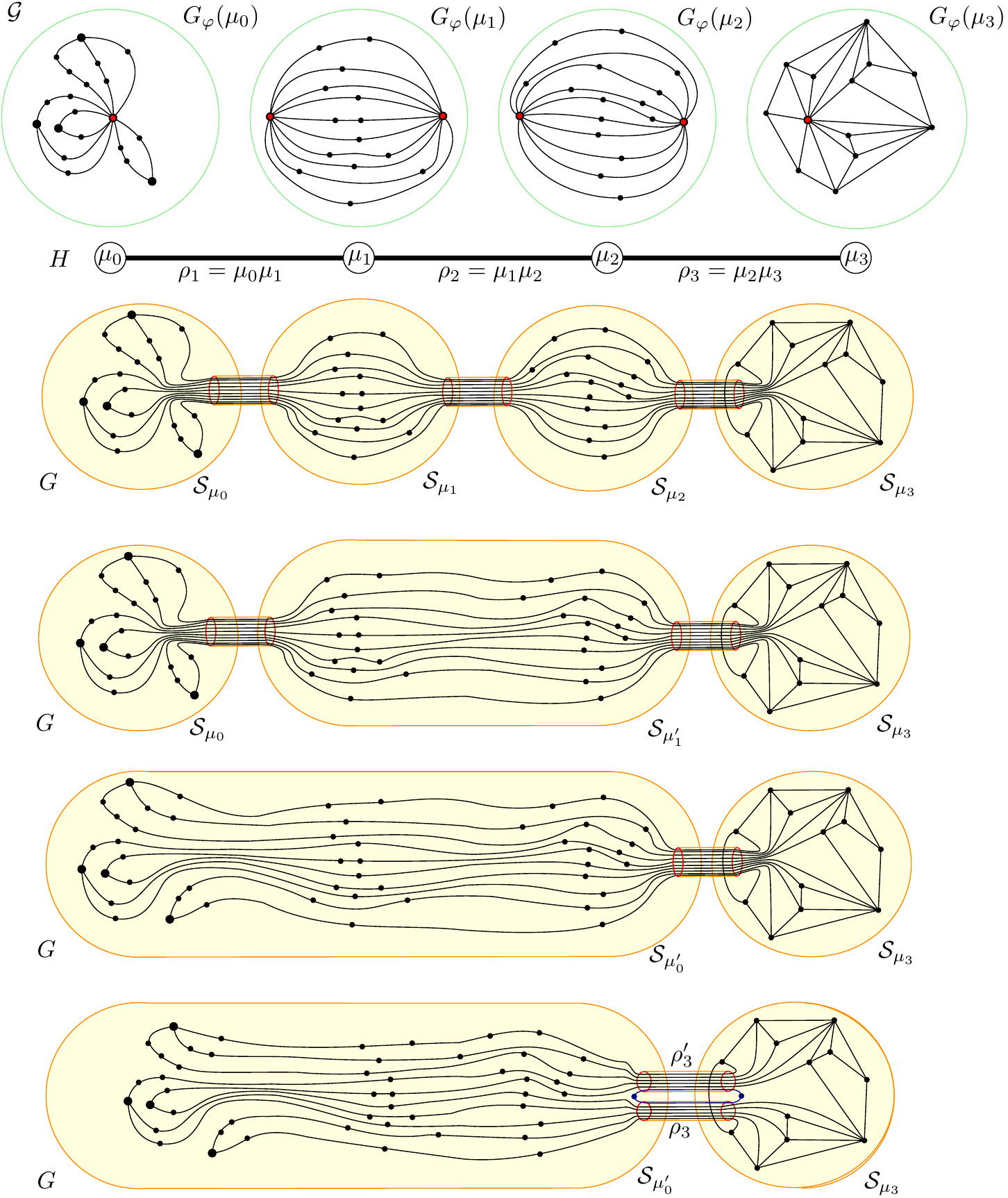}
\caption{A path $(\rho_1,\rho_2,\rho_3)$ in $H$ through the atoms $\mu_0,\ldots, \mu_3$. The graph $G_\varphi(\mu_0)$ is a p-star, $G_\varphi(\mu_1)$ and $G_\varphi(\mu_2)$ are p-paths, and $G_\varphi(\mu_3)$ is 3-connected. In each local graph, the virtual vertices corresponding to $\rho_1$, $\rho_2$, or $\rho_3$ are vertices of maximum degree. The bottom three subfigures show the effect of Step~\ref{it:Max2a} and Step~\eqref{it:Max3c} of Subroutine~2 on the graph $G$ in this instance.}
\label{fig:useful}
\end{figure}

\begin{figure}
\centering
\includegraphics[scale=1]{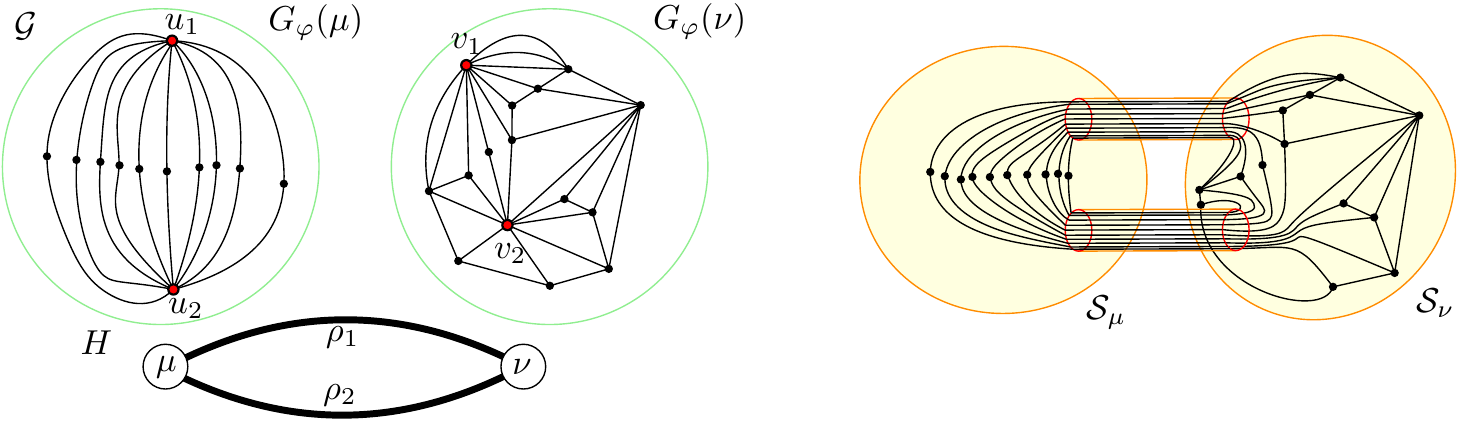}
\caption{An illustration of the setting in Step~\ref{it:Max2b}, analogous to Fig.~\ref{fig:useful}.}
\label{fig:useful++}
\end{figure}

\paragraph{Degree reduction.} We are now ready to present the crucial subroutine of our algorithm that
reduces $\Delta(\varphi)$ by eliminating all vertices of degree $\Delta(\varphi)$ that are not in toroidal cycles.
(See Figs.~\ref{fig:useful} and~\ref{fig:useful++} for the possible relations between virtual vertices of degree $\Delta(\varphi)$.)

\begin{enumerate}[label=(\roman*),resume=steps]
\item[] \textbf{Subroutine~2}. Input: a normal and $\Delta$-nice instance $\varphi$ of atomic embeddability,
    where $\Delta=\Delta(\varphi)$ and $\Delta\geq 4$.
\item\label{it:Max2}
While there exists a pipe $\mu\nu\in E^*(H)$ of degree $\Delta$ such that $G_\varphi(\mu)$ or $G_\varphi(\nu)$ is a p-path,  do the following. Suppose w.l.o.g.\ that $G_\varphi(\mu)$ is a p-path.
  \begin{enumerate}[label=(\alph*),ref=(\roman{enumi}.\alph*)]
  \item\label{it:Max2a}
  If $\mu\nu$ is not a multiple edge apply \operation{Contract($\mu\nu$)}.
  \item\label{it:Max2b}
  Else there exists a pair of pipes $\rho_1$ and $\rho_2$ joining $\mu$ with $\nu$.
  Let $u_i$ and $v_i$, resp., be virtual vertices in $G_\varphi(\mu)$ and $G_\varphi(\nu)$ corresponding to $\rho_i$ for $i\in \{1,2\}$. Note that $u_1$ and $u_2$ are the poles of the p-path $G_\varphi(\mu)$; and both $v_1$ and $v_2$ are fixed due to~\ref{N1}, as $\mu$ is not in a toroidal cycle. Apply \operation{Stretch($v_1,E_1$)} and \operation{Stretch($v_2,E_2$)}, where $E_i$ is a set of $\lfloor\Delta/2\rfloor$ consecutive edges in the rotation at $v_i$, for $i\in\{1,2\}$.
  If $G_\varphi(\mu)$ becomes nonplanar, report that the instance is not atomic embeddable and exit the subroutine.

    \end{enumerate}
\item\label{it:Max3}
For every pipe $\mu\nu\in E^*(H)$ of degree $\Delta$ that corresponds to virtual vertices $u$ and $v$
in $G_\varphi(\mu)$ and $G_\varphi(\nu)$, respectively, do:
     \begin{enumerate}[label=(\alph*),ref=(\roman{enumi}.\alph*)]
     \item\label{it:Max3a}
     If both $u$ and $v$ have fixed rotations (in $G_\varphi(\mu)$ and $G_\varphi(\nu)$, resp.), then
     check whether the two rotations are compatible. If they are incompatible, then report that $\varphi$ is not atomic embeddable and exit the subroutine. Otherwise apply \operation{Stretch($u,E_u$)}, where $E_u$ is a set of $\lfloor\Delta/2\rfloor$ consecutive edges in the rotation of $u$.
     \item\label{it:Max3b}
     If neither $u$ nor $v$ has a fixed rotation, then apply \operation{Contract($\mu\nu$)}. This contracts $\mu\nu$ into a new atom, denoted by $\langle\mu\nu\rangle$, and combines $G_\varphi(\mu)$ and $G_\varphi(\nu)$ into a new graph $G_\varphi(\langle\mu\nu\rangle)$.
     If $G_\varphi(\langle\mu\nu\rangle)$ is nonplanar, report that $\varphi$ is not atomic embeddable and exit the subroutine.
     \item\label{it:Max3c}
     Else assume w.l.o.g.\ that $u$ has fixed rotation in $G_\varphi(\mu)$, and is incident to edges $(uv_1,\ldots, uv_\Delta)$ in this cyclic rotation order. Successively apply \operation{Stretch($u,.$)}, turning vertex $u$ into an induced binary tree with $\Delta-2$ vertices.
     If $G_\varphi(\nu)$ is nonplanar, report that the instance is not atomic embeddable and exit the subroutine.
     \end{enumerate}
\item \label{it:Max4}
For every ordinary vertex $v\in V(G_\varphi(\nu))$, $\nu\in V(H)$, with $\deg(v)=\Delta$ that has fixed rotation, apply \operation{Stretch($v,E_v$)}, where $E_v$ is a set of $\lfloor\Delta/2\rfloor$ consecutive edges in the rotation of $v$.
\item \label{it:Max5}
For every ordinary vertex $v\in V(G_\varphi(\nu))$, $\nu\in V(H)$, with $\deg(v)=\Delta$ that is part of a 1- or 2-cut, apply \operation{Detach($v$)}.
\end{enumerate}

This completes the description of Subroutine~2. In Section~\ref{ssec:runtime} we show that Subroutine~2 terminates and analyse its running time. In Lemma~\ref{lem:S2} below, we prove that if it terminates, it returns an instance $\varphi'$ with $\Delta(\varphi')< \Delta(\varphi)$. We first clarify when an operation \operation{Stretch($.$)} can create a proper 1- or 2-cut.

\begin{lemma}
\label{lem:decontraction}
Let $u$ be a vertex in $G_\varphi(\nu)$ such that $\deg(u)\ge 4$, and assume that operation \operation{Stretch($u,.$)} produces an instance $\varphi'$ in which $u$ is replaced by an edge $uu'$.
If $u$ is not a proper 1-cut in $G_\varphi(\nu)$, then neither $u$ nor $u'$ is a proper 1-cut in $G_{\varphi'}(\nu)$.
If $u$ neither is a proper 1-cut nor belongs to a proper 2-cut in $G_\varphi(\nu)$, then neither $u$ nor $u'$ belongs to a proper 2-cut in $G_{\varphi'}(\nu)$.
\end{lemma}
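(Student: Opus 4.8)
The plan is to read \operation{Stretch($u,\cdot$)} purely as a local modification of the single graph $G_\varphi(\nu)$: it replaces $u$ by two vertices $u,u'$ joined by a (possibly subdivided) edge $e_0=uu'$, and partitions the edges formerly at $u$ into the set $E_u$, moved to $u'$, and its complement, kept at $u$, each of size at least $1$ by the precondition $0<|E_u|<\deg(u)$. Contracting $e_0$ in $G_{\varphi'}(\nu)$ recovers $G_\varphi(\nu)$ up to suppressing a vertex of degree $2$; since proper $1$- and $2$-cuts are defined through the suppressed graph, it suffices to reason with $G_\varphi(\nu)^-$ and $G_{\varphi'}(\nu)^-$ and to track how contracting $e_0$ moves bridges around.

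First I would dispose of the degenerate case. If one of $\deg_{G_{\varphi'}(\nu)}(u)$, $\deg_{G_{\varphi'}(\nu)}(u')$ equals $2$ (they cannot both, since their sum is $\deg_{G_\varphi(\nu)}(u)+2\ge 6$), then that vertex is suppressed and $G_{\varphi'}(\nu)^-=G_\varphi(\nu)^-$ with the surviving member of $\{u,u'\}$ occupying exactly the position of $u$; that vertex is then a proper $1$-cut (resp.\ lies in a proper $2$-cut) of $G_{\varphi'}(\nu)$ iff $u$ was in $G_\varphi(\nu)$, while the suppressed member of $\{u,u'\}$ is not even in $V(G_{\varphi'}(\nu)^-)$, so both conclusions are immediate. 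Hence I may assume $\deg_{G_{\varphi'}(\nu)}(u),\deg_{G_{\varphi'}(\nu)}(u')\ge 3$, so both parts of the edge partition have size at least $2$.

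Now the core. Since $u$ is not a proper $1$-cut of $G_\varphi(\nu)$, it lies in a unique block $B$ of $G_\varphi(\nu)^-$; as $\deg_B(u)=\deg_{G_\varphi(\nu)}(u)\ge 4$, the block $B$ is $2$-connected and carries no loop at $u$. Let $B'$ be obtained from $B$ by the same vertex split at $u$. The first key step is that $B'$ is $2$-connected: $B'-u$ is $B-u$ with $u'$ attached through the non-empty set $E_u$ (and $B-u$ is connected because $B$ is $2$-connected); $B'-u'$ is symmetric; and for $y\notin\{u,u'\}$, $B'-y$ is a vertex split of the connected graph $B-y$. Consequently $u$ and $u'$ lie in a single block, isomorphic to $B'^-$, of $G_{\varphi'}(\nu)^-$, and all edges at $u$ or $u'$ lie in it, so neither is a proper cut vertex of $G_{\varphi'}(\nu)$; this proves (1). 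For (2), we additionally know $u$ is in no proper $2$-cut of $G_\varphi(\nu)$; since $u$ lies only in $B$, this implies that for every $z\in V(B)$ the graph $B-\{u,z\}$ is connected and $u$ and $z$ are joined by at most one edge (otherwise the $\{u,z\}$-bridges of $B$ — hence of $G_\varphi(\nu)$ — would exhibit a proper $2$-cut at $u$, with $\deg_B(u)\ge 4$ forcing the bridges to be non-degenerate). Then for $z\notin\{u,u'\}$: $B'-\{u,z\}$ is $B-\{u,z\}$ with $u'$ reattached through those edges of $E_u$ not incident to $z$, a set which is non-empty — otherwise all $\ge 2$ edges of $E_u$ would join $u'$ to $z$, i.e.\ after contracting $e_0$ there would be $\ge 2$ edges between $u$ and $z$ in $B$ — so $B'-\{u,z\}$ is connected; $B'-\{u',z\}$ is symmetric, and $B'-\{u,u'\}=B-u$ is connected. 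Thus $B'$ has no $2$-cut containing $u$ or $u'$. Finally, a proper $2$-cut $\{w,z\}$ of $G_{\varphi'}(\nu)$ with $w\in\{u,u'\}$ forces $z\in V(B')$ (every edge at $w$ is in $B'$ and $B'-w$ is connected, so the separation must act inside $B'$; the only alternative, a block dangling off $z$ outside $B'$, yields after contracting $e_0$ a proper $2$-cut $\{u,z\}$ of $G_\varphi(\nu)$, contradicting the hypothesis), and then $\{w,z\}$ is in particular a $2$-cut of $B'$ — contradiction. This proves (2).

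The step I expect to demand the most care is precisely the last descent in (2): excluding that a proper $2$-cut of $G_{\varphi'}(\nu)$ through $u$ or $u'$ is created "outside'' the block $B'$, by a block hanging off the second cut vertex $z$. The argument is to contract $e_0$ and observe that such a configuration would already have been a proper $2$-cut of $G_\varphi(\nu)$ through $u$, but the bookkeeping — which bridges survive the contraction, and why the survivors are non-degenerate so that the resulting $2$-cut is genuinely \emph{proper} rather than the trivial "two bridges, one of them a single edge'' case — has to be checked case by case, and it is exactly there that the three quantitative inputs ($\deg(u)\ge 4$, at most one edge between $u$ and any vertex, and both parts of the split of size $\ge 2$) are jointly needed.
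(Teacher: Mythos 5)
Your argument is correct, but it takes a genuinely different route from the paper's. The paper proves the contrapositive in a few lines: if $u$ or $u'$ is a proper 1-cut, or lies in a proper 2-cut $\{v,w\}$, of $G_{\varphi'}(\nu)$, contract the edge $uu'$ (no loop arises since $uu'$ is not a multiple edge) and replace any member of $\{u,u'\}$ in the cut by $u$; the bridges of the cut survive the contraction, and the bridge containing $uu'$ remains non-degenerate because $\deg(u),\deg(u')\ge 3$, so $u$ was already a proper 1-cut or contained in a proper 2-cut of $G_{\varphi}(\nu)$. You instead argue forward through the block decomposition: $u$ lies in a unique 2-connected block $B$, the split block $B'$ is again 2-connected and admits no disconnecting pair through $u$ or $u'$ (here you use $\deg_B(u)\ge 4$, the absence of parallel $uz$ edges, and that both sides of the split have size at least $2$), and a proper 2-cut of $G_{\varphi'}(\nu)$ through $u$ or $u'$ is then excluded either inside $B'$ or, when a block dangles off the partner vertex $z$ outside $B'$, by the same contraction pull-back that drives the paper's proof. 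The trade-off: the paper's contrapositive is shorter and needs no block machinery; your version costs extra bookkeeping (in particular, the passage from ``proper 2-cut of $G_{\varphi'}(\nu)$'' to ``disconnecting pair of $B'$'' needs exactly the no-parallel-edge and no-dangling-block facts, which you do establish, and your closing paragraph rightly flags that this case analysis must be written out in full), but it buys a sharper structural conclusion---after the stretch, $u$ and $u'$ sit in a single 2-connected block with no 2-cut through either of them---and it explicitly treats the degenerate split in which one endpoint of $uu'$ has degree $2$ and is suppressed, a case the paper's proof (which uses $\deg(u),\deg(u')\ge 3$) passes over silently.
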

\begin{proof}
For the sake of contradiction  suppose w.l.o.g.\ that $u'$ is a proper 1-cut in $G_{\varphi'}(\nu)$. If we contract the edge $uu'$ back into the vertex $u$, the vertex $u$ is a proper 1-cut in $G_{\varphi}(\nu)$. Indeed, if $B$ is the $\{u'\}$-bridge containing $u$, then $B$ contains at least 3 edges since $\deg(u)\ge 3$. Thus, the bridge $B$ gives rise to a $\{u\}$-bridge after we contract $uu'$.

Similarly, suppose that $u'$ or $u$ belongs to a proper 2-cut $\{v,w\}$ of $G_{\varphi'}(\nu)$. If we contract the edge $uu'$ back into the vertex $u$, no loops are created, since $uu'$ is not a multiple edge. We show that the vertex $u$ belongs to a proper 2-cut in $G_{\varphi}(\nu)$ or is a proper 1-cut. The proper 2-cut $\{v',w'\}$ or 1-cut $\{v'\}$ in $G_{\varphi}(\nu)$ is constructed as follows. We put $v'=u$ and $w'=u$ if $v\in\{u,u'\}$ and $w\in \{u,u'\}$, respectively,
and we put $v'=v$ and $w'=w$, otherwise.
The proper 2-cut $\{v',w'\}$ is, in fact, a proper 1-cut if $v'=w'=u$.

To see that $\{v',w'\}$ is a proper 1-cut if $v'=w'$, we observe that $vw=uu'$ and apply the definitions of a proper 1-cut and proper 2-cut. Otherwise, w.l.o.g.\ $v=u$, and thus, $v'=u$. Let $B$ denote the $\{v,w\}$-bridge containing $uu'$ edge. The bridge $B$ is not a (subdivided) edge, since $\deg(u')\ge 3$ and $w\neq u'$.  Thus, after contracting $uu'$, the bridge $B$ gives rise to at least one $\{v',w'\}$-bridge that is not a subdivided edge.  Hence, $\{v',w'\}$ is a proper 2-cut.
\end{proof}

\begin{lemma}\label{lem:S2}
For a $\Delta(\varphi)$-nice instance $\varphi:G\rightarrow H$ of atomic embeddability, if Subroutine~2 terminates, then it either returns an equivalent normal instance $\varphi':G'\rightarrow H'$ such that $\Delta(\varphi')< \Delta(\varphi)$, or reports that $\varphi$ is not atomic embeddable.
\end{lemma}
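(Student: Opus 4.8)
The plan is to verify the two required properties of $\varphi'$ separately: equivalence (including normality) and the strict decrease $\Delta(\varphi') < \Delta(\varphi)$. Equivalence is the easy part: every operation invoked by Subroutine~2 is one of \operation{Contract($\cdot$)}, \operation{Stretch($\cdot,\cdot$)}, or \operation{Detach($\cdot$)}, each of which produces an equivalent instance under the hypotheses that Subroutine~2 is careful to verify. Concretely, in Step~\ref{it:Max2a} and Step~\ref{it:Max3b} the hypothesis of Lemma~\ref{lem:Belyi} is met — in Step~\ref{it:Max2a} because $G_\varphi(\mu)$ is a p-path and (by $\Delta$-niceness, condition~\ref{N2}, together with the assumption that $\mu\nu$ is not a multiple edge) $\mu\nu$ is the unique pipe between $\mu$ and $\nu$; in Step~\ref{it:Max3b} one first argues that when neither $u$ nor $v$ has a fixed rotation, both $G_\varphi(\mu)$ and $G_\varphi(\nu)$ are p-stars or p-paths (this uses Mac~Lane's criterion as invoked after Observation~\ref{obs:proper2edgecut} together with the fact that $\varphi$ is $\Delta$-nice, so a vertex of degree $\Delta$ without fixed rotation forces a p-path or p-star), so Lemma~\ref{lem:Belyi} again applies; in this case condition~\ref{N2} again guarantees uniqueness of the pipe. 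In Steps~\ref{it:Max2b}, \ref{it:Max3a}, \ref{it:Max3c}, and \ref{it:Max4}, each application of \operation{Stretch($\cdot$)} is to a vertex with a fixed rotation (or, in \ref{it:Max2b}, one first uses \ref{N1} to see that $v_1,v_2$ are fixed because $\mu\notin V^*(H)$ would contradict the hypothesis, hence they are fixed), stretching off a block of consecutive edges in that rotation, so Corollary~\ref{cor:fixed} applies. Finally in Step~\ref{it:Max5}, \operation{Detach($v$)} is applied to an ordinary vertex that (after the earlier steps) is necessarily the center of a p-star or a pole of a p-path in its local graph, so Lemma~\ref{lem:split} applies. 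The instance stays normal because every pipe of degree less than $3$ that is created is immediately suppressed (by the convention of Section~\ref{ssec:ds}), and disconnected local graphs are immediately split; this is also where one notes that, since each \operation{Stretch($\cdot$)} applied here peels off $\lfloor \Delta/2\rfloor \geq 2$ edges and leaves at least $\lceil \Delta/2 \rceil + 1 \geq 3$ edges at the other side, no pipe of degree $<3$ is created by these stretches.

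For the degree bound, I would show that after Subroutine~2 no vertex of any local graph $G_{\varphi'}(\nu)$ with $\nu\in V^*(H')$ has degree $\ge \Delta$; since no operation raises the maximum degree in a local graph (a \operation{Stretch} of a degree-$\Delta$ vertex produces two vertices of degree $\lfloor\Delta/2\rfloor+1$ and $\lceil\Delta/2\rceil+1$, both $<\Delta$ when $\Delta\ge4$; \operation{Contract} merges two local graphs but in the p-star/p-path cases the only high-degree vertices are the poles/centers, which after suppression become ordinary; \operation{Detach} turns a vertex into leaves; and \operation{Enclose}/\operation{Suppress} never increase degrees), this will give $\Delta(\varphi')<\Delta$. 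The bookkeeping is: after Subroutine~1, $\varphi$ is $\Delta$-nice, so by~\ref{N1} every vertex $v$ of degree $\Delta$ in a local graph is either fixed, or sits in a p-path, or sits in a p-star. Step~\ref{it:Max2} eliminates every pipe of degree $\Delta$ incident to a p-path (either by contracting it, where the pole disappears, or, in the multi-edge case, by stretching down the fixed opposite virtual vertices; one must check this loop terminates — it does, because each pass strictly decreases the number of degree-$\Delta$ virtual vertices that lie in a p-path). Crucially, after Step~\ref{it:Max2} no local graph $G_\varphi(\nu)$ with $\nu \in V^*(H)$ that is a p-path has a virtual vertex (pole) of degree $\Delta$: if it did, its pole $v$ would correspond to a pipe $\mu\nu$, and the partner local graph would then trigger the while loop again. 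Step~\ref{it:Max3} then handles every remaining degree-$\Delta$ pipe $\mu\nu\in E^*(H)$; by the above, neither endpoint local graph is a p-path, so by~\ref{N1} each of $u,v$ is either fixed or a center of a p-star. Cases \ref{it:Max3a}, \ref{it:Max3b}, \ref{it:Max3c} exhaust the possibilities (both fixed; neither fixed; exactly one fixed): in each case the operation removes the degree-$\Delta$ virtual vertex (a fixed one is stretched down, a non-fixed pair is contracted into a lower-degree graph after suppression). Here is where Lemma~\ref{lem:decontraction} is needed: after stretching a fixed degree-$\Delta$ vertex $u$ in Step~\ref{it:Max3a}/\ref{it:Max3c} or \ref{it:Max2b}, one must be sure the new vertices $u,u'$ do not become high-degree cut vertices that re-contaminate the instance; Lemma~\ref{lem:decontraction} guarantees that if $u$ was not already a 1-cut or in a 2-cut (which it is not, being fixed), then neither new vertex is — and in any event their degrees are $<\Delta$. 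After Step~\ref{it:Max3}, the only possible degree-$\Delta$ vertices in local graphs with index in $V^*(H)$ are ordinary vertices (virtual vertices of degree $\Delta$ correspond to a degree-$\Delta$ pipe, all of which have been eliminated, or lie in a toroidal cycle). Step~\ref{it:Max4} disposes of the fixed ones by stretching, and Step~\ref{it:Max5} disposes of the rest — an ordinary vertex of degree $\Delta$ not having fixed rotation participates in a proper 1- or 2-cut (Mac~Lane), and by the end of Subroutine~1 (via~\ref{N1}) such a vertex's local graph is a p-path or p-star, so \operation{Detach($v$)} applies and turns $v$ into leaves. Hence no vertex of degree $\ge\Delta$ survives in any $G_{\varphi'}(\nu)$, $\nu \in V^*(H')$.

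The main subtlety — and the step I expect to require the most care — is arguing that the sequence of \operation{Stretch} and \operation{Detach} operations in Steps~\ref{it:Max3}--\ref{it:Max5} does not create new degree-$\Delta$ vertices or new degree-$\Delta$ virtual vertices that would have to be re-processed, i.e., that processing each pipe or vertex once suffices. This is exactly what Lemma~\ref{lem:decontraction} is designed to control: stretching a fixed vertex cannot manufacture a new proper 1- or 2-cut among the split vertices, so it cannot, e.g., turn a local graph into a p-path with a new degree-$\Delta$ pole. One also has to be careful that operations on a local graph $G_\varphi(\nu)$ can affect a neighboring local graph $G_\varphi(\mu)$ only through the shared virtual vertex — but \operation{Stretch} on a virtual vertex $u$ of degree $\Delta$ splits the partner virtual vertex $v$ as well, into vertices of degree $\lfloor\Delta/2\rfloor+1$ and $\lceil\Delta/2\rceil+1$, both strictly below $\Delta$, and it introduces a fresh multi-edge component in $G$ rather than raising any degree; so no neighbor's maximum degree goes up. The nonplanarity checks embedded in the steps are benign: if any local graph becomes nonplanar, the instance was negative (Observation~\ref{obs:atomic}) and the subroutine correctly reports so and exits. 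Putting these observations together yields that, upon termination, $\varphi'$ is normal, equivalent to $\varphi$, and satisfies $\Delta(\varphi')<\Delta(\varphi)$, completing the proof.
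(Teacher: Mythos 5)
Your proposal is correct and follows essentially the same route as the paper's proof: a step-by-step case analysis of Steps~\ref{it:Max2}--\ref{it:Max5} that derives equivalence from Lemma~\ref{lem:Belyi}, Corollary~\ref{cor:fixed}, and Lemma~\ref{lem:split}, uses the $\Delta$-nice conditions \ref{N1}--\ref{N2} to classify the degree-$\Delta$ vertices (fixed rotation, p-path pole, or p-star center) and to justify the \operation{Contract} preconditions, and invokes Lemma~\ref{lem:decontraction} to show that no new proper 1- or 2-cuts (hence no new degree-$\Delta$ critical vertices) arise, so all such vertices outside toroidal cycles are eliminated and $\Delta(\varphi')<\Delta(\varphi)$.
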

\begin{proof}
Let $\Delta=\Delta(\varphi)$.
We show that every step of Subroutine~2 maintains a $\Delta$-nice normal instance equivalent to $\varphi$ until it terminates; and it either returns such an instance $\varphi'$ or reports that $\varphi'$ is not atomic embeddable. Subroutine~2 maintains a normal instance until termination, since it does not create virtual vertices of degree 2 and \operation{Split($.$)} is applied automatically whenever a local graph disconnects into two or more components.

Step~\ref{it:Max2a} produces an equivalent instance by Lemma~\ref{lem:Belyi}: Each \operation{Contract(.)} operation merges a local graph $G_\varphi(\nu)$ with a p-path $G_\varphi(\mu)$ and produces a new local graph $G_\varphi(\langle\mu\nu\rangle)$, where $G^-_\varphi(\langle\mu\nu\rangle)$ is isomorphic to $G^-_\varphi(\nu)$, so the instance remains $\Delta$-nice.
In Step~\ref{it:Max2b}, the two invocation of \operation{Stretch($.$)} produce an equivalent instance by Lemma~\ref{lem:stretch}. At the end of Step~\ref{it:Max2}, none of the local graphs outside of toroidal cycles is a p-path of degree $\Delta$.

In Step~\ref{it:Max3a}, the rotation of the virtual vertices $u$ and $v$ must be compatible in any atomic embedding by Observation~\ref{obs:atomic}. If they are compatible, then operation \operation{Stretch($u,.$)} produces an equivalent instance by Corollary~\ref{cor:fixed}; and the resulting instance is still $\Delta$-nice since no new proper 1-cut or 2-cut is introduced in $G_\varphi(\mu)$ and $G_\varphi(\nu)$ by Lemma~\ref{lem:decontraction}.

In Step~\ref{it:Max3b}, both $G_\varphi(\mu)$ and $G_\varphi(\nu)$ are p-star, centered at $u$, and $v$ resp., since instance is $\Delta$-nice,
and p-paths of degree $\Delta$ have already been eliminated. By Lemma~\ref{lem:Belyi}, \operation{Contract$(\mu\nu)$} produces an equivalent instance. The resulting instance is still $\Delta$-nice, since the maximum degree of $G_\varphi(\langle \mu\nu\rangle)$ is less than $\Delta$; and $G_\varphi(\langle \mu\nu\rangle)$ is planar if $\varphi$ is atomic embeddable by Observation~\ref{obs:atomic}.

In Step~\ref{it:Max3c}, operation \operation{Stretch($u,.$)} for a fixed vertex $u$ yields an equivalent instance by Corollary~\ref{cor:fixed}. If the rotation of $u$ is incompatible with a p-star centered at $v$, \operation{Stretch($u,.$)} may turn $G_\varphi(\nu)$ into a nonplanar graph, and then $\varphi$ is not atomic embeddable by Observation~\ref{obs:atomic}. Note that $G_\varphi(\nu)$ is a p-star, since p-paths have been eliminated in Step~\ref{it:Max2}. Successive \operation{Stretch($.$)} operations eliminate the only vertex of degree $\Delta$ of $G_\varphi(\nu)$, namely $v$. The graph $G_\varphi(\mu)^-$ remains 3-connected by Lemma~\ref{lem:decontraction}, and so the resulting instance is $\Delta$-nice.

The equivalence of Step~\ref{it:Max4} follows by Lemma~\ref{lem:stretch}, and the resulting instance is still $\Delta$-nice, since operation \operation{Stretch($v,.$)} does not introduce a 1-cut in $G_\varphi(\nu)$ that would violate the $\Delta$-nice property by Lemma~\ref{lem:decontraction}.

In Step~\ref{it:Max5}, operation \operation{Detach($u$)} is applied in a $\Delta$-nice instance, hence it produces an equivalent instance by Lemma~\ref{lem:split}, which obviously remains $\Delta$-nice.

For the remainder of the proof, assume that Subroutine~2 returns an instance $\varphi'$. The operations in Subroutine~2 do not increase the maximum degree outside of toroidal cycles; and make no changes within toroidal cycles. Consequently, $\Delta(\varphi')\leq \Delta$.

It remains to prove that $\Delta(\varphi')< \Delta$, i.e., that Subroutine~2 eliminates vertices of degree $\Delta$ from all local graphs outside of toroidal cycles; call these vertices \textbf{$\Delta$-critical}. Since $\varphi$ is $\Delta$-nice, every $\Delta$-critical vertex in a local graph has a fixed rotation, or it is a center of a p-star, or a pole of a p-path. Steps~\ref{it:Max2}--\ref{it:Max5} each eliminate one or two $\Delta$-critical vertices (possibly a pair of corresponding virtual vertices), and do not create any new $\Delta$-critical vertices.
Steps~\ref{it:Max2}--\ref{it:Max3} eliminate all possible $\Delta$-critical virtual vertices; and Steps~\ref{it:Max4}--\ref{it:Max5} eliminate all $\Delta$-critical ordinary vertices. Since Subroutine~2 maintains a $\Delta$-nice instance,  it ultimately eliminates all $\Delta$-critical vertices, and so $\Delta(\varphi')<\Delta$, as claimed.
\end{proof}

\paragraph{Algorithm.} We are given a normal instance $\varphi$ of atomic embeddability.
\begin{enumerate}[(I)]
\item\label{it:Alg1}
While $\Delta(\varphi)\geq 4$, do the following.
    \begin{enumerate}[label=(\alph*),ref=(\Roman{enumi}.\alph*)]
    \item\label{it:loop}  Call Subroutine~1 (which turns $\varphi$ into a $\Delta$-nice instance) followed by Subroutine~2 (which reduces $\Delta(\varphi)$). If Subroutine~1 or Subroutine~2 reports that the instance $\varphi$ is not atomic embeddable return \texttt{False} and terminate the algorithm.
    \end{enumerate}
\item\label{it:Alg2}
For each connected component $C$ of $H$, let $G(C)=\varphi^{-1}[C]$.
    \begin{enumerate}[label=(\alph*),ref=(\Roman{enumi}.\alph*)]
    \item If $C$ is a toroidal cycle of $H$, decide atomic embeddability for $\varphi|_{G(C)}$ using Corollary~\ref{cor:toroidal}.
    \item Else decide atomic embeddability for $\varphi|_{G(C)}$ using Lemma~\ref{lem:subcubic}.
    \end{enumerate}
\item\label{it:Alg3}
If $\varphi|_{G(C)}$ is atomic embeddable for all components $C$ of $H$, then return \texttt{True}; else return \texttt{False}.
\end{enumerate}

In Section~\ref{ssec:runtime} we show that the Algorithm terminates and analyse its running time.
Here we show that if it terminates it correctly decides the atomic embeddability problem.

\begin{lemma}\label{lem:Alg}
Suppose that Algorithm terminates for an instance $\varphi:G\rightarrow H$. Then the algorithm returns \texttt{True} if and only if $\varphi$ is atomic embeddable.
\end{lemma}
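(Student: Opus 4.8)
The plan is to chain the correctness guarantees of the two subroutines with those of the two base cases, using the fact that atomic embeddability decomposes over the connected components of $H$.

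First I would set up a loop invariant for step~\ref{it:Alg1}: throughout the loop, the current instance is normal and equivalent to the input $\varphi$. This holds at the start, since the Algorithm is given a normal instance and equivalence is reflexive, and it is preserved because equivalence is transitive: by Lemma~\ref{lem:S1}, a call to Subroutine~1 either reports that the current instance --- hence $\varphi$ --- is not atomic embeddable (and the Algorithm returns \texttt{False}, correctly), or it returns an equivalent, normal, $\Delta$-nice instance; and by Lemma~\ref{lem:S2}, the ensuing call to Subroutine~2 either reports non-embeddability (again \texttt{False} is correct), or returns an equivalent normal instance with $\Delta(\varphi)$ strictly smaller. Hence, whenever the Algorithm reaches step~\ref{it:Alg2} (which it does whenever it terminates without having returned \texttt{False} in step~\ref{it:Alg1}), it holds a normal instance, equivalent to the original $\varphi$, with $\Delta(\varphi)\le 3$, since the loop guard is $\Delta(\varphi)\ge 4$.

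Second, I would prove the decomposition lemma: an instance $\psi\colon G_0\to H_0$ is atomic embeddable if and only if $\psi|_{\psi^{-1}[C]}$ is atomic embeddable for every connected component $C$ of $H_0$. The forward direction is immediate by restriction. For the converse, since $\psi$ is simplicial each connected component of $G_0$ is mapped into a single component of $H_0$ (the images of the two endpoints of an edge of $G_0$ always lie in the same component of $H_0$), so $G_0$ is the disjoint union of the subgraphs $G_0(C)=\psi^{-1}[C]$; and since a pipe joins two atoms of the same component, the thickening $\mathcal{H}_0$ is the disjoint union of the thickenings of the components. Disjoint atomic embeddings of the pieces $\psi|_{G_0(C)}$ therefore assemble into one of $\psi$. (Equivalently, one may verify the two conditions of Observation~\ref{obs:atomic} component by component, as every pipe joins atoms of one component.)

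Finally, I would verify that, for the post-loop instance, step~\ref{it:Alg2} decides $\varphi|_{G(C)}$ correctly for each component $C$ of $H$: if $C$ is a toroidal cycle, then $\varphi|_{G(C)}$ is a toroidal instance and Corollary~\ref{cor:toroidal} applies; otherwise $\varphi|_{G(C)}$ is subcubic --- because every atom $\nu\in V(C)$ then lies in $V^*(H)$ (a toroidal cycle meeting $C$ must be all of $C$, since its atoms carry their incident pipes inside it and a component is closed under pipes), so $\Delta(\varphi)\le 3$ bounds the degree of every vertex of $G_\varphi(\nu)=G_{\varphi|_{G(C)}}(\nu)$ --- and then Lemma~\ref{lem:subcubic} applies. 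Combining with step~\ref{it:Alg3}: the Algorithm returns \texttt{True} iff $\varphi|_{G(C)}$ is atomic embeddable for all components $C$, iff (by the decomposition) the post-loop instance is atomic embeddable, iff (by the loop invariant) the original $\varphi$ is atomic embeddable. The main obstacle I expect is exactly this last step --- reconciling the global degree parameter $\Delta(\varphi)$, which deliberately ignores atoms inside toroidal cycles, with the local graphs that the base-case routines must process, i.e.\ making precise that after the loop no high-degree local graph survives outside a toroidal-cycle component. The remaining pieces are routine applications of Lemmas~\ref{lem:S1},~\ref{lem:S2}, and~\ref{lem:subcubic}, Corollary~\ref{cor:toroidal}, and the component-decomposition fact.
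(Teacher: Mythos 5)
Your proof is correct and follows essentially the same route as the paper's: chain Lemmas~\ref{lem:S1} and~\ref{lem:S2} through the while loop of Step~\ref{it:Alg1}, then invoke Corollary~\ref{cor:toroidal} and Lemma~\ref{lem:subcubic} for the per-component decisions in Steps~\ref{it:Alg2}--\ref{it:Alg3}. The only difference is that you make explicit two facts the paper leaves implicit --- that atomic embeddability decomposes over connected components of $H$, and that after the loop every component that is not a toroidal cycle yields a subcubic instance (since toroidal cycles are whole components, so all its atoms lie in $V^*(H)$) --- which is a correct elaboration rather than a different approach.
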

\begin{proof}
Since the input $\varphi$ is normal, it is a valid input for Subroutine~1 in the first iteration of Step~\ref{it:loop}. By Lemma~\ref{lem:S1}, Subroutine~1 returns a $\Delta(\varphi)$-nice instance and therefore the input for Subroutine~2 is valid. In any subsequent iteration of Step~\ref{it:loop}, Subroutine~1 receives a valid input as Subroutine~2 returns a normal instance $\varphi^*$ (however, this instance need not be $\Delta(\varphi^*)$-nice). By Lemma~\ref{lem:S1} and Lemma~\ref{lem:S2}, the while loop in Step~\ref{it:Alg1} terminates after at most $\Delta(\varphi)-3$ iterations, and returns an equivalent instance (or correctly reports that $\varphi$ is not atomic embeddable).
If the Algorithm proceeds to Steps~\ref{it:Alg2}--\ref{it:Alg3}, the correctness of the output follows from Corollary~\ref{cor:toroidal} and Lemma~\ref{lem:subcubic}.
\end{proof}

\subsection{Running Time Analysis}
\label{ssec:runtime}

\paragraph{Potential Functions.}
We measure the progress of the algorithm, for an instance $\varphi:G\rightarrow H$, using three parameters defined as follows. Recall that $\mathcal{G}$ denotes the disjoint union of all local graphs $G_\varphi(\nu)$, $\nu\in V(H)$.
\begin{itemize}\itemsep 0pt
\item Let $N(\varphi)=|V(\mathcal{G})|$, that is, be the number of vertices of $\mathcal{G}$.
\item let $N_{\geq 3}(\varphi)=|\{v\in V(\mathcal{G}): \deg(v)\geq 3\}|$, i.e., the number of vertices of $\mathcal{G}$ of degree 3 or higher.
\item Let the \textbf{potential} of $\varphi$ be

$$\Phi(\varphi)=\sum_{v\in V(\mathcal{G})}  \left(\max\{0,\deg(v)-\xi(v)\}\right)^{\sigma(v)},$$
where $\xi(v)=2$ and $\sigma(v)=3$ if $v$ is a proper cut vertex, $\xi(v)=2$  and  $\sigma(v)=2$ if $v$ is part of a proper 2-cut but not a cut vertex, and $\xi(v)=3$ and $\sigma(v)=1$ otherwise.
\end{itemize}
Note that $\max\{0,\deg(v)-2\}=0$ if  $\deg(v)\leq 2$, that is, the vertices of local graphs of degree less than 3 do not contribute to the potential. Clearly $\deg(v)<N(\varphi)$ for every $v\in V(\mathcal{G})$,
and so $\Phi(\varphi)\leq N^4(\varphi)$ is a trivial upper bound. Our analysis hinges on the following charging scheme:

\paragraph{Overview.} We show below (Lemma~\ref{lem:MaxDegree}) that each iteration of Step~\ref{it:loop} of the Algorithm decreases the potential. This readily implies that the while loop in Step~\ref{it:Alg1} terminates (hence the Algorithm terminates, which completes the proof of correctness).
Recall that Step~\ref{it:loop} runs Subroutines~1 and~2, that is, it applies Steps~\ref{it:Max0}--\ref{it:Max5}.
We show that both the number of elementary operations performed and the number of new vertices created in Steps~\ref{it:Max0}--\ref{it:Max5} are bounded from above by a constant times the decrease of the potential.
Step~\ref{it:SubI3} does not change the potential, so we need additional machinery to bound its running time: We use the parameters $N(\varphi)$ and $N_{\geq 3}(\varphi)$. We continue with the specifics.

\paragraph{Analysis.}
Recall that each iteration of the while loop of Step~\ref{it:Alg1} of the Algorithm,
which calls Subroutine~1 followed by Subroutine~2. The two subroutine jointly perform
Steps~\ref{it:Max0}--\ref{it:Max5}. We use the following notation. Assume that $\varphi_0$ is the input of Subroutine~1,
and we obtain instances $\varphi_1,\ldots , \varphi_7$ at the end of Step~\ref{it:Max0},$\ldots$,\ref{it:Max5}.
Denote by $\mathcal{G}_i$ the union of all local graphs in the instance $\varphi_i$ for $i=1,\ldots, 7$.
The following lemma is helpful for the analysis of Step~\ref{it:Max0}.

\begin{lemma}\label{lem:isolate2Bridge}
Let $\{u,v\}$ be a proper 2-cut in a local graph $G_\varphi(\nu)$ such that $\max\{\deg(u),\deg(v)\}\ge 4$;
and let $B$ be a nonseparable $\{u,v\}$-bridge.
Then one iteration of the while loop in Step~\ref{it:Max0} produces an instance $\varphi'$ such that
$\Phi(\varphi')<\Phi(\varphi)$ and
$N(\varphi')\leq N(\varphi)+ 4$.
\end{lemma}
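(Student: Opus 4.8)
The plan is to track how one iteration of the while loop in Step~\ref{it:Max0} changes the graph $G_\varphi(\nu)$ and then bound the effect on $N$ and $\Phi$. Recall that the iteration performs \operation{Stretch($u,E_u$)} followed by \operation{Stretch($v,E_v$)}, where $E_u$ (resp.\ $E_v$) is the set of edges of the nonseparable bridge $B$ incident to $u$ (resp.\ $v$); if $u$ or $v$ is virtual, the corresponding virtual vertex in the neighbouring atom is split simultaneously, and each \operation{Stretch($.$)} also inserts a subdivided edge. First I would note that a single \operation{Stretch($.$)} applied to a vertex $w$ replaces $w$ by an edge $ww'$ (two vertices instead of one) and, when $w$ is virtual, does the same to its mate $w'$ in the adjacent atom and adds a new two-vertex cycle component together with two subdivision vertices. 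Counting worst case (both $u$ and $v$ virtual): the two \operation{Stretch} calls create at most $2+2$ new vertices from splitting the four virtual vertices, at most $2$ subdivision vertices, and at most $2$ new cycle-component vertices --- but a careful bookkeeping (the new cycle and its subdivision are shared structure per pipe, and $u,v$ being in a common bridge restricts the count) yields $N(\varphi')\le N(\varphi)+4$. I would verify the constant $4$ by the accounting in the operation descriptions in Section~\ref{ssec:operations}: \operation{Stretch} on an ordinary vertex adds exactly one vertex; on a virtual vertex it adds one vertex in each of the two local graphs plus the two-vertex cycle plus the two subdivisions, but since the two \operation{Stretch} calls here act on the same pipe structure when $u,v$ are virtual, the shared additions are not double counted, giving at most $4$ overall.

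For the potential decrease, the key point is that $u$ and $v$ each lose the edges of $E(B)$ incident to them, which are transferred to the new vertices $u',v'$. By hypothesis $\{u,v\}$ is a proper $2$-cut and $B$ is \emph{nonseparable}, so $\deg_B(u)\notin\{1,\deg(u)-1\}$ or $\deg_B(v)\notin\{1,\deg(v)-1\}$; in fact nonseparability means that at least one of $u,v$ has $2\le\deg_B(w)\le\deg(w)-2$. After the stretch, $u$ has degree $\deg(u)-\deg_B(u)+1$ and $u'$ has degree $\deg_B(u)+1$, and similarly for $v,v'$; neither $u,u',v,v'$ is a proper cut vertex and none belongs to a proper $2$-cut that it did not already belong to --- here I would invoke Lemma~\ref{lem:decontraction} (applicable since $\max\{\deg(u),\deg(v)\}\ge 4$ and $u,v$ are not cut vertices by the guard of Step~\ref{it:Max0}) to conclude that after stretching, $u,u',v,v'$ all fall into the ``$\xi=3,\sigma=1$'' regime or at worst stay in a regime no higher than before. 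Then I would show the sum $\sum(\max\{0,\deg-\xi\})^\sigma$ over $\{u,u',v,v'\}$ in $\varphi'$ is strictly smaller than the contribution of $\{u,v\}$ in $\varphi$: splitting a vertex of degree $d$ with exponent $\sigma\in\{1,2,3\}$ into two vertices of degrees $a+1$ and $d-a+1$ with $2\le a\le d-2$ strictly decreases the (possibly mixed-exponent) contribution because $(a-1)^{\sigma'}+(d-a-1)^{\sigma'} < (d-2)^\sigma$ when $\sigma\ge \sigma'$ and $2\le a\le d-2$ --- for $\sigma=\sigma'=1$ this is an equality $(a-1)+(d-a-1)=d-2$, so I must be slightly more careful and use that the exponent can only drop (a $\Delta$-critical vertex at a $2$-cut has $\sigma=2$ before, $\sigma=1$ after), or that the other endpoint genuinely loses potential. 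The new subdivision and cycle vertices have degree $2$ and contribute $0$, so they are free.

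The main obstacle I expect is the borderline case where the transfer is ``balanced'' and the naive exponent-$1$ estimate gives only $\le$, not $<$. I would resolve this by exploiting precisely the nonseparability and the assumption $\max\{\deg(u),\deg(v)\}=\Delta\ge 4$ from the Step~\ref{it:Max0} guard: the vertex of degree $\Delta$ is, before the operation, part of a proper $2$-cut but not a cut vertex, so it carries exponent $\sigma=2$, and $(\deg-2)^2$ with $\deg=\Delta\ge 4$ is at least $4$; after the stretch both of its pieces have strictly smaller degree and exponent at most $2$, and the convexity bound $(a-1)^2+(\Delta-a-1)^2<(\Delta-2)^2$ for $2\le a\le\Delta-2$ gives the strict drop. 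Combined with the fact that no vertex \emph{outside} $\{u,v\}$ changes degree (the stretches only touch $u,v$ and their mates, and only redistribute edges of $B$), and that Lemma~\ref{lem:decontraction} prevents creation of higher-$\sigma$ vertices, this yields $\Phi(\varphi')<\Phi(\varphi)$. I would then assemble these two estimates, $N(\varphi')\le N(\varphi)+4$ and $\Phi(\varphi')<\Phi(\varphi)$, to complete the proof.
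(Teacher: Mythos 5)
Your proposal follows the paper's own route: bound the number of new vertices created by the two \operation{Stretch} calls, and obtain the strict potential drop from convexity of the degree split, using Lemma~\ref{lem:decontraction} to rule out any increase of the exponents $\sigma$ (and decrease of $\xi$) and the nonseparability of $B$ to guarantee strictness. Your final paragraph is essentially the paper's argument: for the vertex $w\in\{u,v\}$ at which nonseparability bites, i.e.\ $2\le\deg_B(w)\le\deg(w)-2$ (which forces $\deg(w)\ge 4$), the contribution drops from $(\deg(w)-2)^2$ to at most $(\deg_B(w)-1)^2+(\deg(w)-\deg_B(w)-1)^2$, strictly, while all other contributions (including the split mates in adjacent atoms) cannot increase because degrees are only redistributed and, by Lemma~\ref{lem:decontraction}, no vertex moves to a higher-exponent regime. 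Two points in your write-up are off, though. Your intermediate claim that a vertex in a 2-cut has ``$\sigma=2$ before, $\sigma=1$ after'' is unjustified: Lemma~\ref{lem:decontraction} only guarantees the pieces are not cut vertices; since $u$ \emph{was} in a proper 2-cut, $u$ and $u'$ may still lie in proper 2-cuts afterwards, so the exponent can remain $2$. This does not hurt you because the exponent-2 convexity bound you give afterwards is the correct (and sufficient) argument, and the strictly decreasing vertex need not be the one of degree $\Delta$, only the one with the balanced split.

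The genuine flaw is in your accounting for $N(\varphi')\le N(\varphi)+4$. The ``shared structure per pipe'' claim is false: $u$ and $v$ are distinct vertices of $G_\varphi(\nu)$, so if both are virtual they correspond to two \emph{different} pipes, and the two \operation{Stretch} calls create disjoint new vertices, new pipes, and new cycle components --- nothing is shared or double counted. The intended count (and the paper's) comes from the remark following the definition of \operation{Stretch}: for the analysis its effect is the replacement of a vertex by an edge, creating one new vertex of degree at least $3$ in the affected local graph when the vertex is ordinary, and two such vertices (one in each of the two local graphs) when it is virtual, hence at most $2+2=4$ overall. If one literally also counts the degree-2 subdivision vertices of the formal operation, the bound per iteration is a larger constant (which would still be harmless for the later analysis, where such vertices contribute nothing to $\Phi$), but it does not reduce to $4$ by any sharing between the two calls; so either adopt the paper's ``replace a vertex by an edge'' accounting or accept the larger constant, rather than the argument you gave.
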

\begin{proof}
If $u$ (resp., $v$ is ordinary), then operation \operation{Stretch(.)} creates $1$ new vertex in $\mathcal{G}$;
and if it is virtual, it creates at most $2$ new vertices in $\mathcal{G}$. Overall at most $4$ new vertices are created, all of which are of degree 3 or higher. That is, $N(\varphi')\leq N(\varphi)+ 4$.

In the following we analyze how the operation impacts the degree of vertices.
If $u$ is an ordinary vertex, then $\operation{Stretch(.)}$ changes its degree
from $\deg(u)$ to $\deg(u)-\deg_B(u)+1\leq \deg(u)$;
and creates a new vertex of degree $\deg_B(u)+1\leq \deg(u)$.
If $u$ is a virtual vertex, corresponding to a pipe $\rho$, then
both virtual vertices corresponding to $\rho$ go through the same changes.

By Lemma~\ref{lem:decontraction}, the new vertices created by \operation{Stretch(.)}
are not cut vertices (although they may participate in proper 2-cuts).
If $u$ is a virtual vertex, then the corresponding virtual vertex $w$ in an adjacent atom $\mu$
might be a cut vertex or a vertex in a proper 2-cut.
By Lemma~\ref{lem:decontraction}, if $w$ or a new vertex $w'$ created by \operation{Stretch($u,.$)}
in $G_{\varphi'}(\mu)$ is a proper cut vertex (resp., contained in of a proper 2-cut),
then so is $w$ in $G_{\varphi}(\mu)$. In particular, for any
existing vertex $z$ in $G_\varphi(\mu)$ (including vertex $w$),
the exponent $\sigma(z)$ cannot increase, $\xi(z)$ cannot decrease.
Consequently, the contribution of $z$ to the potential cannot increase.

The change in the potential incurred by $u$ is most
$$(\deg_B(u)-1)^2+(\deg(u)-\deg_B(u)-1)^2-(\deg(u)-2)^2,$$
which is nonpositive by convexity, and equals to zero if and only if $\deg_B(u)=1$.
Since $B$ is a nonseparable $\{u,v\}$-bridge, we have $\max\{\deg_B(u),\deg_B(v)\}\geq 2$,
and the two \operation{Stretch(.)} operations at $u$ and $v$ jointly decrease the potential.
\end{proof}

\begin{corollary}\label{cor:Max0}
Let $\varphi_0$ be a normal instance of atomic embeddability and let $\Delta=\Delta(\varphi_0)$.
Then the while loop in Step~\ref{it:Max0} terminates and returns an instance $\varphi_1$,
after at most $\Phi(\varphi_0)-\Phi(\varphi_1)$ iterations,
such that
$N(\varphi_1)\leq N(\varphi_0)+12(\Phi(\varphi_0)-\Phi(\varphi_1))$.
\end{corollary}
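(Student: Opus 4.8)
The plan is to read off the corollary from Lemma~\ref{lem:isolate2Bridge} by a one‑line telescoping argument, once two easy points are in place: that each iteration of the while loop in Step~\ref{it:Max0} meets the hypothesis of that lemma, and that the potential $\Phi$ is integer‑valued (so that a strict decrease is a decrease by at least $1$).

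First I would dispose of the degenerate case $\Delta=\Delta(\varphi_0)\le 3$. In this case any pair $\{u,v\}$ forming a proper $2$‑cut of a local graph has $\deg(u),\deg(v)\le 3$ (indeed $=3$, since $2$‑cut vertices lie in $G_\varphi(\nu)^-$), and one checks directly that then $\deg_B(u)\in\{1,\deg(u)-1\}$ and $\deg_B(v)\in\{1,\deg(v)-1\}$ for every $\{u,v\}$‑bridge $B$, so no nonseparable bridge exists; hence the while condition in Step~\ref{it:Max0} is never satisfied, the loop makes no iteration, $\varphi_1=\varphi_0$, and both conclusions hold trivially. So I may assume $\Delta\ge 4$. (If the subroutine exits early reporting non‑embeddability it does not return an instance $\varphi_1$, so there is nothing to prove; thus I also assume the loop runs to completion.)

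Next I would write $\varphi_0=\psi_0,\psi_1,\ldots,\psi_t=\varphi_1$ for the successive instances produced by the iterations. At each iteration the while condition supplies a proper $2$‑cut $\{u,v\}$ with $\max\{\deg(u),\deg(v)\}=\Delta\ge 4$ together with a nonseparable $\{u,v\}$‑bridge, which is exactly what Lemma~\ref{lem:isolate2Bridge} requires; applying it to the transition $\psi_i\to\psi_{i+1}$ yields $\Phi(\psi_{i+1})<\Phi(\psi_i)$ and $N(\psi_{i+1})\le N(\psi_i)+4$. Since all vertex degrees are nonnegative integers and $\xi(\cdot)\in\{2,3\}$, $\sigma(\cdot)\in\{1,2,3\}$, each $\Phi(\psi_i)$ is a nonnegative integer; hence $\Phi(\psi_i)-\Phi(\psi_{i+1})\ge 1$, so $\Phi$ cannot decrease indefinitely and the loop terminates. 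Summing the per‑iteration potential drops gives
\[
t \;\le\; \sum_{i=0}^{t-1}\bigl(\Phi(\psi_i)-\Phi(\psi_{i+1})\bigr) \;=\; \Phi(\varphi_0)-\Phi(\varphi_1),
\]
which is the claimed bound on the number of iterations. Telescoping the vertex bound then gives
\[
N(\varphi_1)=N(\psi_t) \;\le\; N(\psi_0)+4t \;\le\; N(\varphi_0)+4\bigl(\Phi(\varphi_0)-\Phi(\varphi_1)\bigr) \;\le\; N(\varphi_0)+12\bigl(\Phi(\varphi_0)-\Phi(\varphi_1)\bigr).
\]

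I do not expect a real obstacle; the two things worth double‑checking are exactly the ones flagged above — that the while condition in Step~\ref{it:Max0} matches the hypothesis of Lemma~\ref{lem:isolate2Bridge} (which is why the reduction to $\Delta\ge 4$ is needed), and that integrality of $\Phi$ upgrades ``strictly decreasing'' to ``decreasing by at least one'', which is what bounds the iteration count by the potential drop. The constant $12$ is not tight here ($4$ would do); I retain $12$ only to keep a uniform form across the analogous bounds for the later steps of Subroutines~1 and~2.
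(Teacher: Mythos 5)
Your proposal is correct and follows essentially the same route as the paper: apply Lemma~\ref{lem:isolate2Bridge} to each iteration, use the (integer-valued) strict potential decrease to bound the iteration count by $\Phi(\varphi_0)-\Phi(\varphi_1)$, and telescope the per-iteration increase of at most $4$ vertices. Your extra care about the degenerate case $\Delta\le 3$, early exit, and the explicit integrality remark only makes explicit what the paper leaves implicit, and like the paper you observe that the constant $4$ already suffices in place of $12$.
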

\begin{proof}
By Lemma~\ref{lem:isolate2Bridge}, each iteration of the while loop in Step~\ref{it:Max0} decreases the potential.
Consequently, the while loop terminates, and performs at most $\Phi(\varphi)-\Phi(\varphi')$ iterations.
Each iteration applies up to two \operation{Stretch(.)} operations, at $u$ or $v$ for some proper $\{u,v\}$-cut,
and increases the number of vertices by at most $4$.
By Lemma~\ref{lem:isolate2Bridge}, the number of vertices increases by at most 4 times the decrease of the potential.
Summation over all iterations of the while loop in Step~\ref{it:Max0} yields
$N(\varphi_1)\leq N(\varphi_0)+4(\Phi(\varphi_0)-\Phi(\varphi_1))$.
\end{proof}

We can now focus on Steps~\ref{it:SubI3}--\ref{it:Max5}.

\begin{lemma}\label{lem:SubI3}
Let $\varphi_1$ be an instance returned by Step~\ref{it:Max0}.
Then the while loop in Step~\ref{it:SubI3} terminates after at most $N_{\geq 3}(\varphi_1)$ iterations,
and it returns an instance $\varphi_2$ such that
$N(\varphi_2)<N(\varphi_1)+2N_{\geq 3}(\varphi_1)$,
$N_{\geq 3}(\varphi_2)=N_{\geq 3}(\varphi_1)$, and
$\Phi(\varphi_2)=\Phi(\varphi_1)$.
\end{lemma}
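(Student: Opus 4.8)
The plan is to analyze a single iteration of the loop and then sum over the iterations. One iteration performs \operation{Enclose($B$)} followed by \operation{Suppress($\rho_B$)}. Translating the hypothesis into the language of $G^-_\varphi(\nu)$, the condition that $\{e,f\}$ is a proper $2$-edge-cut with $e=u_1v_1$, $f=u_2v_2$, and $v_1,v_2$ in a common $\{u_1,u_2\}$-bridge $B$ means that $B$ is attached to the rest of $G_\varphi(\nu)$ exactly through the edges $e$ and $f$ (so $\deg_B(u_1)=\deg_B(u_2)=1$), and — because the two suppressed edges are distinct — that the interior $B\setminus\{u_1,u_2\}$ is connected and contains a vertex of $G$ of degree $\ge 3$; in particular $B$ is not a subdivided edge. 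First I would spell out the net effect of \operation{Enclose($B$)} and the subsequent \operation{Suppress($\rho_B$)} (which applies since $\rho_B$ has degree $2$): the local graph $G_\varphi(\nu)$ is replaced by two vertex-disjoint local graphs — one is the old $G_\varphi(\nu)$ with the interior of $B$ contracted to a single ordinary vertex $p$ of degree $2$ adjacent to $u_1$ and $u_2$, and the other is $G_{\varphi_2}(\nu')$, consisting of the interior of $B$ together with a single ordinary vertex $q$ of degree $2$ joining the two former endpoints of $e$ and $f$ inside $B$. Every other local graph is untouched. Hence the only new vertices are $p$, $q$ (plus at most a bounded number of subdivision vertices that \operation{Enclose} inserts when $u_1$ or $u_2$ is virtual), all of degree $2$, and no pre-existing vertex changes its degree. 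This already gives $N_{\ge 3}(\varphi_2)=N_{\ge 3}(\varphi_1)$ and that $N$ grows by at most $2$ per iteration.

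For $\Phi(\varphi_2)=\Phi(\varphi_1)$ I would argue that the split changes no vertex's contribution to the potential. Vertices of degree $\le 2$ contribute $0$, and the exponent $\sigma(v)$ and threshold $\xi(v)$ depend only on whether $v$ is a proper $1$-cut or lies in a proper $2$-cut of its local graph; so it suffices to check that the two new graphs have exactly the same proper $1$-cuts and proper $2$-cuts among pre-existing vertices as $G_\varphi(\nu)$ had. On the $\nu$-side this holds because the subgraph $B$, attached to the rest only at $u_1,u_2$, has merely been replaced by the subdivided edge $u_1\,p\,u_2$ between the very same pair: since $u_1$ and $u_2$ are joined both before (through $B$) and after (through $p$), the $2$-connectivity, $2$-edge-connectivity, and $3$-connectivity of the rest are preserved verbatim, hence so is the proper-cut membership of each of its vertices. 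The one configuration requiring care is whether $\{u_1,u_2\}$ itself keeps, or keeps \emph{not} having, proper-$2$-cut status, and here one uses that the two suppressed edges are distinct: this forces $B$ not to be a subdivided edge and, together with a short count of the $\{u_1,u_2\}$-bridges, rules out the single degenerate case in which $\{u_1,u_2\}$ could change status. The $\nu'$-side is symmetric. This case analysis is where I expect the main difficulty to lie; everything else is bookkeeping.

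Finally, for termination and the iteration bound I would show that each iteration strictly decreases the total number, over all local graphs, of proper $2$-edge-cuts of the form processed in Step~\ref{it:SubI3}: the processed cut $\{e,f\}$ disappears (its two edges now lie in different atoms), the new subdivided edges $u_1\,p\,u_2$ and $v_1\,q\,v_2$ are not themselves such cuts, and by the previous paragraph no proper $2$-edge-cut elsewhere is created or destroyed — old cuts sharing an edge with $\{e^-,f^-\}$ get merged into a single new one, and cuts interior to $B$ are merely relocated to $\nu'$. Since this count is finite, the loop terminates. Moreover every such cut has a ``pocket'' $B$ that is not a subdivided edge and therefore contains a vertex of degree $\ge 3$ of $G$ in its interior, and one can injectively charge each such cut to one such vertex (the pockets form a laminar family of vertex sets), so the count is at most $N_{\ge 3}(\varphi_1)$. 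Hence the loop runs for at most $N_{\ge 3}(\varphi_1)$ iterations, each adding at most two degree-$2$ vertices, which yields $N(\varphi_2)<N(\varphi_1)+2N_{\ge 3}(\varphi_1)$; and iterating the first two paragraphs over all iterations gives $N_{\ge 3}(\varphi_2)=N_{\ge 3}(\varphi_1)$ and $\Phi(\varphi_2)=\Phi(\varphi_1)$.
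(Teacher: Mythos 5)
Your per-iteration accounting follows the paper: the combined effect of \operation{Enclose($B$)} and \operation{Suppress($\rho_B$)} is to delete $e$ and $f$ and to rejoin $u_1,u_2$ and $v_1,v_2$ by subdivided edges through new ordinary vertices of degree~2, so the degrees of pre-existing vertices are unchanged and $N_{\geq 3}$ is preserved. The genuine gap is in your bound on the number of iterations. You bound the iterations by the number of proper 2-edge-cuts in $\varphi_1$ and then claim this number is at most $N_{\geq 3}(\varphi_1)$ because the pockets form a laminar family and each cut can be charged injectively to a degree-$\geq 3$ vertex in its pocket. That is false: sides of proper 2-edge-cuts cross. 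Take a local graph consisting of $k$ disjoint copies of $K_4$ joined into a ring by $k$ single edges $e_1,\dots,e_k$ (this survives Step~\ref{it:Max0}, since all bridges of its proper 2-cuts are separable). Every pair $\{e_i,e_j\}$ is a proper 2-edge-cut, so there are $\binom{k}{2}$ of them while $N_{\geq 3}=4k$; for $k\geq 10$ the count exceeds $N_{\geq 3}$, and the pockets of $\{e_1,e_3\}$ and $\{e_2,e_4\}$ cross, so no injective charging of the kind you describe exists. Your decreasing measure does prove termination, but it does not yield the stated bound of $N_{\geq 3}(\varphi_1)$ iterations, and with it the bound $N(\varphi_2)<N(\varphi_1)+2N_{\geq 3}(\varphi_1)$ is not established. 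The paper argues differently and more simply: each iteration increases the number of connected components of $\mathcal{G}$ by one, each resulting component contains a vertex of degree at least~3 (otherwise one side of the cut would be a path and the cut would not be proper), and $N_{\geq 3}$ never changes; hence there are at most $N_{\geq 3}(\varphi_1)-1$ iterations. In the ring example the loop indeed runs only about $k-1$ times, because a single iteration destroys many cuts at once, so the initial number of cuts is the wrong quantity to charge.

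Two smaller points. Your parenthetical about subdivision vertices created by \operation{Enclose} when an endpoint is virtual sits uneasily with ``$N$ grows by at most 2 per iteration''; the paper's proof likewise counts only the two vertices $w_u,w_v$, but if you invoke the subdivision vertices you must either argue they do not arise here or absorb them into the per-iteration constant. Also, your assertion that the ``single degenerate case'' in which $\{u_1,u_2\}$ could change proper-2-cut status is ruled out is not justified: if $\{u_1,u_2\}$ has exactly two bridges, $B$ and one other, then after the operation the $B$-side becomes a subdivided edge and $\{u_1,u_2\}$ ceases to be a proper 2-cut, so the contributions of $u_1,u_2$ (and symmetrically $v_1,v_2$) to $\Phi$ can drop; what one actually obtains, and all that the subsequent analysis uses, is $\Phi(\varphi_2)\leq\Phi(\varphi_1)$.
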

\begin{proof}
Let $\varphi$ be an instance at the beginning of one iteration of the while loop in Step~\ref{it:SubI3}.
Since $\varphi$ is normal, every local graph $G_\varphi(\nu)$ is connected. Let $\{e,f\}$  be a proper 2-edge-cut in $G_\varphi(\nu)$, let $e=u_1v_1$ and $f=u_2v_2$ such that both $v_1$ and $v_2$ are in a $\{u_1,u_2\}$-bridge $B$.
Note that each component of $G_\varphi(\nu)\setminus \{e,f\}$ contains a vertex that has degree at least 3 in $G_\varphi(\nu)$, otherwise one of the components would be a path, and the 2-edge-cut would not be proper.
In one iteration of Step~\ref{it:SubI3}, an operation \operation{Enclose(.)} replaces $G_\varphi(\nu)$ with two local graphs obtained by removing edges $e$ and $f$, and inserting two new paths $(u_1,w_u,u_2)$ and $(v_1,w_v,v_2)$, where $w_u$ and $w_v$ are new ordinary vertices of degree 2. In particular, $\Phi$ and the number of vertices of degree at least 3 do not change, and the total number of vertices in $\mathcal{G}$ increases by 2.

It follows that $\Phi(\varphi_2)=\Phi(\varphi_1)$ and $N_{\geq 3}(\varphi_2)=N_{\geq 3}(\varphi_1)$.
Since each iteration in the while loop of Step~\ref{it:SubI3} increases the number of components of $\mathcal{G}$, but each new component contains at least one vertex of degree 3 or higher, the number of iterations is at most $N_{\geq 3}(\varphi)-1$.
Summation over all iterations yields $N(\varphi_2)\leq N(\varphi_1)+2N_{\geq 3}(\varphi_1)-2$.
\end{proof}

Now we are ready to show that Subroutine~1 terminates.
\begin{corollary}\label{cor:S1-runtime}
For an instance $\varphi_0$ of atomic embeddibility of size $n$, Subroutine~1 terminates.
\end{corollary}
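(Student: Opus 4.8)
The plan is to prove termination of Subroutine~1 by handling its three while loops---Steps~\ref{it:Max0}, \ref{it:SubI3}, and~\ref{it:Max1}---one at a time. The first two loops are already settled: Corollary~\ref{cor:Max0} shows that the loop in Step~\ref{it:Max0} terminates and returns an instance $\varphi_1$, and Lemma~\ref{lem:SubI3} shows that the loop in Step~\ref{it:SubI3} terminates on input $\varphi_1$ and returns $\varphi_2$. Hence the entire claim reduces to showing that the loop in Step~\ref{it:Max1} terminates when started from $\varphi_2$.

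For the loop in Step~\ref{it:Max1}, write $\Delta=\Delta(\varphi_0)$ and observe first that the only operations it performs are \operation{Enclose} and \operation{Suppress}, and that neither creates a vertex of degree $\Delta$: \operation{Suppress} only introduces vertices of degree at most $2$, while each application of \operation{Enclose} to a $\{v\}$-bridge $B$ introduces degree-$2$ subdivision vertices together with two new virtual vertices of degree $\deg_B(v)$, and $\deg_B(v)\le\deg(v)-1=\Delta-1$ because the cut vertex $v$ has at least two $\{v\}$-bridges. I would then track the number $\kappa(\varphi)$ of pairs $(\nu,v)$, with $\nu\in V(H)$, such that $v$ is a proper cut vertex of $G_\varphi(\nu)$ of degree $\Delta$ and $G_\varphi(\nu)$ is not a p-star centered at $v$, and argue that each iteration strictly decreases this nonnegative integer. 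An iteration picks such a pair $(\nu,v)$ and encloses every $\{v\}$-bridge of $G_\varphi(\nu)$ (suppressing a freshly created pipe whenever its degree drops to at most $2$); by the description of \operation{Enclose} each bridge is thereby replaced by a p-path with one pole at $v$, so $G_\varphi(\nu)$ becomes a p-star centered at $v$ and $(\nu,v)$ no longer contributes to $\kappa$. It remains to check that no new contributing pair appears. The new vertices all have degree less than $\Delta$, as noted, so they cannot contribute; and for the pre-existing vertices no degree changes, while---arguing exactly as in Lemma~\ref{lem:decontraction}---their status of being or not being a proper cut vertex is preserved: an adjacent local graph $G_\varphi(\mu)$ is affected only by renaming a pipe and so is unchanged as a graph; a newly created local graph $G_\varphi(\nu')$ is, up to isomorphism, the enclosed bridge $B$ with the pole $v$ removed and a single new vertex joined to the former neighbours of $v$ in $B$, which has the same block and $2$-cut structure as $B\setminus\{v\}$; and inside $G_\varphi(\nu)$ the bridge decomposition at any vertex $z\ne v$ depends only on the $\{v\}$-bridge containing $z$, hence is unchanged when that bridge's interior is collapsed to a single vertex. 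Consequently $\kappa$ strictly decreases at each iteration, the loop in Step~\ref{it:Max1} performs at most $\kappa(\varphi_2)$ iterations, and together with Corollary~\ref{cor:Max0} and Lemma~\ref{lem:SubI3} this shows that Subroutine~1 terminates.

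The step I expect to be the main obstacle is the verification that detaching a bridge into a brand-new atom cannot turn some vertex of degree $\Delta$ lying inside that bridge into a proper cut vertex---which would reinsert it into $\kappa$ and threaten termination. The resolution is the elementary (if slightly tedious) observation that deleting a vertex and reattaching all of its incident edges to a single new vertex changes neither the set of cut vertices nor the set of vertices in $2$-cuts among the remaining vertices; this is the same mechanism that underlies Lemma~\ref{lem:decontraction}, applied here to the identification performed by \operation{Enclose} rather than to the contraction performed by \operation{Stretch}.
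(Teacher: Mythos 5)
Your proposal is correct and takes essentially the same route as the paper: the first two loops are dispatched by Corollary~\ref{cor:Max0} and Lemma~\ref{lem:SubI3}, and the loop in Step~\ref{it:Max1} is handled by a strictly decreasing monovariant, namely the number of degree-$\Delta$ vertices whose local graph has not yet become a p-star (the paper counts degree-$\Delta$ vertices in local graphs that are not p-stars, you count degree-$\Delta$ cut vertices that are not yet p-star centers---the same idea). Your extra verification that \operation{Enclose} creates no new degree-$\Delta$ cut vertices and leaves adjacent and newly created local graphs' cut structure intact merely spells out what the paper leaves implicit.
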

\begin{proof}
 By Corollary~\ref{cor:Max0} the while loop in Step~\ref{it:Max0} terminates.
By Lemma~\ref{lem:SubI3}, Step~\ref{it:SubI3} terminates and eliminates all proper 2-edge-cuts containing an edge that is incident to a vertex of degree $\Delta=\Delta(\varphi_0)$. Finally, in the while loop of Step~\ref{it:Max1}, each iteration decreases the number of vertices of degree $\Delta$ in local graphs that are not p-stars. Therefore this while loop terminates, as well.
\end{proof}

Note that Step~\ref{it:Max1} increases the potential. We analyse the combined effect of Steps~\ref{it:Max1}--\ref{it:Max5}, and show that they jointly decrease the potential, and we can charge the number of operations, as well as the number of new vertices to the decrease of the potential. The following observation will be helpful.

\begin{lemma}\label{lem:degrees2}
Let $d$ and $d_1,\ldots ,d_k$ be positive integers such that $d=\sum_{i=1}^k d_i$, $d\geq 4$, and $k\geq 2$.
Then we have
\begin{equation}\label{eq:deg5}
(d-2)^3\geq (2d-5)+\sum_{i=1}^k (\max\{0,d_i-2\})^2 + \sum_{i=1}^k (\max\{0,d_i-2\})^3.
\end{equation}
\end{lemma}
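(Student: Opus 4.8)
The plan is to prove the inequality by a case distinction on how many of the $d_i$ are at least $3$, since only those contribute to the two sums on the right-hand side. First I would observe that if at most one index $i$ satisfies $d_i\ge 3$, then by $k\ge 2$ and $d=\sum d_i$ we have $d_i\le d-1$ for that index (the remaining $d_j\ge 1$ for $j\ne i$), so the right-hand side is at most $(2d-5)+(d-3)^2+(d-3)^3$, and it suffices to verify $(d-2)^3\ge (2d-5)+(d-3)^2+(d-3)^3$ for all integers $d\ge 4$. Expanding, $(d-2)^3-(d-3)^3 = 3d^2-15d+19$ (a standard cubic difference), so the claim reduces to $3d^2-15d+19 \ge 2d-5+(d-3)^2 = d^2-4d+4$, i.e. $2d^2-11d+15\ge 0$, which factors as $(2d-5)(d-3)\ge 0$, true for $d\ge 3$. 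This single-term case is the tightest and really the heart of the lemma.

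Next I would handle the case where two or more of the $d_i$ are at least $3$. The key point is superadditivity: for the function $f(t)=\max\{0,t-2\}^2+\max\{0,t-2\}^3$ restricted to integers $\ge 1$, splitting a value $d$ into two pieces $d=a+b$ with $a,b\ge 1$ strictly decreases $\sum f$ unless one piece is $1$ or $2$; more usefully, $f(a)+f(b)\le f(a+b-1)$ whenever $a,b\ge 2$ (moving one "unit of excess" from one pile to the other only increases a convex power sum, but here we also lose degree). Concretely I would argue that among all decompositions of $d$ into $k\ge 2$ positive parts, $\sum_{i} f(d_i)$ is maximized when there is a single large part, reducing this case to the already-proved single-term bound with a strict inequality to spare. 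One clean way: repeatedly merge any two parts $d_i,d_j$ with $d_j\ge 3$ into a single part $d_i+d_j-1$ together with a new part $1$; one checks $f(d_i)+f(d_j)\le f(d_i+d_j-1)+f(1)=f(d_i+d_j-1)$ by a short monotonicity computation (expand $(x+y)^2\ge x^2+y^2$ and $(x+y)^3\ge x^3+y^3$ with $x=d_i-2\ge 1$, $y=d_j-2\ge 1$, and note $d_i+d_j-1-2=x+y+1>x+y$ so we even gain). After finitely many merges we reach the single-term configuration, where the bound holds, and since each merge did not decrease the left side (which depends only on $d$) while not decreasing the right side's slack, we are done.

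I expect the main obstacle to be the bookkeeping in the merging argument: one must be careful that merging preserves the constraints ($d\ge 4$, parts positive) and that the inequality $f(d_i)+f(d_j)\le f(d_i+d_j-1)$ is genuinely valid for all relevant integer values, including small ones like $d_i=d_j=3$ (there $f(3)+f(3)=2(1+1)=4$ versus $f(5)=9+27=36$, comfortably fine) and boundary cases where one part is $1$ or $2$ and contributes nothing. An alternative that sidesteps merging entirely is a direct convexity estimate: since $\sum_i (d_i-2)$ can be negative term-by-term but $\sum_{d_i\ge 3}(d_i-2)\le d-2-(\text{number of parts with }d_i\le 2)\le d-3$ when $k\ge 2$, one bounds $\sum f(d_i)\le (d-3)^2+(d-3)^3$ by noting $\sum x_i^2\le(\sum x_i)^2$ and $\sum x_i^3\le(\sum x_i)^3$ for nonnegative reals, and then invokes the single-term computation above. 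This second route is shorter and I would likely present it, keeping the merging picture only as intuition.
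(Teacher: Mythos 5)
Your preferred (second) route is correct and genuinely different from the paper's proof. The paper argues by cases on $k$ (the case $k=d$ where all $d_i=1$, the case $2<k<d$, and the case $k=2$); in the middle case it expands $(d-2)^3=\bigl((k-2)+\sum_{i=1}^k(d_i-1)\bigr)^3$ multinomially, drops nonnegative cross terms, and uses $3(k-2)^2(d-k)\ge 2d-5$ together with the per-term estimate $(d_i-1)^3\ge(\max\{0,d_i-2\})^3+(\max\{0,d_i-2\})^2$. You instead establish the uniform bound $\sum_i\max\{0,d_i-2\}\le d-3$ (this is exactly where $k\ge2$ enters), invoke superadditivity $\sum_i x_i^2\le(\sum_i x_i)^2$ and $\sum_i x_i^3\le(\sum_i x_i)^3$ for nonnegative reals, and reduce the whole lemma to the one-variable inequality $(d-2)^3\ge(2d-5)+(d-3)^2+(d-3)^3$, which you correctly verify via $2d^2-11d+15=(2d-5)(d-3)\ge0$. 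This is shorter and cleaner than the paper's expansion, and since your convexity route in fact covers every configuration at once, the preliminary split into ``at most one $d_i\ge3$'' versus ``at least two'' and the merging sketch (which is only heuristic as written) can be discarded, as you yourself propose. One small repair is needed: the chain $\sum_{d_i\ge3}(d_i-2)\le d-2-\#\{i:d_i\le2\}\le d-3$ fails as written in the subcase where every part satisfies $d_i\ge3$, since then there are no small parts to subtract; but in that subcase $\sum_i(d_i-2)=d-2k\le d-4$ because $k\ge2$, so the key bound $\sum_i\max\{0,d_i-2\}\le d-3$ holds in all cases and your argument goes through.
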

\begin{proof}
We distinguish among three cases. In Case~1, $k=d$ (hence $d_i=1$ for all $i\in[k]$).
Then the right hand side of \eqref{eq:deg5} is less than $2(d-2)$. Clearly, $2(d-2)\leq(d-2)^3$ for $d\geq 4$.

In Case~2, we assume that $2<k<d$. First note that for every $i\in [k]$, we have $d_i-1\ge \max\{0,d_i-2\}$,
and so $(d_i-1)^3\geq (\max\{0,d_i-2\})^3+ (\max\{0,d_i-2\})^2$. Elementary calculation yields

\begin{align}
(d-2)^3 =&\left(\left(\sum_{i=1}^k d_i\right)-2\right)^3
        =\left(\left(\sum_{i=1}^k d_i\right)-k+k-2\right)^3
        =\left((k-2)+\sum_{i=1}^k (d_i-1)\right)^3 \nonumber\\
        =&(k-2)^3+\sum_{i=1}^k (d_i-1)^3+ 3(k-2)^2(d-k)+3(k-2)\sum_{i=1}^k \left(d_i-1\right)^2+ \nonumber \\
         &+6(k-2)\sum_{i=1}^k(d_i-1)(d-d_i-k+1)
          +\sum_{i=1}^k(d_i-1)\sum_{j=1}^k(d_j-1)(d-d_j-k+1) \nonumber\\
        \geq& \sum_{i=1}^k (d_i-1)^3+ 3(k-2)^2(d-k)
        \geq  \sum_{i=1}^k (\max\{0,d_i-2\})^3+ \sum_{i=1}^k(\max\{0,d_i-2\})^2 + (2d-5), \nonumber
\end{align}
where we have dropped some nonegative terms, and used the inequality $3(k-2)^2(d-k)\geq 3(d-3)\geq (2d-5)+(d-4)\geq 2d-5$ for $d\geq 4$.

In Case~3, assume $k=2$. Then, using $d\geq 4$ again, we have

\begin{align}
(d-2)^3 &= \left(\sum_{i=1}^2 (d_i-1)\right)^3
        \geq \sum_{i=1}^2(d_i-1)^3+ 3\sum_{i=1}^2 (d_i-1)= \sum_{i=1}^2(d_i-1)^3+ 3(d-2) \nonumber\\
        &>  \sum_{i=1}^k (\max\{0,d_i-2\})^3+ \sum_{i=1}^k(\max\{0,d_i-2\})^2 + (2d-5), \nonumber
\end{align}
as claimed.
\end{proof}

\begin{lemma}\label{lem:MaxDegree}
Consider Steps~\ref{it:Max1}--\ref{it:Max5} in an invocation of Subroutine~1 followed by Subroutine~2.
We have $\Phi(\varphi_2)>\Phi(\varphi_7)$, $N(\varphi_7)\leq N(\varphi_2)+ O(\Phi(\varphi_2)-\Phi(\varphi_7))$,
and the number of operations performed in Steps~\ref{it:Max1}--\ref{it:Max5} is $O(\Phi(\varphi_2)-\Phi(\varphi_7))$.
\end{lemma}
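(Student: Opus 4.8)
The plan is to analyze Steps~\ref{it:Max1}--\ref{it:Max5} one step at a time, tracking the three parameters $\Phi$, $N$, and (where needed) $N_{\geq 3}$, and to set up a charging scheme in which every new vertex and every elementary operation is paid for by a bounded fraction of the drop in potential $\Phi(\varphi_2)-\Phi(\varphi_7)$. The conceptual core is the following: a $\Delta$-critical vertex $v$ (degree $\Delta\geq 4$, lying outside a toroidal cycle) that is a proper cut vertex contributes $(\Delta-2)^3$ to $\Phi$ by the definition of the potential (since $\xi(v)=2$, $\sigma(v)=3$ for proper cut vertices). When such a $v$ is processed—first by \operation{Enclose($.$)} in Step~\ref{it:Max1}, turning each bridge into a p-path and creating a pair of virtual vertices (whose degrees are the $d_i = \deg_B(v)$), then by \operation{Contract($.$)} or \operation{Stretch($.$)} in Step~\ref{it:Max3c}/\ref{it:Max5} and Step~\ref{it:Max2}—the cubic term $(\Delta-2)^3$ is destroyed, and in its place we get only quadratic and cubic terms in the smaller degrees $d_i-2$, plus a bounded additive slack. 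Lemma~\ref{lem:degrees2} is precisely the inequality certifying that $(\Delta-2)^3$ strictly dominates $\sum_i(\max\{0,d_i-2\})^2 + \sum_i(\max\{0,d_i-2\})^3 + (2\Delta-5)$; the extra $2\Delta-5$ slack is what we use to absorb the constant number of extra operations and new degree-$2$ vertices that each step introduces per $\Delta$-critical vertex, and to give a strict net decrease.

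Concretely, I would proceed as follows. First, for Step~\ref{it:Max1}: a single proper cut vertex $v$ of degree $\Delta$ gets replaced—after enclosing all its $k\geq 2$ bridges $B_1,\dots,B_k$ with $\deg_{B_i}(v)=d_i$, $\sum d_i=\Delta$—by the center of a p-star together with $k$ new virtual-vertex pairs, the $i$-th of which has degree $d_i+\text{(contribution of the enclosing boundary)}$. The potential change at $v$ and the new vertices is bounded above by $\sum_i (\max\{0,d_i-1\})^3 - (\Delta-2)^3 + O(k)$, and since $d_i-1 \ge \max\{0,d_i-2\}$ gives $(d_i-1)^3 \ge (\max\{0,d_i-2\})^3 + (\max\{0,d_i-2\})^2$, Lemma~\ref{lem:degrees2} yields a strict decrease with $\Theta(1)$ slack per such $v$; by Lemma~\ref{lem:decontraction} (and the p-star structure) no $\sigma$ increases and no $\xi$ decreases anywhere else, so the global $\Phi$ can only drop further. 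The number of new vertices and operations in Step~\ref{it:Max1} for this $v$ is $O(k) = O(\Delta)$, which is $O((\Delta-2)^3 - [\text{new cubic/quadratic terms}])$, i.e.\ $O(\text{decrease of }\Phi)$. Next, for Steps~\ref{it:Max2}--\ref{it:Max3c}: each \operation{Contract($.$)} (Steps~\ref{it:Max2a},\ref{it:Max3b}) strictly merges two local graphs and eliminates a pair of degree-$\Delta$ virtual vertices, removing two quadratic or cubic terms while creating none of degree $\Delta$—a strict drop by the convexity argument used already in Lemma~\ref{lem:isolate2Bridge}; each \operation{Stretch($.$)} (Steps~\ref{it:Max2b},\ref{it:Max3a},\ref{it:Max3c},\ref{it:Max4}) splits a vertex of degree $\Delta$ into two of degrees $\ell+1$ and $\Delta-\ell+1$ with $\ell=\lfloor\Delta/2\rfloor$ (or, in Step~\ref{it:Max3c}, iteratively into a binary tree), and strict convexity of $(\cdot)^{\sigma}$ with $\sigma\in\{1,2\}$ (the fixed-rotation / 2-cut cases; $\sigma=3$ is impossible here by $\Delta$-niceness and Lemma~\ref{lem:decontraction}) gives a strict decrease, again with $\Theta(1)$ slack. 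For Step~\ref{it:Max5}, \operation{Detach($v$)} replaces a degree-$\Delta$ ordinary vertex in a 1- or 2-cut by $\Delta$ leaves, wiping out a quadratic or cubic term entirely. In every case the new-vertex count and operation count per processed $\Delta$-critical vertex is $O(\Delta)$, bounded by a constant times the local drop in $\Phi$; summing over all $\Delta$-critical vertices (each handled exactly once, by Lemma~\ref{lem:S2}) gives $N(\varphi_7)\le N(\varphi_2)+O(\Phi(\varphi_2)-\Phi(\varphi_7))$ and an operation bound of $O(\Phi(\varphi_2)-\Phi(\varphi_7))$, and since at least one $\Delta$-critical vertex exists when $\Delta\ge4$ (else the step is vacuous), $\Phi(\varphi_2)>\Phi(\varphi_7)$ strictly.

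The main obstacle I expect is bookkeeping the side effects of \operation{Enclose($.$)} and \operation{Stretch($.$)} on \emph{adjacent} atoms and on vertices that are not $\Delta$-critical: when a virtual vertex is stretched, its partner in the neighboring local graph changes too, and a vertex that was only in a proper 2-cut might—a priori—become a proper cut vertex, which would raise $\sigma$ from $2$ to $3$ and could in principle increase $\Phi$. This is exactly what Lemma~\ref{lem:decontraction} rules out, and the proof must invoke it carefully in each of Steps~\ref{it:Max1},~\ref{it:Max3a},~\ref{it:Max3c},~\ref{it:Max4} to certify that $\sigma$ never increases and $\xi$ never decreases at any pre-existing vertex, so that the only changes to $\Phi$ are the intended ones localized at the processed vertex (and its partner) and at the newly created vertices. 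A secondary subtlety is that Step~\ref{it:Max1} genuinely \emph{increases} the potential in isolation (it creates p-stars and new virtual vertices), so the decrease must be extracted from the \emph{combined} effect of Steps~\ref{it:Max1}--\ref{it:Max5}: one should argue that every p-star center created in Step~\ref{it:Max1} either has degree $<\Delta$ already or is subsequently eliminated in Steps~\ref{it:Max3b}/\ref{it:Max3c}, so the transient cubic terms introduced in Step~\ref{it:Max1} are all cancelled by the end of Step~\ref{it:Max5}, and the net effect is governed by Lemma~\ref{lem:degrees2} applied once per original $\Delta$-critical cut vertex.
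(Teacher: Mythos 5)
Your proposal follows essentially the same route as the paper's proof: you charge the combined effect of Steps~\ref{it:Max1}--\ref{it:Max5} to each degree-$\Delta$ vertex of $\varphi_2$ (with the same case split into fixed-rotation vertices, poles of p-paths, and proper cut vertices), invoke Lemma~\ref{lem:degrees2} to show that eliminating a cut vertex yields a net potential drop of order $\Delta$ that also pays for the $O(\Delta)$ new vertices and operations, and use Lemma~\ref{lem:decontraction} to rule out increases of $\sigma$ (or decreases of $\xi$) at pre-existing vertices. The only slip is your second paragraph's suggestion that Step~\ref{it:Max1} alone already produces a strict decrease at a cut vertex $v$ (it does not, since $v$ remains a degree-$\Delta$ proper cut vertex as the p-star center and the potential temporarily rises), but your final paragraph corrects this by attributing the decrease to the combined effect of Steps~\ref{it:Max1}--\ref{it:Max5}, exactly as the paper does.
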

\begin{proof}
None of these steps increases the number of vertices of degree $\Delta$ or higher in local graphs. Ultimately all
vertices of degree $\Delta$ outside of toroidal cycles are eliminated.

\paragraph{Overview.}
Each operation in Steps~\ref{it:Max1}--\ref{it:Max5} is associated to either a unique vertex of degree $\Delta$,
or two virtual vertices of degree $\Delta$ that correspond to the same pipe.
In Step~\ref{it:Max1} and Steps~\ref{it:Max4}--\ref{it:Max5}, this is vertex $v$;
in Steps~\ref{it:Max2}--\ref{it:Max3}, these are virtual vertices $u$ and $v$ corresponding to the pipe $\mu\nu$.
We consider each vertex $v$ of degree $\Delta$ in the instance $\varphi_2$, and analyse how the operations associated with $v$ change the potential and the total number of vertices over Steps~\ref{it:Max1}--\ref{it:Max5}.
Let $D(\Phi,v)$ and $D(N,v)$, resp., denote the changes in $\Phi(.)$ and $N(.)$ incurred by the operations associated with vertex $v$.
We claim that for every vertex $v$ of degree $\Delta$ in $\mathcal{G}_2$, we have
\begin{equation}\label{eq:charge1}
D(\Phi,v)\leq 0,
\end{equation}
with equality if and only if $v$ is a a local graph $G_{\varphi_2}(\nu)$ where $\nu$ is in a toroidal cycle; and

\begin{equation}\label{eq:charge2}
D(N,v)+20\ D(\Phi,v)\leq 0.
\end{equation}
Note that \eqref{eq:charge1} holds with a strict inequality for at least one vertex $v$. Indeed, we have $\Delta=\Delta(\varphi_3)=\Delta(\varphi_2)$, and so there is a vertex of degree $\Delta$ in some local graph of
$\varphi_2$ outside of toroidal cycles. Summation over all vertices of degree $\Delta$ then yields

\begin{align}
\Phi(\varphi_7)&=\Phi(\varphi_2)+\sum_{u\in V(\mathcal{G}_{2}):\deg(u)=\Delta} D(\Phi,u)< \Phi(\varphi_2),\nonumber\\
N(\varphi_7)   &=N(\varphi_2)+\sum_{u\in V(\mathcal{G}_{2}):\deg(u)=\Delta} D(N,u)\leq N(\varphi_2)+20(\Phi(\varphi_2)-\Phi(\varphi_7)).\nonumber
\end{align}

\paragraph{Elimination of p-paths.} Recall that each iteration of Step~\ref{it:Max2a} applies \operation{Contract($\mu\nu$)} on a pipe $\mu\nu$ corresponding to virtual vertices $u$ in $G_\varphi(\mu)$
and $v$ in $G_\varphi(\nu)$. Without loss of generality, assume that $G_\varphi(\mu)$ is a p-path with poles $u$ and $w$. Operation \operation{Contract($\mu\nu$)} eliminates $u$ and $v$, and creates a new local graph $G_\varphi(\langle\mu\nu\rangle)$ where $G^-_\varphi(\langle\mu\nu\rangle)$ is isomorphic to $G_\varphi(\nu)$.
For the analysis of $D(\Phi,.)$ and $D(N,.)$, we assume that this operation eliminates $u$ and $w$; and
vertex $v$ of $G_\varphi(\nu)$ survives in $G_\varphi(\langle\mu\nu\rangle)$. Thus, the effect of \operation{Contract($\mu\nu$)} is neutral for $v$, although $v$ may become an ordinary vertex if $w$ is ordinary before the operation.

Inequalities \eqref{eq:charge1}--\eqref{eq:charge2} clearly hold for any vertex $v$ in toroidal cycles.
For all other vertices of degree $\Delta$, we distinguish between three cases as follows.

\paragraph{Vertices of fixed rotation.}
Let $v$ be a vertex of fixed rotation with $\deg(v)=\Delta=\Delta(\varphi_2)$ in $\varphi_2$.
If $v$ is an ordinary vertex, then Steps~\ref{it:Max0}--\ref{it:Max4} do not change $v$,
and in Step~\ref{it:Max4} a \operation{Stretch($v,.$)} operation replaces $v$ with two vertices
$v_1$ and $v_2$, where $\deg(v_1)+\deg(v_2)=\Delta+2$, and $\min\{\deg(v_1),\deg(v_2)\}\geq 3$.
In this case, $D(\Phi,v)=(\deg(v_1)-3)+(\deg(v_2)-3)-(\Delta-3)=-1$, and $D(N,v)=1$.
If $v$ is a virtual vertex, then $\deg(v)$ decreases in either Step~\ref{it:Max2b},~\ref{it:Max3a} or~\ref{it:Max3c}.
In Step~\ref{it:Max2b} or~\ref{it:Max3a} one \operation{Stretch(.)} operation has the same effect on the potential
as for ordinary vertices, $D(\Phi,v)=-1$, but it creates two new vertices, and so $D(N,v)=2$.
In Step~\ref{it:Max3c}, $\Delta-3$ successive  \operation{Stretch(.)} operations
replace $v$ with $\Delta-2$ vertices of fixed orientation with degree 3.
Thus, $D(\Phi,v)=0-(\Delta-2)$, and $D(N,v)\leq \Delta-3$.
In all cases, \eqref{eq:charge1}--\eqref{eq:charge2} follow.

\paragraph{Poles of p-paths.}
Let $v$ be a pole of a p-path $G_{\varphi_2}(\nu)$, with $\deg(u)=\Delta=\Delta(\varphi_2)$.
Denote the other pole of the p-path by $u$, where obviously $\deg(u)=\deg(v)=\Delta$.
If both $u$ and $v$ are ordinary, then Steps~\ref{it:Max0}--\ref{it:Max3} do not change $G_{\varphi_2}(\nu)$.
The \operation{Detach(.)} operation in Step~\ref{it:Max5} replaces $v$
with $\Delta$ new vertices of degree 1. Thus, $D(\Phi,v)=-(\Delta-2)^2$
and $D(N,v)=\Delta-1$, thus \eqref{eq:charge1}--\eqref{eq:charge2} follow.

Assume that $u$ or $v$ is a virtual vertex. Then a \operation{Contract(.)} operation in Step~\ref{it:Max2a} eliminates both $u$ and $v$. We have $D(\Phi,u)=-(\Delta-2)^2$ and $D(N,u)=-1$.

Step~\ref{it:Max2b} applies \operation{Stretch($.$)} to a pair of virtual vertices $u_1$ and $u_2$ of $G_\varphi(\mu)$. Thus, we have $D(\Phi,u_1)=D(\Phi,u_2)<0$ and $D(N,u_1)=D(N,u_2)=1$.

\paragraph{Proper cut vertices.}
Let $v$ be a proper cut vertex in $\varphi_2$ with $\deg(v)=\Delta=\Delta(\phi_2)$ in some local graph $G_{\varphi_2}(\nu)$. Assume $v$ has $k\geq 2$ bridges $B_1,\ldots , B_k$, and $\deg_{B_i}(v)=d_i$ for all $i\in [k]$. Step~\ref{it:Max1} successively encloses the $k$ bridges. Note that $v$ remains a proper cut vertex of degree $\Delta$. Step~\ref{it:Max1} creates new virtual vertices $v_1,\ldots ,v_k$ in the p-star centered at $v$, where $\deg(v)=\sum_{i=1}^k \deg(v_i)$. Every new virtual vertex $v_i$, $i\in [k]$, is part of a proper 2-cut $\{v,v_i\}$.

Moreover, every virtual vertex $v_i$, $i\in [k]$, corresponds to another virtual vertex $v_i'$ in the local graph of an atom created by enclosing $B_i$; this local graph is isomorphic to $B_i$, where $v_i'$ plays the role of $v$. In particular $v_i'$ cannot be a cut vertex, but it may be contained in a proper 2-cut. At the end of Step~\ref{it:Max1}, we obtain a $\Delta$-nice instance $\varphi_3$ in which none of the local graphs containing a virtual vertex $v_i'$, $i\in [k]$, is a p-path or a p-star contain any vertex of degree $\Delta$. Therefore Steps~\ref{it:Max2}--\ref{it:Max5} do not change the degree of $v_i'$, and $v_i'$ cannot become a cut vertex for any $i\in [k]$.

Next we consider the possible changes to the p-star centered at $v$ in Steps~\ref{it:Max2}--\ref{it:Max5}.
Step~\ref{it:Max2a} may turn $v$ into an ordinary vertex as noted above (but it changes neither $D(\Phi,v)$ nor $D(N,v)$). Since $v$ is a proper cut vertex of degree $\Delta$, the next step that can possibly modify the p-star is Step~\ref{it:Max3b}, \ref{it:Max3c}, or~\ref{it:Max5}. In Step~\ref{it:Max3b}, a \operation{Contract(.)} operation eliminates vertex $v$, and any vertex $v_i$, $i\in [k]$ may become a cut vertex.
In Step~\ref{it:Max3c}, successive \operation{Stretch(.)} operations replace $v$ with $\Delta-2$ vertices of degree 3.
Since each of these vertices could be a proper cut vertex, they contribute $(\Delta-2)(3-2)^3=\Delta-2$ to the potential.
Finally, if $v$ is an ordinary vertex, then \operation{Detach($v$)} in Step~\ref{it:Max5} replaces $v$ with $\Delta$ vertices of degree 1, which do not contribute to the potential.

At the beginning of Step~\ref{it:Max1}, vertex $v$ contributes $(\Delta-2)^3$ to $\Phi(\varphi_2)$.
At the end of Subroutine~2, the contribution of $v$, together with the virtual vertices $v_i$ and $v_i'$, over all $i\in[k]$, is at most

$$(\Delta-2)+\sum_{i=1}^k(\max\{0,d_i\})^3+\sum_{i=1}^k(\max\{0,d_i\})^2.$$
By Lemma~\ref{lem:degrees2}, $D(\Phi,v)\leq -(\Delta-3)\leq -\Delta/4$.

Let us estimate the number of new vertices created in these steps.
In Step~\ref{it:Max1}, the \operation{Enclose(.)} operations create
a pair of virtual vertices for each bridge of $v$ (i.e., $2k$ vertices),
and up to $k$ ordinary subdivision vertices.
In Step~\ref{it:Max3c}, \operation{Stretch($.$)} operations create $\Delta-2$ new vertices;
and in Step~\ref{it:Max5}, the \operation{Detach($v$)} operation
increases the number of vertices by $\Delta-1$. Therefore, $D(N,v)\leq 3k+2\Delta-3\leq 5\Delta-3$.
Since $D(\Phi,v)\leq -\Delta/4$, inequalities \eqref{eq:charge1} and \eqref{eq:charge2} follow.
\end{proof}

\begin{lemma}\label{lem:charge}
For an instance $\varphi_0$ of atomic embeddability of size $n$, Algorithm terminates, it performs
$O(N_{\geq 3}(\varphi_0)+\Phi(\varphi_0))$ operations, and runs in $O(n^8)$ time.\footnote{Optimizing the running time analysis further, which we believe is possible, is beyond the scope of this work.}
\end{lemma}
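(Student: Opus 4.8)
The plan is to bound the total number of elementary operations performed by the Algorithm by $O(N_{\ge3}(\varphi_0)+\Phi(\varphi_0))$, using $\Phi$ as the principal progress measure, and then to derive the running time by multiplying this count by a uniform $O(n^4)$ bound on the cost of one operation together with all the data-structure updates it triggers. Write $\psi_0$ for the normal instance returned by \textbf{Preprocessing}; by Lemma~\ref{lem:Pre} this costs $O(n)$ time and $N(\psi_0)=O(n)$, and since \operation{Suppress} and \operation{Split} alter neither the degrees nor the cut structure of vertices in local graphs, $N_{\ge3}(\psi_0)=N_{\ge3}(\varphi_0)$ and $\Phi(\psi_0)=\Phi(\varphi_0)$. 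For termination, each call to Subroutine~1 terminates by Corollary~\ref{cor:S1-runtime}, and each call to Subroutine~2 terminates because the while loop in Step~\ref{it:Max2} removes a p-path of degree $\Delta$ at every iteration without creating a new one, while Steps~\ref{it:Max3}--\ref{it:Max5} are finite for-loops. By Lemma~\ref{lem:MaxDegree} together with Corollary~\ref{cor:Max0} and Lemma~\ref{lem:SubI3}, a single pass through Step~\ref{it:loop} strictly decreases the nonnegative integer-valued potential $\Phi$ (Step~\ref{it:Max0} and Steps~\ref{it:Max1}--\ref{it:Max5} each lower it; Step~\ref{it:SubI3} leaves it unchanged); hence the while loop of Step~\ref{it:Alg1} performs at most $\Phi(\psi_0)$ iterations (and at most $\Delta(\psi_0)-3$, by Lemma~\ref{lem:S2}), the Algorithm halts, and it is correct by Lemma~\ref{lem:Alg}.

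For the operation count, let $\chi_0=\psi_0,\dots,\chi_T$ be the instances at the successive passes of Step~\ref{it:loop}, and inside pass $j$ let $\varphi_0^{(j)}=\chi_j,\varphi_1^{(j)},\dots,\varphi_7^{(j)}=\chi_{j+1}$ be the instances after Steps~\ref{it:Max0},\dots,\ref{it:Max5}. By Corollary~\ref{cor:Max0} the number of operations performed by Step~\ref{it:Max0} and the number of vertices it creates are each $O(\Phi(\varphi_0^{(j)})-\Phi(\varphi_1^{(j)}))$, and by Lemma~\ref{lem:MaxDegree} the same holds for Steps~\ref{it:Max1}--\ref{it:Max5} with $O(\Phi(\varphi_2^{(j)})-\Phi(\varphi_7^{(j)}))$. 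Since $\Phi(\varphi_1^{(j)})=\Phi(\varphi_2^{(j)})$, $\Phi(\varphi_0^{(j)})=\Phi(\chi_j)$ and $\Phi(\varphi_7^{(j)})=\Phi(\chi_{j+1})$, summing over $j$ telescopes, so all these steps together perform $O(\Phi(\psi_0))$ operations and create $O(\Phi(\psi_0))$ new vertices of $\mathcal G$.

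The step I expect to be the main obstacle is Step~\ref{it:SubI3}, which by Lemma~\ref{lem:SubI3} leaves both $\Phi$ and $N_{\ge3}$ unchanged and thus carries no potential credit. To account for it I would track the auxiliary quantity $\Psi(\varphi)=N_{\ge3}(\varphi)-|V(H)|$, i.e.\ the number of degree-$\ge3$ vertices of $\mathcal G$ minus the number of its connected components. One iteration of the while loop in Step~\ref{it:SubI3} performs $O(1)$ operations, splits exactly one component of $\mathcal G$ into two, and keeps $N_{\ge3}$ fixed, hence lowers $\Psi$ by exactly $1$. On the other hand, a short case analysis of the remaining operations shows that whenever an operation increases $\Psi$ it increases it by $O(1)$ and is one of the operations already counted above as $O(\Phi(\psi_0))$ in total (a \operation{Stretch} that produces new degree-$\ge3$ vertices, or an \operation{Enclose} in Step~\ref{it:Max1} or~\ref{it:Max2b} creating a pipe of degree $\ge3$), whereas \operation{Contract}, \operation{Detach}, \operation{Suppress}, and the \operation{Split}s they trigger never increase $\Psi$. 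Since $\Psi(\psi_0)\le N_{\ge3}(\psi_0)$ and the total increase of $\Psi$ over the whole run is $O(\Phi(\psi_0))$, the while loop in Step~\ref{it:SubI3} iterates $O(N_{\ge3}(\psi_0)+\Phi(\psi_0))$ times altogether. Adding the $O(n)$ operations of \textbf{Preprocessing} and the linear-time handling of Steps~\ref{it:Alg2}--\ref{it:Alg3} (Corollary~\ref{cor:toroidal} and Lemma~\ref{lem:subcubic}), the Algorithm performs $O(N_{\ge3}(\varphi_0)+\Phi(\varphi_0))$ elementary operations in all.

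For the running time, the counts above show that at most $O(N_{\ge3}(\varphi_0)+\Phi(\varphi_0))$ new vertices of $\mathcal G$ are ever created (those charged to $\Phi$, plus two per iteration of Step~\ref{it:SubI3}), so every intermediate instance has size $O(n+N_{\ge3}(\varphi_0)+\Phi(\varphi_0))$, which is $O(n^4)$ by the trivial bound $\Phi(\varphi_0)\le N(\varphi_0)^4=O(n^4)$. Each elementary operation affects $O(1)$ atoms, and on the corresponding local graphs one recomputes the suppressed multigraphs, the block trees, and the SPQR trees, and re-runs a planarity test; all of this is done in time linear in the current instance size, i.e., in $O(n^4)$ time. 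Multiplying the $O(n^4)$ elementary operations by this $O(n^4)$ per-operation cost, and adding the $O(n)$ spent on \textbf{Preprocessing} and Steps~\ref{it:Alg2}--\ref{it:Alg3}, yields the claimed $O(n^8)$ running time.
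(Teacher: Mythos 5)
Most of your argument tracks the paper's own analysis: telescoping the decrease of $\Phi$ over the passes of Step~\ref{it:loop} via Corollary~\ref{cor:Max0} and Lemma~\ref{lem:MaxDegree}, bounding every intermediate instance by $O(n+\Phi(\varphi_0))\le O(n^4)$, and charging $O(N(\varphi^*))=O(n^4)$ per operation for data-structure updates. Where you genuinely depart from the paper is the accounting for Step~\ref{it:SubI3}, which you amortize globally over all passes with the potential $\Psi=N_{\ge3}-|V(H)|$, rather than invoking the per-pass bound of Lemma~\ref{lem:SubI3} as the paper does. This is an attractive refinement, but as stated it has a gap: your conclusion ``Step~\ref{it:SubI3} iterates $O(N_{\ge3}(\psi_0)+\Phi(\psi_0))$ times'' needs not only $\Psi(\psi_0)\le N_{\ge3}(\psi_0)$ and a bound on the total \emph{increase} of $\Psi$, but also a lower bound on $\Psi$ at the end of the run, since the number of Step~\ref{it:SubI3} iterations is at most $\Psi_{\mathrm{init}}-\Psi_{\mathrm{final}}+(\text{total increase})$. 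You never bound $\Psi_{\mathrm{final}}$ from below, and it is not automatically $\ge 0$ or even $\ge -O(n+\Phi(\varphi_0))$: $|V(H)|$ can exceed $N_{\ge 3}$ by the number of atoms whose (connected, normal) local graph has maximum degree $2$, i.e.\ bare path/cycle components of $G$ confined to a single atom. Such ``inert'' atoms are never removed, and they can be manufactured during the run, e.g.\ when \operation{Contract} in Step~\ref{it:Max3b} or \operation{Detach} in Step~\ref{it:Max5} leaves a component that is a subdivided path or cycle and the automatic \operation{Split} promotes it to its own atom; a single such operation can spawn up to $\Delta$ of them, and they may even contain the degree-$2$ vertices created by Step~\ref{it:SubI3} itself, so bounding their number by the vertex count is circular with the very quantity you are trying to bound.

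The repair is small but necessary: replace $|V(H)|$ in your potential by the number of atoms whose local graph contains a vertex of degree at least $3$, i.e.\ set $\Psi'=N_{\ge3}-|\{\nu\in V(H): G_\varphi(\nu)\text{ has a vertex of degree}\ge 3\}|\ge 0$. The proof of Lemma~\ref{lem:SubI3} already shows that both components produced by one iteration of Step~\ref{it:SubI3} contain a vertex of degree at least $3$, so each iteration still decreases $\Psi'$ by exactly $1$; your case analysis goes through for $\Psi'$ (each \operation{Stretch} or Step-\ref{it:Max1} \operation{Enclose} raises it by $O(1)$ and is charged to the decrease of $\Phi$, while \operation{Contract}, \operation{Detach}, \operation{Suppress} and the triggered \operation{Split}s do not raise it, because the loss of the counted atoms is offset by the disappearance of their degree-$\Delta$ vertices); and nonnegativity of $\Psi'$ then yields the claimed $O(N_{\ge3}(\varphi_0)+\Phi(\varphi_0))$ bound without any lower-bound issue. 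With that fix your route is sound, and is in fact a sharper global accounting than the paper's, which applies Lemma~\ref{lem:SubI3} pass by pass and then absorbs everything into the $O(n+\Phi(\varphi_0))$ size bound before multiplying by the $O(n^4)$ per-operation cost.
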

\begin{proof}
Consider one iteration of the while loop of Step~\ref{it:Alg1}, which calls Subroutines~1 and~2.
By Corollary~\ref{cor:Max0}, Step~\ref{it:Max0} terminates, performs $O(\Phi(\varphi_0)-\Phi(\varphi_1))$ operations, and returns and instance $\varphi_1$ with and $N(\varphi_1)\leq N(\varphi_0)+12(\Phi(\varphi_0)-\Phi(\varphi_1))$.
By Lemma~\ref{lem:SubI3}, Step~\ref{it:SubI3} terminates, performs $O(N_{\geq 3}(\varphi_0))$ operations, and returns an instance $\varphi_2$
with $N_{\geq 3}(\varphi_2)=N_{\geq 3}(\varphi_1)$ and $\Phi(\varphi_2)=\Phi(\varphi_1)$.
Similarly, by Lemma~\ref{lem:MaxDegree}, the sequence of Steps~\ref{it:Max1}--\ref{it:Max5} terminates,
performs $O(\Phi(\varphi_2)-\Phi(\varphi_7))$ operations, and returns an instance $\varphi_7$ with
$N(\varphi_7)\leq N(\varphi_2)+O(\Phi(\varphi_2)-\Phi(\varphi_7))$.

Using the definition of the potential, we can bound its initial value by
$$\Phi(\varphi_0)=\sum_{v\in V(\mathcal{G}_0)} (\max\{0,\deg(v)-\xi(v)\})^{\sigma(v)}\leq n\cdot (\Delta(\varphi_0)-2)^3\leq O(n^4).$$
The while loop in Step~\ref{it:Alg1} of the Algorithm terminates after $\Delta(\varphi_0)-3\leq n$ iterations,
since each iteration decreases $\Delta(.)$ by Lemmas~\ref{lem:S1} and~\ref{lem:S2}.
In each iteration, the potential $\Phi(.)$ decreases, and $N(.)$ increases by at most constant times the decrease of the potential
by Lemmas~\ref{lem:isolate2Bridge},~\ref{lem:SubI3} and~\ref{lem:MaxDegree}.
In particular, for every instance $\varphi^*$ in intermediate phases of Step~\ref{it:Alg1},
both $N(\varphi^*)$ and $N_{\geq 3}(\varphi^*)$ are bounded by $O(n+\Phi(\varphi_0))\leq O(n^4)$.

Each operation in Steps~\ref{it:Max0}--\ref{it:Max5} can be implemented in $O(N({\varphi^*}))$ time, where $\varphi^*$ is the instance for which the operation is applied (this allows for planarity testing, and recomputing block trees and SPQR-trees after  each operation). As noted above, we have $N({\varphi^*})\leq O(n+\Phi(\varphi_0))\leq O(n^4)$. The overall running time of all invocations of Step~\ref{it:Max0}--\ref{it:Max5} is $O(n^4 (n+\Phi(\varphi_0)))\leq O(n^8)$.

By Lemmas~\ref{lem:toroidal} and~\ref{lem:subcubic}, Steps~\ref{it:Alg2}--\ref{it:Alg3}
of the Algorithm run in $O(N(\varphi_7))\leq O(n+\Phi(\varphi_0))\leq O(n^4)$ time.
\end{proof}

\begin{theorem}
\label{thm:mainAtom}
There is an algorithm that determines whether a simplicial map $\varphi:G\rightarrow H$ is atomic embeddable
in time polynomial in the number of edges and vertices in $G$ and $H$.
\end{theorem}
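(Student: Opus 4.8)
The plan is to assemble the machinery of Sections~\ref{ssec:ds}--\ref{ssec:runtime} into a single algorithm and verify that it is correct and runs in polynomial time. First I would run $\textbf{Preprocessing}(\varphi)$, which by Lemma~\ref{lem:Pre} transforms the input, in $O(n)$ time, into an equivalent \emph{normal} instance $\varphi'$, where $n$ is the total number of vertices and edges of $G$ and $H$. This is exactly the entry point required by the main Algorithm of Section~\ref{ssec:main}. I would then invoke the Algorithm on $\varphi'$ and return its Boolean output.

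For correctness I would appeal to Lemma~\ref{lem:Alg}: provided the Algorithm terminates, it returns \texttt{True} precisely when $\varphi'$ (hence $\varphi$) is atomic embeddable. The only thing to check along the way is that every subroutine invocation receives a valid input. The first call to Subroutine~1 in Step~\ref{it:loop} gets the normal instance $\varphi'$; Subroutine~1 returns an equivalent, normal, $\Delta(\varphi)$-nice instance (Lemma~\ref{lem:S1}), which is a legitimate input to Subroutine~2; Subroutine~2 returns an equivalent normal instance with strictly smaller $\Delta$ (Lemma~\ref{lem:S2}), which is again admissible for the next iteration of Subroutine~1. Once the while loop in Step~\ref{it:Alg1} exits, we have $\Delta(\varphi)\le 3$, so each connected component $C$ of $H$ is either a toroidal cycle, decided by Corollary~\ref{cor:toroidal}, or subcubic, decided by Lemma~\ref{lem:subcubic}, and Step~\ref{it:Alg3} conjoins these answers.

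For termination and the running-time bound I would cite Lemma~\ref{lem:charge}: the Algorithm terminates, performs $O(N_{\geq 3}(\varphi_0)+\Phi(\varphi_0))$ elementary operations, and runs in $O(n^8)$ time, which is polynomial in $n$. The outer loop in Step~\ref{it:Alg1} runs at most $\Delta(\varphi_0)-3\le n$ times since each iteration strictly decreases $\Delta$; within an iteration the potential $\Phi$ strictly decreases (Corollary~\ref{cor:Max0} and Lemma~\ref{lem:MaxDegree}), which both forces termination and, combined with Lemma~\ref{lem:Alg}, completes the proof of correctness. Using $\Phi(\varphi_0)\le O(n^4)$ and the fact that $N(\cdot)$ grows by at most a constant times the decrease of $\Phi$ (Lemmas~\ref{lem:isolate2Bridge},~\ref{lem:SubI3},~\ref{lem:MaxDegree}), every intermediate instance has size $O(n+\Phi(\varphi_0))\le O(n^4)$, and since each elementary operation — including a planarity test and the recomputation of block trees and SPQR-trees — costs linear time in the current instance size, the total is $O(n^8)$.

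The main obstacle, already resolved by the lemmas invoked above, is controlling the blow-up caused by $\operation{Enclose($.$)}$, which can increase the potential and create new vertices without producing a compensating decrease. This is absorbed by the charging scheme of Lemma~\ref{lem:MaxDegree}: the operations triggered by each degree-$\Delta$ vertex are charged against the net decrease of $\Phi$ they produce, via the convexity estimate of Lemma~\ref{lem:degrees2}, while the Enclose operations of Step~\ref{it:SubI3}, which leave $\Phi$ unchanged, are charged against $N_{\geq 3}$ instead (Lemma~\ref{lem:SubI3}). Thus the only nontrivial part of writing the proof is to chain these accounting lemmas correctly; the topological content has already been discharged in Observation~\ref{obs:atomic} and in the equivalence proofs for the five elementary operations.
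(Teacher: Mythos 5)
Your proposal matches the paper's own proof of Theorem~\ref{thm:mainAtom}: run \textbf{Preprocessing} (Lemma~\ref{lem:Pre}) to obtain an equivalent normal instance, then invoke the main Algorithm, with correctness from Lemma~\ref{lem:Alg} and termination plus the $O(n^8)$ bound from Lemma~\ref{lem:charge}. The additional detail you give about validating subroutine inputs and the charging scheme is a faithful recapitulation of the supporting lemmas rather than a deviation, so the argument is correct and essentially identical to the paper's.
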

\begin{proof}
Let $\varphi$ be an instance $\varphi$ of atomic embeddability of size $n$, where $n$ is the number of edges and vertices in $G$ and $H$. The Preprocessing algorithm runs in $O(n)$ time and returns an equivalent normal instance $\varphi_0$ of size $O(n)$ by Lemma~\ref{lem:Pre}. The main Algorithm for $\varphi_0$ terminates in $O(n^8)$ time by Lemma~\ref{lem:charge}, and determines whether $\varphi_0$ is atomic embeddable by Lemma~\ref{lem:Alg}. Since $\varphi_0$ and $\varphi$ are equivalent, this also determines whether $\varphi$ is atomic embeddable.
\end{proof}

\section{Beyond Atomic Embeddings}
\label{sec:compatible}

Since atomic embeddability is tractable, it makes sense to consider
its generalizations in which every atom can have genus higher than 0.

We consider the \textsc{generalized atomic embedding} problem for a simplical map $\varphi:G\rightarrow H$, where $G$ and $H$ are multigraphs without loops. The only difference from \textsc{atomic embeddibility} is that we define the surface $\mathcal{H}$ as follows: For each atom $\nu\in V(H)$, we construct $\mathcal{S}(\nu)$ from an oriented surface of genus $g(\nu)$ without boundary (rather than a 2-sphere), and remove $\deg(\nu)$ holes. Hence, an \textbf{instance} for generalized atomic embeddability is a pair $(\varphi, g)$, where $\varphi$ is a simplicial map $\varphi: G\rightarrow H$ and $g:V(H)\rightarrow \mathbb{N}_0$.

\begin{problem}[\textsc{Generalized atomic embeddability}]\label{prob:gatomic}
Given a simplicial map $\varphi: G \rightarrow H$, where $G$ and $H$ are multigraphs without loops,
and a function $g:V(H)\rightarrow \mathbb{N}_0$,
decide whether a generalized atomic embedding of $G$ with respect to $\varphi$ exists.
\end{problem}

In this section, we show that \textsc{generalized atomic embeddability} is NP-hard, and therefore also NP-complete, even when $g(\nu)\leq 1$ for all atoms $\nu\in V(H)$, and the number of vertices in $\varphi^{-1}[\nu]$ is at most 7 for each atom $\nu$ with $g(\nu)=1$.

\begin{figure}
\centering
\includegraphics[scale=1]{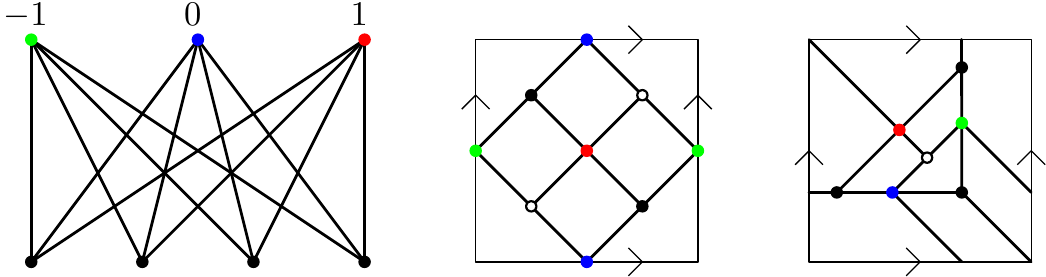}
\caption{The complete bipartite graph $K_{3,4}$ (left) and  its toroidal embeddings (middle and right) in which 4 cubic vertices do not have the same rotation. The torus is obtained by identifying the pairs of opposite sides of the square as indicated by arrows.}
\label{fig:k34}
\end{figure}

The NP-hardness proof is based on the embeddings of $K_{3,4}$ on a torus. For an embedding of $K_{3,4}$ on a torus,
we say that two vertices $u$ and $v$ of the same vertex class (i.e., with the same degree) have \textbf{the same rotation}
if the rotation of $u$ is $(uv_1,\ldots ,uv_k)$ and the rotation at $v$ is $(vv_1,\ldots,vv_k)$, where
$\{v_1,\ldots , v_k\}$ is a vertex class of $K_{3,4}$.

\begin{lemma}
\label{lem:k34}
In every embedding of $K_{3,4}$ on the torus the four cubic vertices do not all have the same rotation;
subject to the previous claim, any rotations for the  four cubic vertices can be realized by a toroidal embedding.
\end{lemma}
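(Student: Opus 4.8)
The plan is to handle the two halves separately: first the impossibility of the all-equal pattern, then the realizability of every other pattern.

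Write $A=\{a_1,a_2,a_3\}$ for the class of degree-$4$ vertices and $B=\{b_1,\dots,b_4\}$ for the cubic class. Since $K_{3,4}$ is nonplanar, its genus is exactly $1$, so any embedding on the torus is a minimum-genus embedding and hence cellular; Euler's formula then gives $7-12+F=2-2\cdot1$, i.e.\ $F=5$ faces, and $\sum_i\ell_i=2\cdot12=24$ for the face lengths $\ell_i$. For the first half, suppose for contradiction that all four cubic vertices carry the same rotation, say the cyclic word $(a_1,a_2,a_3)$. Trace the boundary walk of a face. Because $K_{3,4}$ is bipartite, the walk alternates between $A$ and $B$; and whenever it enters a cubic vertex along the dart coming from $a_i$, it leaves along the dart going to $a_{i+c}$, where the offset $c\in\{+1,-1\}$ is fixed once and for all by the face-tracing convention together with the common rotation at the vertices of $B$. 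Hence the $A$-indices read off in cyclic order along the face form an arithmetic progression with common difference $c$ modulo $3$, so the walk can only close up after passing through a multiple of $3$ cubic vertices; thus every face has length a nonzero multiple of $6$, in particular length at least $6$. But then $\sum_i\ell_i\ge 5\cdot6=30>24$, a contradiction.

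For the second half I would first use symmetry to shrink the problem. Encode the rotation at each cubic vertex by a sign in $\{+,-\}$, with $+$ for $(a_1,a_2,a_3)$ and $-$ for $(a_1,a_3,a_2)$, giving a vector in $\{+,-\}^4$. Permuting $B$ by an automorphism of $K_{3,4}$ permutes the four coordinates, while an automorphism induced by an odd permutation of $A$, or reversing the orientation of the torus, flips all four coordinates simultaneously. Thus the group acting on $\{+,-\}^4$ contains the full symmetric group on coordinates together with the global sign flip, and the non-constant vectors split into just two orbits: the ``$3$ versus $1$'' type (e.g.\ $(+,+,+,-)$) and the ``$2$ versus $2$'' type (e.g.\ $(+,+,-,-)$). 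It therefore suffices to exhibit one toroidal embedding of $K_{3,4}$ realizing a $3$-$1$ pattern at the cubic vertices and one realizing a $2$-$2$ pattern; all remaining non-constant patterns then follow by relabeling vertices or reversing orientation, operations that do not change the host surface. These two embeddings are the ones drawn in Figure~\ref{fig:k34}: for each, one writes down the rotation system, checks by tracing faces that $F=5$ (equivalently, by Euler's formula, that the genus is $1$), and reads off the signs at $b_1,\dots,b_4$ to confirm that the two required patterns occur.

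I expect the crux to be the first half, and within it the careful bookkeeping of the face-tracing argument: one must verify that the offset $c$ is the same at every cubic vertex (this is exactly where the hypothesis that all four rotations are the \emph{same} cyclic word---not merely pairwise ``compatible''---is used) and independent of the traversal direction, and that the progression argument is unaffected if a face happens to revisit a vertex. The second half is essentially a bounded verification once the symmetry reduction has cut the $14$ non-constant patterns down to two representatives; the only thing to check there is that the two chosen rotation systems genuinely produce five faces and the advertised cubic-vertex rotations, both of which are routine to read off the explicit pictures.
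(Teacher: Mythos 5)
Your proposal is correct, but it proves the first (impossibility) half by a genuinely different route than the paper: the paper simply invokes the result of Fulek and Kyn\v{c}l~\cite[Theorem 7(b)]{FK17_counter} that in every toroidal embedding of $K_{3,4}$ two cubic vertices have distinct rotations, whereas you give a self-contained argument (minimum genus $1$, hence cellular by Youngs' theorem, so $F=5$ and $\sum_i \ell_i = 24$; with all cubic rotations equal, the face-tracing rule forces the $A$-indices along any facial walk to advance by a fixed offset $c \pmod 3$, so every face length is a positive multiple of $6$, giving $\sum_i \ell_i \geq 30$, a contradiction). Your bookkeeping is sound: the offset is uniform precisely because the rotations are equal with respect to one global orientation, and the argument is about dart sequences, so repeated vertices cause no trouble; in fact one could even drop the appeal to Youngs' theorem, since Euler's inequality already gives $F \geq 5$ for an arbitrary embedding and boundary walks still traverse each edge twice. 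For the realizability half you do essentially what the paper does---exhibit one embedding with the $3$--$1$ pattern and one with the $2$--$2$ pattern (the middle and right pictures of Fig.~\ref{fig:k34}) and then transport them by symmetry---but you make explicit the reduction that the paper leaves implicit, namely that relabelings of the cubic class, odd permutations of the degree-$4$ class, and orientation reversal generate all coordinate permutations together with the global sign flip on $\{+,-\}^4$, so the non-constant patterns form exactly two orbits. The trade-off is clear: the paper's proof is shorter but leans on an external theorem, while yours is elementary and self-contained at the cost of the Euler-characteristic computation, and your orbit argument usefully documents why two explicit embeddings suffice.
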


\begin{proof}
The first author and Kyn\v{c}l~\cite[Theorem 7(b)]{FK17_counter}
recently proved that in every embedding of $K_{3,4}$ on the torus
there exist two cubic vertices that do not have the same rotation.

It remains to prove that in the following two cases there exists an embedding of $K_{3,4}$ on the torus.
In one case, exactly 2 of the 4 cubic vertices have the same rotation; and
in the other case, exactly 3 of the 4 cubic vertices have the same rotation.
Desired toroidal embeddings are given in Fig.~\ref{fig:k34}(middle) and (right).
\end{proof}

\begin{figure}
\centering
\includegraphics[scale=1]{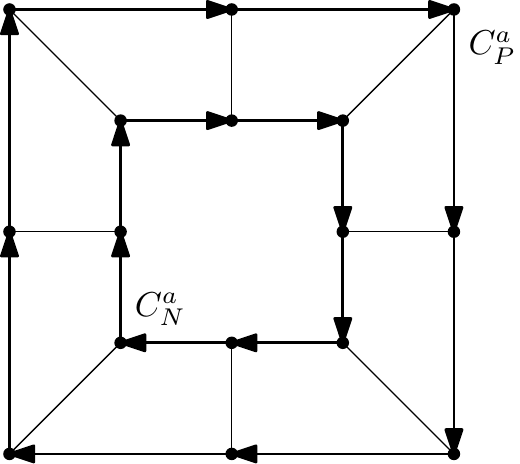}
\caption{The variable gadget $G_0^a=C^a\square P_1$ for the NP-hardness reduction of \textsc{Not-All-Equal 3SAT} to \textsc{Generalized atomic embeddability}.}
\label{fig:var}
\end{figure}

\begin{theorem}
\label{thm:main3}
\textsc{Generalized Atomic Embeddability} is NP-hard.
\end{theorem}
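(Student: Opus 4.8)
The plan is to give a polynomial-time many-one reduction from \textsc{Not-All-Equal 3SAT} (NAE-3SAT), which is NP-complete. Given a 3-CNF formula $\Phi$ with clauses $c_1,\dots,c_m$ over variables $x_1,\dots,x_n$, I would build an instance $(\varphi:G\to H,\ g)$ of \textsc{Generalized Atomic Embeddability} with $g(\nu)\leq 1$ for every atom, so that $(\varphi,g)$ is positive if and only if $\Phi$ has a truth assignment in which every clause contains both a true and a false literal. Membership in NP is routine (guess a rotation system for every $G_\varphi(\nu)$, verify in polynomial time that it is realizable on an orientable surface of genus $g(\nu)$ with $\deg(\nu)$ holes, and that virtual vertices of a common pipe receive opposite rotations), so NP-hardness yields NP-completeness.

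The building blocks are \emph{variable gadgets}, made of genus-$0$ atoms, and \emph{clause gadgets}, each a single genus-$1$ atom. For a variable $x_a$ the variable gadget is a cycle $C^a$ of genus-$0$ atoms carrying the prism $G_0^a=C^a\,\square\,P_1$ of Fig.~\ref{fig:var}: two parallel strands running around the cycle, joined by rungs, so that each local graph is a small planar graph with a $2$-cut. The first technical ingredient — a ``wire lemma'' — is that in \emph{any} atomic embedding this gadget admits exactly two global configurations (the two strands are attached to the chain of spheres in one of two parities), that both are realizable, and that the parity can be ``read off'' consistently at every prescribed \emph{tap}: a tap is an extra pipe attached at one atom of $C^a$ through which three edges of $G$ leave, producing a degree-$3$ virtual vertex whose rotation is forced by the state of the gadget to take one of two values, the same at every tap of $C^a$. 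I interpret the two states as $x_a=\texttt{true}$ and $x_a=\texttt{false}$. An occurrence of a negated literal is handled by inserting a \emph{crossover} atom into the cycle that swaps the two strands (realizable on a sphere) and hence flips the state; alternatively one may start the reduction from monotone NAE-3SAT and dispense with crossovers.

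For a clause $c_j=(\ell_{j,1}\vee\ell_{j,2}\vee\ell_{j,3})$ the clause gadget is one atom $\nu_j$ with $g(\nu_j)=1$ whose local graph is $K_{3,4}$ (so $|V(\varphi^{-1}[\nu_j])|\leq 7$), with its four cubic vertices taken to be virtual. Since $K_{3,4}$ is nonplanar it must be drawn on the torus-with-holes $\mathcal{S}(\nu_j)$, and Lemma~\ref{lem:k34} says that the four cubic vertices cannot all get the same rotation, while any not-all-equal pattern is realizable. I route three of the four pipes at $\nu_j$ to taps on the gadgets of $\ell_{j,1},\ell_{j,2},\ell_{j,3}$ and the fourth pipe to a second tap on the gadget of $\ell_{j,3}$; by pipe-compatibility the rotations of the four cubic vertices are forced to equal the literal values $(v_1,v_2,v_3,v_3)$, and ``not all equal'' among $\{v_1,v_2,v_3,v_3\}$ is exactly ``not all equal'' among $\{v_1,v_2,v_3\}$. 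Thus the clause atom enforces precisely the NAE constraint of $c_j$.

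Correctness then splits in the usual way. If $\Phi$ has an NAE-satisfying assignment, put each variable gadget into the corresponding state (realizable by the wire lemma), choose by the second part of Lemma~\ref{lem:k34} a toroidal embedding of each $K_{3,4}$ whose cubic-vertex rotations match the not-all-equal literal values of its clause, and glue all local embeddings along the pipes to obtain a generalized atomic embedding. Conversely, from any generalized atomic embedding the wire lemma extracts a well-defined assignment, and Lemma~\ref{lem:k34} at each $\nu_j$ shows the corresponding clause is NAE-satisfied, so $\Phi$ is satisfiable in the NAE sense. The step I expect to be the main obstacle is the wire lemma together with the bookkeeping that translates ``rotation of a cubic vertex of $K_{3,4}$'' into ``truth value carried by the attached wire'': one must verify that the prism gadget really has only two consistent global states, that taps read them faithfully and coherently around the cycle, and that crossovers flip them. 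Everything else — NP-membership, the realizability direction, and the final parity computation — is comparatively routine once Lemma~\ref{lem:k34} is in hand.
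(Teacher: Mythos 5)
Your clause gadget is exactly the paper's: a genus-$1$ atom whose local graph is $K_{3,4}$ with the four cubic vertices virtual, one literal duplicated so that ``not all equal'' among the four rotations is the NAE constraint, with Lemma~\ref{lem:k34} doing all the work. The genuine gap is in the variable gadget, i.e.\ precisely the ``wire lemma'' you flag but do not prove, and as sketched it would fail. If the prism $C^a\,\square\,P_1$ is distributed over a cycle of genus-$0$ atoms with only the two strand edges crossing each inter-atom pipe, then every such pipe has degree $2$; a virtual vertex of degree $2$ has a unique (hence vacuously ``opposite'') rotation, so by Observation~\ref{obs:atomic} the pipe imposes no constraint at all, and each atom's local piece can be reflected independently of its neighbours. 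Consequently the ``parity'' is not propagated around $C^a$ and different taps of the same variable can read different values, so there are far more than two global configurations. Even after beefing up the pipes to degree $\ge 3$, you still owe an argument that the local graphs are rigid enough (essentially subdivided $3$-connected pieces) to force a single global reflection class, and that the thickening of the cycle of atoms --- which is a torus, on which planar graphs admit rotation systems unavailable in the plane --- does not admit twisted embeddings of the wire; the same care is needed for your crossover atoms.

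The paper avoids all of this with one structural choice you could adopt: it places the \emph{entire} prism $G_0^a=C^a\,\square\,P_1$ (Fig.~\ref{fig:var}) inside a single genus-$0$ atom $\nu_0$ (all variables share this one central atom, so $H$ is a star with six pipes per clause atom), and makes the literal vertices of the prism themselves the virtual vertices of the clause pipes. Since the prism is a $3$-connected planar graph, its spherical embedding is unique up to one global reflection, which is exactly the boolean value of the variable and is read off coherently at every occurrence; negation comes for free because the two cycles $C_P^a$ and $C_N^a$ carry opposite rotations in any planar embedding, so no crossover gadgets and no monotone restriction of NAE-3SAT are needed. In short, your reduction skeleton and key lemma match the paper, but your proof is incomplete until the wire lemma is replaced either by a genuinely rigid distributed wire (degree-$\ge 3$ pipes, $3$-connected local pieces, plus an argument excluding toroidal twists) or, more simply, by the paper's single-atom prism.
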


\begin{proof}
We reduce \textsc{Generalized Atomic Embeddability} from \textsc{Not-All-Equal 3SAT}, which is known to be NP-complete.
An instance of \textsc{Not-All-Equal 3SAT} is given by a pair $(\mathcal{A}, \mathcal{C})$,
where $\mathcal{A}$ is a finite set of boolean variables and $\mathcal{C}$ is a finite set of clauses,
each of which is a conjunction of three literals. Each literal is either a variable $a\in \mathcal{A}$ or the negation of $a$, denoted by $\neg{a}$. An instance $(\mathcal{A}, \mathcal{C})$ is \textbf{positive} if there exists an assignment
$\tau: \mathcal{A}\rightarrow \{\texttt{true},\texttt{false}\}$ such that at least 1 and at most 2 literals
are \texttt{true} in every clause.

Given an instance $(\mathcal{A}, \mathcal{C})$ of \textsc{Not-All-Equal 3SAT},
we construct an instance $(\varphi,g)$ for \textsc{Generalized atomic embeddability}, and
show that it is positive if and only if $(\mathcal{A}, \mathcal{C})$  is positive.
Let $\mathcal{C}=\{C_1,\ldots, C_n\}$.

We construct an instance $(\varphi,g)$, where $\varphi: G \rightarrow H$, $g:V(H)\rightarrow \mathbb{N}_0$.
Let the multigraph $H$ be a p-star with a center $\nu_0$, and $n$ additional atoms $\nu_1,\ldots , \nu_n$,
such that there are 6 pipes between $\nu_0$ and each $\nu_i$, for $i=1,\ldots , n$.
In particular, all pipes are incident to the center $\nu_0$.

We put $g(\nu_0)=0$ and $g(\nu_1)=\ldots=g(\nu_n)=1$.
We describe $\varphi$ via a construction of local graphs
$G_0=G_\varphi(\nu_0)$, $G_1=G_\varphi(\nu_1)$, $\ldots$, $G_n=G_\varphi(\nu_n)$.
For ease of presentation, we describe the local graphs as \emph{semi-directed graphs}
(in which some edges are directed and others are undirected). However, in the eventual
instance $(\varphi,g)$, all local graphs are undirected (by replacing every directed
edge with an undirected edge.)

The local graphs $G_1,\ldots, G_n$ are each isomorphic to $K_{3,4}$.
Let $G_0$ be a disjoint union of the semi-directed graphs $G_0^a=C^a\square P_1$, for all $a\in \mathcal{A}$, where $P_1$ is a path of length 1 and $C^a$ is a directed cycle whose length is equal to four times the number of occurrences of $a$ in the clauses in $\mathcal{C}$.
Let $C_P^a$ and $C_N^a$ denote the two vertex disjoint directed induced cycles in $G_0^a$ of length equal to the length of $C^a$,
whose orientation is inherited from $C^a$; see Fig.~\ref{fig:var} for an illustration.

Next, we define the pipes in $E(H)$ by designating the pairs of corresponding virtual vertices in the local graphs;
furthermore, for if a pipe $\rho\in E(H)$ corresponds to virtual vertices $u$ and $v$ in two local graphs,
we also specify a bijection between the set of edges incident to $u$ and the set of edges incident to $v$.
All vertices in local graphs that are not designated to be virtual will be ordinary.
This uniquely determines the instance $(\varphi,g)$.

For each clause $C_i\in \mathcal{C}$, $i\in [n]$, we define three pipes in $E(H)$. Assume that $C_i=(\ell_1 \lor \ell_2 \lor \ell_3)$.
Recall that $G_i$ is isomorphic to $K_{3,4}$. Label the three vertices of degree 4 in $G_i$ by $-1$, $0$, and $1$ resp.; two arbitrary cubic vertices by the literal $\ell_1$; and the remaining two cubic vertices by the literals $\ell_2$ and $\ell_3$, respectively. Let every cubic vertex $v\in V(G_i)$ be virtual, and let each vertex with label $\ell\in \{\ell_1,\ell_2,\ell_3\}$ correspond a vertex $u$ in $C_{P}^a$ if $\ell=a$ and a vertex $u$ in $C_{N}^a$ if $\ell=\neg{a}$. We construct the bijection between the edges incident to $v$ and the edges incident to $u$ as follows: Let the edge between $v$ and the vertex of $G_i$ with label $-1$, $0$, and $1$, resp., correspond to the incoming, undirected, and outgoing edge incident to $v'$ in $G_0$. This completes the definition of the instance $(\varphi,g)$

It remains to prove that $(\varphi,g)$ is a positive instance if and only if $(\mathcal{A}, \mathcal{C})$ is a positive instance.
Assume that $(\varphi,g)$ is a positive instance of \textsc{Generalized atomic embeddibility}. Let $\mathcal{E}:G\rightarrow \mathcal{H}$ be a generalized atomic embedding of $G$ with respect to $(\varphi,g)$. Since $g(\nu_0)=0$, the restriction of $\mathcal{E}$ on $\mathcal{S}(\nu_0)$ yields an embedding of $G_0$ in the plane;
and an embedding of $G_i$ in the torus for all $i\in [n]$.
We construct a satisfying assignment $\tau:\mathcal{A} \rightarrow \{\texttt{true},\texttt{false}\}$ based on the embedding of $G_0$ as follows. We put $\tau(a)=\texttt{true}$ if the incoming, undirected, and outgoing edges incident to a vertex $v$ in $C_P^a$ appear in this counterclockwise order in the rotation of $v$ in the embedding of $G_0$; and we put $\tau(a)=\texttt{false}$ otherwise. Note that the truth value of $a$ is independent of the choice of $v$.
Also note that a literal $\ell$ of the clause $C_i$, $i\in [n]$, is satisfied if and only if the edges between the vertex $v$ labeled by $\ell$ and the vertices labeled by $-1$, $0$ and $1$ in $G_i$ appear in this clockwise order in the rotation at $v$ in the embedding of $G_i$.
Note that Observation~\ref{obs:atomic} holds also for generalized atomic embeddability. By Lemma~\ref{lem:k34} and Observation~\ref{obs:atomic}, every clause $C_i$ must be satisfied by at least 1 and at most 2 literals, and hence, $(\mathcal{A},\mathcal{C})$ is positive.

Now assume that $(\mathcal{A},\mathcal{C})$ is a positive instance of \textsc{Not-All-Equal 3SAT}. We can easily reverse the argument in the previous paragraph as follows. Let $\tau$ be a satisfying assignment witnessing that the instance is positive.  We define a toroidal embedding of $G_i$, for all $i\in [n]$, as follows. For every literal $\ell$, the edges between a vertex $v$ labeled by $\ell$ and the vertices labeled by $-1$, $0$, and $1$ in $G_i$ appear in this clockwise order in the rotation of $v$ in the embedding of $G_i$ if and only if $\ell$ is satisfied by $\tau$. Lemma~\ref{lem:k34} and Observation~\ref{obs:atomic}  imply that we can construct the desired embeddings of $G_1,\ldots , G_n$ on tori.
This also determines a desired spherical embedding of $G_0$, which concludes the proof.
\end{proof}

\section{Thickenability and Connected \textsc{SEFE-2}}
\label{sec:c-planarity}

In this section, we give a polynomial-time reduction of \textsc{atomic embeddability} (Problem~\ref{prob:atomic}), as well as \textsc{Connected sefe-2} (defined below) to \textsc{thickenability} (Problem~\ref{prob:thickenability}).
Recall that an instance of \textsc{atomic embeddability} is given by a simplicial map $\varphi:G \rightarrow H$. The instance is \textbf{positive} if its output answer is \texttt{True}.

\paragraph{Thickenability.}

In the following, we express \textsc{thickenability} as a combinatorial problem, of which \textsc{atomic embaddability} is a proper generalization, and then reduce \textsc{atomic embaddability} to this problem.

Let $P=(H,F)$ denote a 2-polyhedron, where the multigraph $H$ is the 1-skeleton of $P$, and $F$ is the set of facets in $P$, each represented by a cycle in $H$. Let $F=\{f_1,\ldots, f_{|F|}\}$.
A 2-polyhedron $P$ \textbf{embeds} in an orientable 3-manifold $M$ (such as $\mathbb{R}^3$) if the following holds. The multigraph $H$ embeds in $M$ so that the facets $f_1,\ldots, f_{|T|}$ are mapped into pairwise interior disjoint topological discs $D_1,\ldots, D_{|T|}$, resp., in $M$  such that for every $i\in [|F|]$ the boundary of $D_i$, denoted by $\partial D_i$, consists of the  embedded cycle $f_i$. The representation of $P$ in $M$ given by the discs $D_1,\ldots, D_{|F|}$ is an \textbf{embedding} of $P$. The restriction of the embedding of $P$ to the boundaries of these discs gives the  embedding of $H$.

For every $v\in V(H)$, the \textbf{link} of $v$ in $P$ is a multigraph $L_P(v)=(E(v),F(v))$, where the vertex set $E(v)$ is the multiset of edges in $H$ incident to $v$, and the multiset of edges $F(v)$ is in a bijection with the set of facets in $P$ that are incident to $v$ and we give it next. Every pair $\{e,g\}\subset E(v)$ corresponds to an edge in $F(v)$ whose multiplicity equals the number of facets in $F$ that contain both $e$ and $g$.

If $P=(H,F)$ is thickenable, then the intersection of its embedding in a manifold with a sufficiently small 2-sphere centered at (the embedding of) a vertex $v\in V(H)$ is a spherical embedding of the link $L_P(v)$. Indeed, the 2-sphere intersects edges of $H$ incident to $v$ in points, and it intersects facets in $F$ incident to $v$ in Jordan arcs between these points.

Given a polyhedron $P=(H,F)$, the family $\{\mathcal{E}_v : v\in V(H)\}$, where $\mathcal{E}_v$ is a planar (spherical) embedding of $L_P(v)$, is \textbf{compatible} if, for every $e\in E(H)$ joining vertices $u$ and $v$, the rotation at $e$ in $\mathcal{E}_u$ is opposite to the rotation of $e$ in $\mathcal{E}_v$. The observation in the previous paragraph proves the ``only if'' part of the following theorem.

\begin{theorem}[Neuwirth~\cite{N68_thick}] \label{thm:Neuwirth}
The 2-dimensional polyhedron $P=(H,F)$ is thickenable if and only if there exists a family of compatible embeddings of the vertex links of $H$.
\end{theorem}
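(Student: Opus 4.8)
The ``only if'' direction is the one observed just before the statement: given an embedding of $P$ into an oriented $3$-manifold, intersect it with a sufficiently small sphere $S_v$ around the image of each vertex $v$ to obtain a spherical embedding of $L_P(v)$, and note that a thin tube around the image of each edge $e=uv$ forces the rotations at $e$ in the embeddings on $S_u$ and $S_v$ to be opposite. So the plan is to establish the ``if'' direction by an explicit thickening construction: from compatible spherical embeddings $\{\mathcal{E}_v : v\in V(H)\}$ of the links, build an orientable $3$-manifold $M$ and an embedding $P\hookrightarrow M$. This is the classical regular-neighbourhood-of-a-spine argument.

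First I would assemble $M$ as a block complex, one block per cell of $P$. For each vertex $v$, take a $3$-ball $B_v$ whose boundary sphere $S_v$ carries (a fixed realization of) the embedding $\mathcal{E}_v$; thus the vertices of $L_P(v)$, i.e.\ the edges of $H$ at $v$, become points $p_{v,e}\in S_v$, and the edges of $L_P(v)$, i.e.\ the facets $f$ through $v$, become pairwise interior-disjoint arcs $\alpha_{v,f}\subset S_v$, where $\alpha_{v,f}$ joins the two points corresponding to the edges of $f$ incident to $v$. For each edge $e=uv$ of $H$, glue a $1$-block $T_e\cong D^1\times D^2$ to $B_u$ and $B_v$ as a handle, attaching $\{-1\}\times D^2$ and $\{1\}\times D^2$ to small discs of $S_u$ and $S_v$ centred at $p_{u,e}$ and $p_{v,e}$. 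For each facet $f$, glue a $2$-block $Q_f\cong D_f\times D^1$, where $D_f$ is a polygon with one side for each edge of the cycle $f$ and one corner for each vertex of $f$: attach the side $e\times D^1$ of $\partial D_f\times D^1$ to a longitudinal strip on the side annulus of $\partial T_e$, and attach the corner $v\times D^1$ to a disc of $S_v$ running along the arc $\alpha_{v,f}$.

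The heart of the argument is to verify that these gluing data are globally consistent, so that $M$ is a compact orientable $3$-manifold (with boundary). Around a vertex $v$ this uses only that $\mathcal{E}_v$ is an \emph{embedding}: the tube-attaching discs sit at the link-vertices and the facet-attaching rectangles run along the link-edges, and because $L_P(v)$ is embedded on $S_v$ these can be laid out disjointly, with the cyclic order of facets encountered around each $p_{v,e}$ equal to the rotation of $e$ in $\mathcal{E}_v$. Around an edge $e=uv$ the side annulus of $\partial T_e$ receives the longitudinal strips of all facets through $e$, and their cyclic order there must match, read from the $u$-mouth and from the $v$-mouth, a single cyclic order on the annulus; reading both mouths consistently with a fixed orientation of $M$ introduces exactly one reversal, so the condition is precisely that the rotation of $e$ in $\mathcal{E}_u$ is opposite to that in $\mathcal{E}_v$ --- the \textbf{compatibility} hypothesis. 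Fixing orientations of all blocks coherently (so that every handle attachment is orientation-reversing on its attaching region, as usual) makes $M$ orientable. I expect this bookkeeping --- matching the cyclic orders at the two mouths of each tube while tracking orientations --- to be the main technical point; it is routine but must be done with care, and it is exactly here, and only here, that planarity of the links and the ``opposite rotations'' condition enter.

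Finally I would write down the embedding $P\hookrightarrow M$: each vertex $v$ goes to the centre of $B_v$; each edge $e=uv$ goes to the arc obtained by concatenating a radius of $B_u$ to $p_{u,e}$, the core $D^1\times\{0\}$ of $T_e$, and a radius of $B_v$ to $p_{v,e}$; and each facet $f$ goes to the disc $D_f\times\{0\}\subset Q_f$, whose boundary is exactly the image of the cycle $f$ and whose interior is disjoint from the other facet discs because the arcs $\alpha_{v,f}$ are pairwise interior-disjoint on each $S_v$ and the facet blocks were attached around the tubes in a fixed cyclic order. This exhibits $P$ as a spine of $M$, so $P$ embeds in the orientable $3$-manifold $M$; if a closed target manifold is desired, replace $M$ by its double $DM=M\cup_{\partial M}M$, a closed orientable $3$-manifold into which $M$, hence $P$, embeds. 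Together with the ``only if'' direction this proves the theorem.
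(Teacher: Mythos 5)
Your proposal is correct in substance, but note that the paper does not actually prove this theorem: it establishes only the ``only if'' direction, by exactly the small-sphere observation you reproduce (this is the paragraph immediately preceding the statement), and cites Neuwirth for the converse, which is all that the reduction in Section~\ref{sec:c-planarity} needs in the ``if'' direction of Lemma~\ref{lem:main}. Your ``if'' direction---a ball $B_v$ per vertex, a solid tube $T_e\cong D^1\times D^2$ per edge, a block $Q_f\cong D_f\times D^1$ per facet, with the check that the facet strips on each tube's side annulus can be matched, with exactly one orientation reversal between the two mouths, to the rotations read off $\mathcal{E}_u$ and $\mathcal{E}_v$---is the classical thickening/spine construction, i.e.\ essentially Neuwirth's own argument rather than a genuinely different route. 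What your write-up buys is a self-contained justification of the criterion on which the reduction rests; what the paper buys by citing is brevity.

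One point to tighten: as written, your embedded facet is $D_f\times\{0\}$, whose boundary lies on $(\partial D_f)\times\{0\}$, i.e.\ on the boundary surface of $\bigcup_v B_v\cup\bigcup_e T_e$ (along the longitudinal strips and the arcs $\alpha_{v,f}$), whereas your embedded cycle $f$ runs along ball radii and tube cores; so $\partial(D_f\times\{0\})$ is not ``exactly the image of the cycle $f$'' as claimed. The repair is the same device you already used for edges: inside each tube $T_e$ with $e\in f$, adjoin the radial rectangle joining the core $D^1\times\{0\}$ to the attaching strip of $Q_f$, and inside each ball $B_v$ with $v\in f$, adjoin the cone from the centre of $B_v$ over the arc $\alpha_{v,f}$. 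These pieces glue with $D_f\times\{0\}$ to a disc whose boundary is exactly the embedded cycle $f$, and the interiors of distinct facet discs stay disjoint because the arcs $\alpha_{v,f}$ are pairwise interior-disjoint on each $S_v$ and the strips on each tube are parallel. With that adjustment (and your doubling step if a closed target manifold is desired), the argument is complete.
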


 We show that testing the condition of Theorem~\ref{thm:Neuwirth} generalizes \textsc{atomic embeddability} and \textsc{connected sefe-2}.

\paragraph{Reduction.}
For a given instance $\varphi:G\rightarrow H$ of \textsc{atomic embeddibility}, we define a 2-dimensional polyhedron $P(\varphi)$, and then show
that $P(\phi)$ is thickenable if and only if $\varphi$ is a positive instance of \textsc{atomic embeddibility}.
Let $P(\varphi)=(\widehat{H},F)$ denote the 2-dimensional polyhedron, where $\widehat{H}$ and $F$ are defined as follows.

Roughly speaking, $\widehat{H}$ is obtained by doubling the multigraph $H$ and connecting each pair of corresponding vertices in the two copies of $H$ by new multiple edges, where every such edge corresponds to a vertex of $G$.
Hence, the vertex set $V(\widehat{H})$ of $\widehat{H}$ is $\{(\mu,0),(\mu,1): \mu\in V(H)\}$, and we define its edge set as

\vspace{-\baselineskip}
\begin{align*}
E(\widehat{H})=& \{\rho_u=(\mu,0)(\mu,1) : \mu=\varphi(u) \mbox{ \rm for some } u \in V(G)\} \cup\\
\cup & \{(\rho,0)=(\mu,0)(\nu,0), (\rho,1)= (\mu,1)(\nu,1): \ \rho=\nu\mu \in E(H) \}.
\end{align*}
The facets in $F$ are in bijection with the edges in $E(G)$. Formally,

\vspace{-\baselineskip}
\begin{align*}
F=& \{(\rho_u=( \mu,0)(\mu,1), \rho_v=(\mu,1)(\mu,0)) : uv\in E(G) \mbox{ \rm such  that } \varphi(u)=\mu=\varphi(v) \} \cup\\
\cup & \{((\rho,0),\rho_u, (\rho,1), \rho_v):  uv\in E(G) \mbox{ \rm such that } \varphi(u)=\mu, \ \varphi(v)=\nu, \mu\not=\nu, \ \mathrm{and} \ \rho=\varphi(uv) \}.
\end{align*}
Thus, for every edge of $uv\in E(G)$, if $\varphi$ maps the vertices $u$ and $v$ to the same atom (resp., different atoms) of $H$,
then the edge $uv$ corresponds to a facet of $P$ bounded by 2 (resp., 4) edges.

It remains to prove that the polyhedron $P(\varphi)$ has the desired property.

\begin{lemma}
\label{lem:main}
For every instance $\varphi:G\rightarrow H$ of \textsc{atomic embeddability},
the 2-dimensional polyhedron $P(\varphi)$ is thickenable if and only if $\varphi$ is a positive instance.
\end{lemma}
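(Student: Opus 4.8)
# Proof Plan for Lemma~\ref{lem:main}

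The plan is to prove both directions by relating the link graphs $L_{P(\varphi)}(\mu,j)$ of the polyhedron $P(\varphi)$ to the local graphs $G_\varphi(\mu)$ of the atomic instance, and then to invoke Neuwirth's theorem (Theorem~\ref{thm:Neuwirth}) together with Observation~\ref{obs:atomic}. The first step is a purely combinatorial identification: I would compute, for each vertex $(\mu,0)\in V(\widehat H)$, the link $L_{P(\varphi)}(\mu,0)=(E(\mu,0),F(\mu,0))$. The edges of $\widehat H$ incident to $(\mu,0)$ are exactly the edges $\rho_u$ for $u\in V(\varphi^{-1}[\mu])$ (one for each vertex of $G$ mapped to $\mu$) together with the edges $(\rho,0)$ for each pipe $\rho$ incident to $\mu$. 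I claim this vertex set is in natural bijection with $V(G_\varphi(\mu))$: the edges $\rho_u$ correspond to ordinary vertices, and the edges $(\rho,0)$ correspond to virtual vertices. Next, inspecting the facet list $F$: a facet $(\rho_u,\rho_v)$ with $\varphi(u)=\mu=\varphi(v)$ is incident to $(\mu,0)$ and contributes an edge of $F(\mu,0)$ between the link-vertices $\rho_u$ and $\rho_v$ — exactly matching the edge of $G_\varphi(\mu)$ between the ordinary vertices $u$ and $v$. A facet $((\rho,0),\rho_u,(\rho,1),\rho_v)$ with $\varphi(u)=\mu$, $\varphi(v)=\nu\ne\mu$, $\rho=\varphi(uv)$, is incident to $(\mu,0)$ through the two consecutive edges $(\rho,0)$ and $\rho_u$ in its boundary cycle, hence contributes an edge of $F(\mu,0)$ between the virtual link-vertex $(\rho,0)$ and the ordinary link-vertex $\rho_u$ — exactly matching the edge of $G_\varphi(\mu)$ between the virtual vertex corresponding to $\rho$ and the ordinary vertex $u$. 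Thus $L_{P(\varphi)}(\mu,0)\cong G_\varphi(\mu)$ as multigraphs, and symmetrically $L_{P(\varphi)}(\mu,1)\cong G_\varphi(\mu)$.

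The second step is to track the rotations and the compatibility conditions. A family of compatible embeddings of the vertex links of $\widehat H$ must, in particular, fix planar embeddings $\mathcal{E}_{(\mu,0)}$ and $\mathcal{E}_{(\mu,1)}$ of the two copies of $G_\varphi(\mu)$. Consider an edge of $\widehat H$ of the form $\rho_u=(\mu,0)(\mu,1)$: compatibility requires that the rotation at $\rho_u$ in $\mathcal{E}_{(\mu,0)}$ be opposite to the rotation at $\rho_u$ in $\mathcal{E}_{(\mu,1)}$. The rotation at the link-vertex $\rho_u$ records the cyclic order of the facets through $\rho_u$, which (via the bijection above) is precisely the cyclic order of the edges incident to the ordinary vertex $u$ in $G_\varphi(\mu)$. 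Hence this ``internal'' edge $\rho_u$ forces $\mathcal{E}_{(\mu,1)}$ to have, at every ordinary vertex $u$, the reverse rotation of $\mathcal{E}_{(\mu,0)}$; i.e.\ $\mathcal{E}_{(\mu,1)}$ is the mirror image of $\mathcal{E}_{(\mu,0)}$ at all ordinary vertices. Since a planar embedding of a connected graph is determined up to reflection by its rotation system, and $\mathcal{E}_{(\mu,1)}$ and $\mathcal{E}_{(\mu,0)}$ are each planar embeddings of the same graph $G_\varphi(\mu)$, I would argue that $\mathcal{E}_{(\mu,1)}$ is globally the reflection of $\mathcal{E}_{(\mu,0)}$ — in particular their rotations at the virtual link-vertices $(\rho,0)$ and $(\rho,1)$ are also opposite. (If $G_\varphi(\mu)$ is disconnected one applies this component by component, noting that the internal edges $\rho_u$ touch each component; since the instance may be assumed normal, connectivity holds.) Consequently the data of a compatible family for $\widehat H$ is equivalent to: a planar embedding $\mathcal{E}_\mu$ of each $G_\varphi(\mu)$, together with the requirement coming from the edges $(\rho,0)=(\mu,0)(\nu,0)$ (equivalently $(\rho,1)=(\mu,1)(\nu,1)$) of $\widehat H$, namely that for every pipe $\rho=\mu\nu$ the rotation at the virtual vertex corresponding to $\rho$ in $G_\varphi(\mu)$ is opposite to that in $G_\varphi(\nu)$. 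One must check here that the edge-bijection built into the definition of the facets $((\rho,0),\rho_u,(\rho,1),\rho_v)$ makes the notion of ``opposite'' in the link-compatibility sense coincide with the notion of ``opposite'' for virtual vertices from Section~\ref{ssec:prelim}, i.e.\ matching $\overline{e}$ for edges $e$ incident to virtual vertices — this is a routine but careful bookkeeping check using that the facet through $\rho$ corresponding to $uv\in\varphi^{-1}[\rho]$ appears as a single link-edge at both ends.

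Putting the two steps together: by Theorem~\ref{thm:Neuwirth}, $P(\varphi)$ is thickenable iff there is a compatible family of link embeddings for $\widehat H$; by the above, such a family exists iff each $G_\varphi(\mu)$ is planar and the $G_\varphi(\mu)$ admit simultaneous planar embeddings with opposite rotations at virtual vertices corresponding to every pipe; and by Observation~\ref{obs:atomic}, this last condition holds iff $\varphi$ is a positive instance of atomic embeddability. For the forward direction concretely, given an atomic embedding of $G$ one reads off the inherited embeddings $\mathcal{E}_\mu$ of $G_\varphi(\mu)$, sets $\mathcal{E}_{(\mu,0)}=\mathcal{E}_\mu$ and $\mathcal{E}_{(\mu,1)}$ its reflection, and verifies compatibility on all three types of edges of $\widehat H$; for the converse, one inverts this. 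I expect the main obstacle to be the middle step — verifying that the internal edges $\rho_u$ genuinely force $\mathcal{E}_{(\mu,1)}$ to be the global reflection of $\mathcal{E}_{(\mu,0)}$ (rather than merely reversing rotations at ordinary vertices, which for a disconnected or poorly-connected local graph is a strictly weaker statement), and in simultaneously checking that the edge-bijections in the facet definitions align the two distinct ``opposite rotation'' conventions. Both are handled by the normality assumption on $\varphi$ (so each $G_\varphi(\mu)$ is connected) together with the Mac~Lane-type fact that a connected plane graph's embedding is determined up to reflection by its rotation system.
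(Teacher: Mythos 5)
Your overall route is the same as the paper's: the paper's proof also rests on the observation that the link of $(\mu,0)$ in $P(\varphi)$ is a copy of the local graph $G_\varphi(\mu)$ and that Neuwirth's compatibility condition, read through this identification, becomes exactly the rotation condition of Observation~\ref{obs:atomic}; the paper just phrases the ``only if'' direction as an explicit surface surgery (drilling a hole in $S_{(\mu,0)}$ around each link-vertex $(\rho,0)$ and gluing the boundary circles of corresponding holes by the homeomorphism supplied by compatibility, which reconstructs the thickening $\mathcal{H}$), and obtains the $\mathcal{E}_{(\mu,1)}$ in the converse direction by taking mirror images, exactly as you do. Your step~1 identification of $L_{P(\varphi)}(\mu,0)$ with $G_\varphi(\mu)$ and of the facet through $uv\in\varphi^{-1}[\rho]$ with a single link-edge at each end is correct and is the real content.

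One warning about your step~2: the claim that compatibility along the internal edges $\rho_u$ forces $\mathcal{E}_{(\mu,1)}$ to be the \emph{global} reflection of $\mathcal{E}_{(\mu,0)}$ is false in general and, fortunately, not needed. Reversing the rotations at all ordinary vertices does not pin down the rotations at the virtual vertices: if $G_\varphi(\mu)$ is, say, a p-star centered at a virtual vertex with ordinary leaves, the leaf rotations are trivial and the center's rotation is unconstrained by the internal edges, so $\mathcal{E}_{(\mu,1)}$ need not be a reflection of $\mathcal{E}_{(\mu,0)}$. The fix is to drop the forcing argument entirely. For ``thickenable $\Rightarrow$ positive'', restrict the compatible family to the $0$-copies and use only the compatibility constraints on the edges $(\rho,0)=(\mu,0)(\nu,0)$: these directly give opposite rotations at the virtual vertices of $G_\varphi(\mu)$ and $G_\varphi(\nu)$ corresponding to each pipe $\rho$, which together with planarity of each link is exactly the hypothesis of Observation~\ref{obs:atomic}; the $1$-copies and the internal edges are simply discarded. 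For ``positive $\Rightarrow$ thickenable'', set $\mathcal{E}_{(\mu,0)}$ to be the inherited embedding and \emph{define} $\mathcal{E}_{(\mu,1)}$ as its mirror image, then verify compatibility on all three edge types (the internal edges hold because mirroring reverses every rotation; the $(\rho,1)$ edges hold because reflections of opposite rotations are opposite). With that repair your argument is complete.
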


\begin{proof}
In order to prove the ``only if'' part, we assume that the polyhedron $P=P(\varphi)=(\widehat{H},F)$ is thickenable. By Theorem~\ref{thm:Neuwirth}, there exists a family of compatible spherical embeddings $\{\mathcal{E}_v : v\in V(\widehat{H})\}$ of the links $\{L_v(P): v\in V(\widehat{H})\}$. Let $\{S_v: v\in V(\widehat{H})\}$ be a family of pairwise disjoint 2-spheres,
and let $\mathcal{E}_v: L_v(P)\rightarrow S_v$ be compatible embeddings for all $v\in V(\widehat{H})$.
We construct an atomic embedding of $G$ on $\mathcal{H}$ with respect to $\varphi$ in two steps as follows.

First, for every $(\mu,0)\in V(\widehat{H})$, we drill a small hole on $S_{(\mu,0)}$ around (the embeedding of) every vertex of $L_{(\mu,0)}$ of the form $(\mu,0)(\nu,0)$. Note that for every such vertex of $V(\widehat{H})$, we have $\mu\nu\in E(H)$ by construction. Let $f_1,\ldots, f_{\deg((\rho,0))}$ denote the edges that are incident to $(\rho,0)=(\mu,0)(\nu,0)$ in $L_{(\mu,0)}(P)$, for some $\nu\mu=\rho\in E(H)$. Formally, we remove from the 2-sphere $S_{(\mu,0)}$ a small open disc $D_{(\mu,0)}((\rho,0))$,
thereby shortening edges $f_1,\ldots, f_{\deg((\rho,0))}$ incident to $(\rho,0)$ in $\mathcal{E}_{(\mu,0)}$ into Jordan arcs $a_{(\rho,0)}(f_1),\ldots,a_{(\rho,0)}(f_{\deg((\rho,0))})$, resp., ending on  $\partial \overline{D}_{(\mu,0)}((\rho,0))$,
which is the boundary of the closure of ${D}_{(\mu,0)}((\rho,0))$.
Let $S_{(\mu,0)}'$ denote the resulting 2-sphere with the removed disc(s) for all $(\mu,0)\in V(\widehat{H})$.

Second, for every edge $\rho=\mu\nu$, we identify the curves $\partial \overline{D}_{(\mu,0)}((\rho,0))$ with $\partial \overline{D}_{(\nu,0)}((\rho,0))$ via a homeomorphism that identifies the endpoints of $a_{(\rho,0)}(f)$ and $a_{(\rho,0))}(f)$ for every $f\ni (\rho,0)$. A desired homeomorphism exists since the embeddings $\mathcal{E}_{(\mu,0)}$ are compatible. Let $\mathcal{{H}}$ denote the surface obtained by the previous identifications, where each $S_{(\mu,0)}$ is interpreted at $S_\mu$. Note that we have just constructed a desired atomic embedding. Indeed, a vertex $u\in V(G)$ is embedded by  $\mathcal{E}_{(\varphi(u),0)}$ as $\rho_u$ on $S_{(\varphi(u),0)}'$;
an edge $uv\in E(G)$ where $\varphi(u)=\varphi(v)$ is embedded by $\mathcal{E}_{(\varphi(v),0)}$ as $\rho_u\rho_v$; and
an edge $uv\in E(G)$ where $\varphi(u)\neq \varphi(v)$ is embedded as the union of $a_{(\varphi(uv),0)}(f)$ on $S_{(\mu,0)}'$ and on $S_{(\nu,0)}'$, where $\varphi(uv)=\nu\mu$, and $f$ corresponds to $uv$ as described in the definition of $F$.
This completes the proof of the ``only if'' part.

For the converse, assume that we are given an atomic embedding of $G$ on $\mathcal{H}$. Clearly, the previous construction  can be reversed to construct a subfamily of compatible spherical embeddings $\mathcal{E}_{(\mu,0)}$ of the links $L_{(\mu,0)}(P)$, for all $(\mu,0)\in V(\widehat{H})$. By taking the mirror image of this construction, we obtain $\mathcal{E}_{(\mu,1)}$, for  all $(\mu,1)\in V(\widehat{H})$.
The union of $\mathcal{E}_{(\mu,0)}$ and $\mathcal{E}_{(\mu,1)}$, for all $\mu\in V(H)$, gives the desired family of compatible spherical embeddings. By Theorem~\ref{thm:Neuwirth}, $P(\varphi)$ is thickenable.
\end{proof}

An immediate consequence of Lemma~\ref{lem:main} is the main result of this section.

\begin{theorem}
\label{thm:main}
\textsc{Atomic Embeddability} reduces to \textsc{thickenability} in polynomial time.
\end{theorem}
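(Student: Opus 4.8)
The plan is to derive the theorem directly from Lemma~\ref{lem:main}: the map $\varphi\mapsto P(\varphi)$ described above will be the reduction, and Lemma~\ref{lem:main} already establishes that $P(\varphi)$ is thickenable if and only if $\varphi$ is a positive instance. Hence the only thing left to do is to confirm that $\varphi\mapsto P(\varphi)$ is a bona fide polynomial-time reduction. First I would verify that $P(\varphi)=(\widehat{H},F)$ is a legitimate instance of \textsc{thickenability}: because $H$ has no loops, each of the edges $\rho_u$, $(\rho,0)$, and $(\rho,1)$ joins two \emph{distinct} vertices of $\widehat{H}$, so $\widehat{H}$ is a loopless multigraph; each element of $F$ is a cycle of $\widehat{H}$ --- a $2$-cycle $\rho_u\rho_v$ when $\varphi(u)=\varphi(v)$, and a $4$-cycle $(\rho,0)\,\rho_u\,(\rho,1)\,\rho_v$ otherwise --- and every edge of $\widehat{H}$ lies in at least one facet. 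The last condition requires the (harmless) assumptions that $G$ has no isolated vertex and that every pipe of $H$ carries at least one edge of $G$: then an edge $\rho_u$ with $\varphi(u)=\mu$ lies on the facet associated with any $G$-edge incident to $u$, and an edge $(\rho,i)$ with $\rho=\varphi(uv)$ lies on the facet associated with $uv$. Both assumptions can be enforced in linear time without changing positivity --- an isolated vertex of $G$ can be embedded anywhere on its sphere, and an empty pipe can be removed by \operation{Suppress($\cdot$)} --- so I would first apply this trivial preprocessing.

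Next I would bound the size of the output. One has $|V(\widehat{H})|=2|V(H)|$, $|E(\widehat{H})|=|V(G)|+2|E(H)|$, and $|F|=|E(G)|$ with each facet bounded by at most four edges; thus $P(\varphi)$ has size linear in the size of $\varphi$. Moreover $P(\varphi)$ can be written down in linear time by a single scan of $V(G)$, $E(G)$, $V(H)$, and $E(H)$, using the explicit pointers that encode $\varphi$ in the input.

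Putting these together with Lemma~\ref{lem:main} (namely that $P(\varphi)$ is thickenable if and only if $\varphi$ is a positive instance of \textsc{atomic embeddability}) exhibits a linear-time, hence polynomial-time, many-one reduction from \textsc{atomic embeddability} to \textsc{thickenability}, proving the theorem. I anticipate no genuine obstacle here; the only point demanding a little care is checking that $P(\varphi)$ respects the convention that every edge of the $1$-skeleton lies on some facet, which the preprocessing step above handles.
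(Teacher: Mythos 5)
Your proposal is correct and follows essentially the same route as the paper: the reduction is exactly the map $\varphi\mapsto P(\varphi)$, and the theorem is an immediate consequence of Lemma~\ref{lem:main} together with the (evident) polynomial size and construction time of $P(\varphi)$. Your extra check that every edge of $\widehat{H}$ lies on a facet (handled by the harmless preprocessing of isolated vertices and empty pipes) is a reasonable point of care that the paper leaves implicit.
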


\paragraph{Simultaneous embedding with fixed edges (\textsc{SEFE-2}).}
In the following we discuss an implication of Theorem~\ref{thm:main} to the problem of simultaneous embeddability of two graphs \textsc{sefe-2}, which is formally described as follows.

\begin{problem}\label{prob:sefe2}
\textsc{sefe-2.}
Given two (planar) graphs,  $G_1=(V,E_1)$ and $G_2=(V,E_2)$,
decide whether there exists a planar embedding $\mathcal{E}$ of $G=G_1 \cup G_2$
such that both $\mathcal{E}[G_1]$ and $\mathcal{E}[G_2]$ are embeddings.
\end{problem}

The \textsc{Connected sefe-2} is a special case of \textsc{sefe-2} in which $G_1 \cap G_2$ is connected.
Angelini and Da Lozzo~\cite{AngeliniDL16} showed that  \textsc{connected sefe-2} is polynomial-time equivalent to \textsc{c-planarity}. Since \textsc{c-planarity} is a special case of \textsc{Atomic Embeddability},  together with Theorem~\ref{thm:main}, this immediately implies the following.

\begin{corollary}
\label{cor:sefe-2}
\textsc{Connected sefe-2} reduces in polynomial time to \textsc{thickenability}.
\end{corollary}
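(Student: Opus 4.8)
The plan is to chain together two results that are already available to us: the equivalence between \textsc{connected sefe-2} and \textsc{c-planarity} due to Angelini and Da Lozzo~\cite{AngeliniDL16}, and Theorem~\ref{thm:main}, which reduces \textsc{atomic embeddability} to \textsc{thickenability} in polynomial time. Since \textsc{c-planarity} (Problem~\ref{prob:cplanarity}) is a special case of \textsc{atomic embeddability} (Problem~\ref{prob:atomic}), any instance of \textsc{c-planarity} is already an instance of \textsc{atomic embeddability}; composing this inclusion with the reduction from \textsc{atomic embeddability} to \textsc{thickenability} gives a polynomial-time reduction from \textsc{c-planarity} to \textsc{thickenability}. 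Prepending the Angelini--Da Lozzo reduction from \textsc{connected sefe-2} to \textsc{c-planarity} then yields the claimed polynomial-time reduction.

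Concretely, I would proceed as follows. First, recall that \textsc{connected sefe-2} reduces in polynomial time to \textsc{c-planarity} by~\cite{AngeliniDL16}; call the resulting \textsc{c-planarity} instance $I_1$. Second, observe that an instance of \textsc{c-planarity} given by a graph $G$, a collection $\mathcal{C}$ of disjoint simple closed curves in $\mathbb{R}^2$, and a map $\mu$, can be interpreted as an instance $\varphi: G \to H$ of \textsc{atomic embeddability}: the atoms of $H$ are the regions in $\mathcal{R}$, the pipes of $H$ record adjacency of regions across the curves of $\mathcal{C}$, and $\varphi$ sends each vertex $v$ to the atom $\mu(v)$ and each edge to the appropriate pipe (or atom, if both endpoints lie in the same region); here the thickening $\mathcal{H}$ is homeomorphic to a $2$-sphere, reflecting the inherently planar nature of \textsc{c-planarity}. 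This identification is computable in polynomial time and preserves the yes/no answer. Third, apply Theorem~\ref{thm:main} to the resulting \textsc{atomic embeddability} instance to obtain, in polynomial time, an equivalent instance of \textsc{thickenability}. The composition of these three polynomial-time reductions is a polynomial-time reduction from \textsc{connected sefe-2} to \textsc{thickenability}.

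There is essentially no hard step here: the statement is a transitivity argument assembling Theorem~\ref{thm:main}, the containment of \textsc{c-planarity} in \textsc{atomic embeddability} already noted in the introduction, and the external result of~\cite{AngeliniDL16}. The only point requiring mild care is to make explicit that \textsc{c-planarity} is genuinely a specialization of \textsc{atomic embeddability} rather than merely analogous to it, so that no additional reduction is needed for that link of the chain; this is routine from the definitions. Hence the corollary follows immediately from Theorem~\ref{thm:main} together with the equivalence of \textsc{connected sefe-2} and \textsc{c-planarity}, as stated.
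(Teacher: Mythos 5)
Your argument is correct and matches the paper's own proof: both chain the Angelini--Da Lozzo equivalence of \textsc{connected sefe-2} and \textsc{c-planarity}, the fact that \textsc{c-planarity} is a special case of \textsc{atomic embeddability}, and Theorem~\ref{thm:main}. Your explicit description of how a \textsc{c-planarity} instance becomes an atomic embeddability instance (regions as atoms, curves as pipes, spherical thickening) is a harmless elaboration of what the paper leaves implicit.
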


\section{Acknowledgment}

The first author would like to thank Arnaud de Mesmay for pointing him to Johannes Carmesin's work.
Thanks are extended to Johannes Carmesin for kindly discussing this work with him, and to Jan Kyn\v{c}l for careful reading of the manuscript.

\bibliographystyle{plainurl}
\bibliography{references}

\begin{thebibliography}{10}

\bibitem{AAET17}
Hugo~A. Akitaya, Greg Aloupis, Jeff Erickson, and Csaba~D. T{\'{o}}th.
\newblock Recognizing weakly simple polygons.
\newblock {\em Discrete {\&} Computational Geometry}, 58(4):785--821, 2017.
\newblock \href {http://dx.doi.org/10.1007/s00454-017-9918-3}
  {\path{doi:10.1007/s00454-017-9918-3}}.

\bibitem{AFT19+_weak}
Hugo~A. Akitaya, Radoslav Fulek, and Csaba~D. T{\'{o}}th.
\newblock Recognizing weak embeddings of graphs.
\newblock {\em ACM Trans. Algorithms}, 15(4):50:1--50:27, 2019.
\newblock \href {http://dx.doi.org/10.1145/3344549}
  {\path{doi:10.1145/3344549}}.

\bibitem{AdBFPR12_simul}
Patrizio Angelini, Giuseppe~Di Battista, Fabrizio Frati, Maurizio Patrignani,
  and Ignaz Rutter.
\newblock Testing the simultaneous embeddability of two graphs whose
  intersection is a biconnected or a connected graph.
\newblock {\em J. Discrete Algorithms}, 14:150--172, 2012.
\newblock \href {http://dx.doi.org/doi.org/10.1016/j.jda.2011.12.015}
  {\path{doi:doi.org/10.1016/j.jda.2011.12.015}}.

\bibitem{AngeliniDL16}
Patrizio Angelini and Giordano {Da Lozzo}.
\newblock {SEFE} = {C}-{P}lanarity?
\newblock {\em Comput. J.}, 59(12):1831--1838, 2016.
\newblock \href {http://dx.doi.org/10.1093/comjnl/bxw035}
  {\path{doi:10.1093/comjnl/bxw035}}.

\bibitem{AngeliniL19}
Patrizio Angelini and Giordano {Da Lozzo}.
\newblock Clustered planarity with pipes.
\newblock {\em Algorithmica}, 81(6):2484--2526, 2019.
\newblock \href {http://dx.doi.org/10.1007/s00453-018-00541-w}
  {\path{doi:10.1007/s00453-018-00541-w}}.

\bibitem{B83_self}
S.~B. Belyi.
\newblock Self-nonintersecting and non intersecting chains.
\newblock {\em Mathematical notes of the Academy of Sciences of the USSR},
  34(4):802--804, 1983.
\newblock Translated from \emph{Matematicheskie Zametki}, 34(4):625--628, 1983.
\newblock \href {http://dx.doi.org/10.1007/BF01157400}
  {\path{doi:10.1007/BF01157400}}.

\bibitem{BKR18_simul}
Thomas Bl{\"a}sius, Annette Karrer, and Ignaz Rutter.
\newblock Simultaneous embedding: {E}dge orderings, relative positions,
  cutvertices.
\newblock {\em Algorithmica}, 80(4):1214--1277, 2018.
\newblock \href {http://dx.doi.org/10.1007/s00453-017-0301-9}
  {\path{doi:10.1007/s00453-017-0301-9}}.

\bibitem{BKR13_simul}
Thomas Bl\"{a}sius, Stephen~G. Kobourov, and Ignaz Rutter.
\newblock Simultaneous embedding of planar graphs.
\newblock In Roberto Tamassia, editor, {\em Handbook of Graph Drawing and
  Visualization}. Chapman and Hall/CRC, 2013.

\bibitem{BR16_simul}
Thomas Bl\"asius and Ignaz Rutter.
\newblock Simultaneous {PQ}-ordering with applications to constrained embedding
  problems.
\newblock {\em ACM Trans. Algorithms}, 12(2):article~16, 2016.
\newblock \href {http://dx.doi.org/10.1145/2738054}
  {\path{doi:10.1145/2738054}}.

\bibitem{BCD+07+_sefe}
Peter Brass, Eowyn Cenek, Cristian~A. Duncan, Alon Efrat, Cesim Erten, Dan~P.
  Ismailescu, Stephen~G. Kobourov, Anna Lubiw, and Joseph~S.B. Mitchell.
\newblock On simultaneous planar graph embeddings.
\newblock {\em Computational Geometry}, 36(2):117--130, 2007.
\newblock \href {http://dx.doi.org/doi.org/10.1016/j.comgeo.2006.05.006}
  {\path{doi:doi.org/10.1016/j.comgeo.2006.05.006}}.

\bibitem{BMW15}
Benjamin~A. Burton, Arnaud de~Mesmay, and Uli Wagner.
\newblock Embeddability of 2-complexes.
\newblock In Benjamin Burton, Herbert Edelsbrunner, Jeff Erickson, and Stephan
  Tillmann, editors, {\em Computational Geometric and Algebraic Topology},
  Oberwolfach Reports, chapter~45, pages 2666--2668. Mathematisches
  Forschungsinstitut Oberwolfach, 2015.
\newblock \href {http://dx.doi.org/10.14760/OWR-2015-45}
  {\path{doi:10.14760/OWR-2015-45}}.

\bibitem{BdMW17_nonOrient}
Benjamin~A Burton, Arnaud de~Mesmay, and Uli Wagner.
\newblock Finding non-orientable surfaces in 3-manifolds.
\newblock {\em Discrete \& Computational Geometry}, 58(4):871--888, 2017.
\newblock \href {http://dx.doi.org/10.1007/s00454-017-9900-0}
  {\path{doi:10.1007/s00454-017-9900-0}}.

\bibitem{C17_embed2}
Johannes Carmesin.
\newblock Embedding simply connected 2-complexes in 3-space -- {I}{I}. rotation
  systems.
\newblock Preprint, \href{https://arxiv.org/abs/1709.04643}{arXiv:1709.04643},
  2017.

\bibitem{C17_embed}
Johannes Carmesin.
\newblock Embedding simply connected 2-complexes in 3-space -- {V}. {A} refined
  {K}uratowski-type characterisation.
\newblock Preprint, \href{https://arxiv.org/abs/1709.04659}{arXiv:1709.04659},
  2017.

\bibitem{ChEX15}
Hsien-Chih Chang, Jeff Erickson, and Chao Xu.
\newblock Detecting weakly simple polygons.
\newblock In {\em Proc. 26th {ACM}-{SIAM} Symposium on Discrete Algorithms
  ({SODA})}, pages 1655--1670, 2015.
\newblock \href {http://dx.doi.org/10.1137/1.9781611973730.110}
  {\path{doi:10.1137/1.9781611973730.110}}.

\bibitem{CB05_invited}
Pier~Francesco Cortese and Giuseppe {Di Battista}.
\newblock Clustered planarity (invited lecture).
\newblock In {\em Proc. 21st Symposium on Computational Geometry ({SoCG})},
  pages 30--32. ACM Press, 2005.
\newblock \href {http://dx.doi.org/10.1145/1064092.1064093}
  {\path{doi:10.1145/1064092.1064093}}.

\bibitem{CdBFPP08_CplanConnected}
Pier~Francesco Cortese, Giuseppe {Di Battista}, Fabrizio Frati, Maurizio
  Patrignani, and Maurizio Pizzonia.
\newblock C-planarity of c-connected clustered graphs.
\newblock {\em J. Graph Algorithms Appl.}, 12(2):225--262, 2008.
\newblock \href {http://dx.doi.org/10.7155/jgaa.00165}
  {\path{doi:10.7155/jgaa.00165}}.

\bibitem{CdBPP09}
Pier~Francesco Cortese, Giuseppe Di~Battista, Maurizio Patrignani, and Maurizio
  Pizzonia.
\newblock On embedding a cycle in a plane graph.
\newblock {\em Discrete Mathematics}, 309(7):1856--1869, 2009.
\newblock \href {http://dx.doi.org/10.1016/j.disc.2007.12.090}
  {\path{doi:10.1016/j.disc.2007.12.090}}.

\bibitem{dMRST18_embed3D}
Arnaud de~Mesmay, Yo'av Rieck, Eric Sedgwick, and Martin Tancer.
\newblock Embeddability in $\mathbb{R}^3$ is {NP}-hard.
\newblock In {\em Proc. 29th {ACM-SIAM} {S}ymposium on {D}iscrete {A}lgorithms
  ({SODA})}, pages 1316--1329, 2018.
\newblock \href {http://dx.doi.org/10.1137/1.9781611975031.86}
  {\path{doi:10.1137/1.9781611975031.86}}.

\bibitem{DT89_spqr}
Giuseppe {Di Battista} and Roberto Tamassia.
\newblock Incremental planarity testing.
\newblock In {\em 30th IEEE Symposium on Foundations of Computer Science
  {(FOCS)}}, pages 436--441, 1989.
\newblock \href {http://dx.doi.org/10.1109/SFCS.1989.63515}
  {\path{doi:10.1109/SFCS.1989.63515}}.

\bibitem{FCE95a_how}
Qing-Wen Feng, Robert~F. Cohen, and Peter Eades.
\newblock How to draw a planar clustered graph.
\newblock In {\em Proc. 1st Conference on Computing and Combinatorics
  ({COCOON})}, volume 959 of {\em LNCS}, pages 21--30. Springer, 1995.
\newblock \href {http://dx.doi.org/10.1007/BFb0030816}
  {\path{doi:10.1007/BFb0030816}}.

\bibitem{FCE95b_planarity}
Qing-Wen Feng, Robert~F. Cohen, and Peter Eades.
\newblock Planarity for clustered graphs.
\newblock In {\em Proc. 3rd European Symposium on Algorithms ({ESA})}, volume
  979 of {\em LNCS}, pages 213--226. Springer, 1995.
\newblock \href {http://dx.doi.org/https://doi.org/10.1007/3-540-60313-1_145}
  {\path{doi:https://doi.org/10.1007/3-540-60313-1_145}}.

\bibitem{FK17_counter}
Radoslav Fulek and Jan Kyn{\v{c}}l.
\newblock Counterexample to an extension of the {H}anani--{T}utte theorem on
  the surface of genus 4.
\newblock Preprint, \href{https://arxiv.org/abs/1709.00508}{arXiv:1709.00508},
  2017.

\bibitem{FK17+_ht}
Radoslav Fulek and Jan Kyn\v{c}l.
\newblock {H}anani--{T}utte for approximating maps of graphs.
\newblock Preprint, \href{https://arxiv.org/abs/1705.05243}{arXiv:1705.05243},
  2017.

\bibitem{FK17_htsocg}
Radoslav Fulek and Jan Kyn\v{c}l.
\newblock Hanani--{T}utte for approximating maps of graphs.
\newblock In {\em Proc. 34th Symposium on Computational Geometry ({SoCG})},
  volume~99 of {\em LIPIcs}, pages 39:1--39:15. Schloss Dagstuhl, 2018.
\newblock \href {http://dx.doi.org/10.4230/LIPIcs.SoCG.2018.39}
  {\path{doi:10.4230/LIPIcs.SoCG.2018.39}}.

\bibitem{FKMP15}
Radoslav Fulek, Jan Kyn\v{c}l, Igor Malinovi\'{c}, and
  D{\"{o}}m{\"{o}}t{\"{o}}r P{\'{a}}lv{\"{o}}lgyi.
\newblock Clustered planarity testing revisited.
\newblock {\em Electron. J. Combin.}, 22(4):article~P4.24, 2015.
\newblock URL:
  \url{https://www.combinatorics.org/ojs/index.php/eljc/article/view/v22i4p24}.

\bibitem{GUT02_2cl}
Carsten Gutwenger, Michael J\"unger, Sebastian Leipert, Petra Mutzel, Merijam
  Percan, and Ren\'e Weiskircher.
\newblock Advances in c-planarity testing of clustered graphs.
\newblock In {\em Proc. 10th Symposium on Graph Drawing}, volume 2528 of {\em
  LNCS}, pages 220--236. Springer, 2002.
\newblock \href {http://dx.doi.org/10.1007/3-540-36151-0_21}
  {\path{doi:10.1007/3-540-36151-0_21}}.

\bibitem{HRL13_simul}
{Bernhard} {Haeupler}, {Krishnam Raju} {Jampani}, and {Anna} {Lubiw}.
\newblock Testing simultaneous planarity when the common graph is 2-connected.
\newblock {\em J. Graph Algorithms Appl.}, 17(3):147--171, 2013.
\newblock \href {http://dx.doi.org/10.7155/jgaa.00289}
  {\path{doi:10.7155/jgaa.00289}}.

\bibitem{H05_alg}
Allen Hatcher.
\newblock {\em Algebraic Topology}.
\newblock Cambridge University Press, 2005.

\bibitem{HoTa74_planarity}
John~E. Hopcroft and Robert~E. Tarjan.
\newblock Efficient planarity testing.
\newblock {\em J. {ACM}}, 21(4):549--568, 1974.
\newblock \href {http://dx.doi.org/10.1145/321850.321852}
  {\path{doi:10.1145/321850.321852}}.

\bibitem{L89_cplanarity}
Thomas Lengauer.
\newblock Hierarchical planarity testing algorithms.
\newblock {\em J. ACM}, 36(3):474--509, 1989.
\newblock \href {http://dx.doi.org/10.1145/65950.65952}
  {\path{doi:10.1145/65950.65952}}.

\bibitem{M37_struc}
Saunders {Mac~Lane}.
\newblock A structural characterization of planar combinatorial graphs.
\newblock {\em Duke Math. J.}, 3(3):460--472, 1937.
\newblock \href {http://dx.doi.org/10.1215/S0012-7094-37-00336-3}
  {\path{doi:10.1215/S0012-7094-37-00336-3}}.

\bibitem{MSTW18_3D}
Ji{\v{r}}{\'{\i}} Matou{\v{s}}ek, Eric Sedgwick, Martin Tancer, and Uli Wagner.
\newblock Embeddability in the 3-sphere is decidable.
\newblock {\em J.~{ACM}}, 65(1):5:1--5:49, 2018.
\newblock \href {http://dx.doi.org/10.1145/3078632}
  {\path{doi:10.1145/3078632}}.

\bibitem{MP78_primitive}
William~H. Meeks and Julie Patrusky.
\newblock Representing homology classes by embedded circles on a compact
  surface.
\newblock {\em Illinois Journal of Mathematics}, 22(2):262--269, 1978.
\newblock \href {http://dx.doi.org/10.1215/ijm/1256048735}
  {\path{doi:10.1215/ijm/1256048735}}.

\bibitem{dMKT19_per}
Arnaud~De Mesmay, Vojt\v{e}ch Kalu\v{z}a, and Martin Tancer, 2019.
\newblock Personal communication.

\bibitem{M76_primitive}
Mark~D. Meyerson.
\newblock Representing homology classes of closed orientable surfaces.
\newblock {\em Proceedings of the American Mathematical Society},
  61(1):181--182, 1976.
\newblock \href {http://dx.doi.org/10.1090/S0002-9939-1976-0425967-3}
  {\path{doi:10.1090/S0002-9939-1976-0425967-3}}.

\bibitem{M94_derivative}
Piotr Minc.
\newblock On simplicial maps and chainable continua.
\newblock {\em Topology and its Applications}, 57(1):1--21, 1994.
\newblock \href {http://dx.doi.org/10.1016/0166-8641(94)90028-0}
  {\path{doi:10.1016/0166-8641(94)90028-0}}.

\bibitem{M07_ricci}
John Morgan and Gang Tian.
\newblock {\em Ricci Flow and the Poincar{\'e} Conjecture}, volume~3 of {\em
  Clay Mathematics Monographs}.
\newblock AMS, Providence, 2007.
\newblock URL: \url{https://books.google.com/books?id=8FctN7U85-QC}.

\bibitem{N68_thick}
Lee Neuwirth.
\newblock An algorithm for the construction of 3-manifolds from 2-complexes.
\newblock {\em Mathematical Proceedings of the Cambridge Philosophical
  Society}, 64(3), 1968.
\newblock \href {http://dx.doi.org/10.1017/S0305004100043279}
  {\path{doi:10.1017/S0305004100043279}}.

\bibitem{P02_entropy}
Grisha Perelman.
\newblock The entropy formula for the {R}icci flow and its geometric
  applications.
\newblock Preprint,
  \href{https://arxiv.org/abs/math/0211159}{arXiv:math/0211159}, 2002.

\bibitem{P02_finite}
Grisha Perelman.
\newblock Finite extinction time for the solutions to the {R}icci flow on
  certain three-manifolds.
\newblock Preprint,
  \href{https://arxiv.org/abs/math/0307245}{arXiv:math/0307245}, 2003.

\bibitem{P02_ricci}
Grisha Perelman.
\newblock Ricci flow with surgery on three-manifolds.
\newblock Preprint,
  \href{https://arxiv.org/abs/math/0303109}{arXiv:math/0303109}, 2003.

\bibitem{ReSk98_deleted}
Du\v{s}an Repov\v{s} and Arkadij~B. Skopenkov.
\newblock A deleted product criterion for approximability of maps by
  embeddings.
\newblock {\em Topology Appl.}, 87(1):1--19, 1998.
\newblock \href {http://dx.doi.org/10.1016/S0166-8641(97)00121-1}
  {\path{doi:10.1016/S0166-8641(97)00121-1}}.

\bibitem{Sch13b_towards}
Marcus Schaefer.
\newblock Toward a theory of planarity: {H}anani-{T}utte and planarity
  variants.
\newblock {\em J. Graph Algorithms Appl.}, 17(4):367--440, 2013.
\newblock \href {http://dx.doi.org/10.7155/jgaa.00298}
  {\path{doi:10.7155/jgaa.00298}}.

\bibitem{S69_realization}
Karol Sieklucki.
\newblock Realization of mappings.
\newblock {\em Fundamenta Mathematicae}, 65(3):325--343, 1969.
\newblock \href {http://dx.doi.org/10.4064/fm-65-3-325-343}
  {\path{doi:10.4064/fm-65-3-325-343}}.

\bibitem{Skop94_thick}
Arkadiy~B. Skopenkov.
\newblock A generalization of {N}euwirth's theorem on thickening 2-dimensional
  polyhedra.
\newblock {\em Mathematical Notes}, 58(5):1244--1247, 1995.
\newblock \href {http://dx.doi.org/10.1007/BF02305012}
  {\path{doi:10.1007/BF02305012}}.

\bibitem{Sko03_approximability}
Mikhail Skopenkov.
\newblock On approximability by embeddings of cycles in the plane.
\newblock {\em Topology Appl.}, 134(1):1--22, 2003.
\newblock \href {http://dx.doi.org/10.1016/S0166-8641(03)00069-5}
  {\path{doi:10.1016/S0166-8641(03)00069-5}}.

\bibitem{S93_top}
John Stillwell.
\newblock {\em Classical Topology and Combinatorial Group Theory}, volume~72 of
  {\em Graduate Texts in Mathematics}.
\newblock Springer, New York, 1993.
\newblock \href {http://dx.doi.org/10.1007/978-1-4612-4372-4}
  {\path{doi:10.1007/978-1-4612-4372-4}}.

\end{thebibliography}

\end{document}